\newenvironment{mpmatrix}{\begin{medsize}\begin{bmatrix}}%
		{\end{bmatrix}\end{medsize}}%
\begin{document}
\title{Secure Coded Multi-Party Computation for Massive Matrix Operations
\footnote{This is an extended version of the paper, partially presented in IEEE Communication Theory Workshop (CTW), May 2018, and IEEE International Symposium on Information Theory (ISIT), June 2018~\cite{Multi-Party}.
}}

\author{Hanzaleh~Akbari~Nodehi$^{*}$, Mohammad~Ali~Maddah-Ali$^{\dagger}$%
	\\{$^{*}$Department of Electrical Engineering, Sharif University of Technology}%
	\\{$^{\dagger}$Nokia Bell Labs}%
}

\maketitle

\begin{abstract} 
In this paper, we consider a secure multi-party computation problem (MPC), where the goal is to offload the computation of an arbitrary polynomial function of some massive private matrices (inputs) to a cluster of workers. The workers are not reliable. Some of them may collude to gain information about the input data (semi-honest workers). 
The system is initialized by sharing a (randomized) function of each input matrix to each server.  Since the input matrices are massive, each share's size is assumed to be at most $1/k$ fraction of the input matrix, for some $k \in \mathbb{N}$.   The objective is to minimize the number of workers needed to perform the computation task correctly, such that even if an arbitrary subset of $t-1$ workers, for some $t \in \mathbb{N}$, collude, they cannot gain any information about the input matrices. We propose a sharing scheme, called \emph{polynomial sharing}, and show that it admits basic operations such as adding and multiplication of matrices and transposing a matrix. By concatenating the procedures for basic operations, we show that any polynomial function of the input matrices can be calculated,  subject to the problem constraints. We show that the proposed scheme can offer order-wise gain in terms of the number of workers needed, compared to the approaches formed by the concatenation of job splitting and conventional MPC approaches.
\end{abstract}

\begin{IEEEkeywords}
	multi-party computation, polynomial sharing, secure computation, massive matrix computation
\end{IEEEkeywords}
\section{Introduction}
\label{sec:introduction}

With the growing size of datasets in use cases such as machine learning and data science, it is inevitable to distribute the computation tasks to some external entities, which are not necessarily trusted. In this set-up,  some of the major challenges are protecting the privacy of the data, guaranteeing the correctness of the result, and ensuring the efficiency of the computation. 

The problem of processing private information on some external parties has been studied in the context of \emph{secure multi-party computation (MPC)}. Informally, in an MPC problem, some private data inputs are available in some source nodes, and the goal is to offload the computation of a specific function of those inputs to some parties (workers). These parties are not reliable. Some of them may collude to gain information about the private inputs (semi-honest workers) or even behave adversarially to make the result incorrect. Thus, the objective is to design a scheme, probably based on coding and randomization techniques, to ensure data privacy and correctness of the result.  To ensure privacy, some MPC solutions, such as ~\cite{BMR90, GRR98},  rely on cryptographic hardness assumptions, while others, such as~\cite{ben1988completeness, CCD88,beaver1991efficient},  are protected based on information-theoretic measures  (See~\cite{MPC_BOOK} for a survey on different approaches of MPC). In particular, the BGW scheme, named after its inventors,  Ben-Or, Goldwasser, and Wigderson in \cite{ben1988completeness},   relies on Shamir secret sharing~\cite{shamir1979share1} to develop an information-theoretically private MPC scheme to calculate any polynomial of private inputs.  Shamir secret sharing is an approach that allows us to share a secret, i.e., private input,  among some parties, such that if the number of colluding nodes is less than a threshold, they cannot gain any information about the data. The BGW scheme exploits the fact that Shamir secret sharing admits basic operations such as addition and multiplication (at the cost of some communication among nodes).

There have been some efforts to improve MPC algorithms' efficiency, but mainly focusing on the communication loads (see~\cite{scalablemulti-party}). However, less effort has been dedicated to the cases where the input data is massive. One approach would be to split the computation to some smaller subtasks and dedicate a group of workers to execute each subtask, using conventional MPC approaches. In this paper, we argue that the idea of the concatenation of job splitting and multi-party computation could be significantly sub-optimal in terms of the number of workers needed. 

In a seemingly irrelevant area, extensive efforts have been dedicated to using coding theory to improve the efficiency of distributed computing,  mainly to cope with the stragglers~\cite{dean2013tail, speed, high,yu2017polynomial, entangle, opt-recovery, short, sketch, sparse, lagrange,  codedsketch}. The core idea is based on partitioning each input data into some smaller inputs and then encoding the smaller inputs. The workers then process those encoded inputs. The ultimate goal is to design the code such that the computation per worker node is limited to what it can handle, and also the final result can be derived from the outcomes of a subset of worker nodes. This is done for matrix to vector multiplication in~\cite{speed}, and matrix to matrix multiplication in~\cite{ high,yu2017polynomial,entangle, opt-recovery, short}. In particular, in~\cite{yu2017polynomial}, the code is designed such that the results of different workers form a maximum distance separable (MDS), meaning that the final result can be recovered from any subset of servers with the minimum size. That approach has been extended to general matrix partitioning  in~\cite{entangle, opt-recovery}, and also for the cases where only an approximate result of the matrix multiplication is needed in~\cite{sketch,codedsketch}.
This paper aims to design efficient MPC schemes for massive inputs, exploiting the ideas developed in the context of coding for computing.  Let us first review a motivating example that justifies our objective. 

 \begin{example}[Maximum likelihood regression on private data:]
		\label{example:reg}
		Assume that a research center wants to run a machine learning algorithm, say maximum likelihood regression, on a collection of private data sets, where each data set belongs to an organization.  For example, each data set can be the salary information of a company (see~\cite{WageGap}) or the medical records of a hospital. Those organizations want to help the research center, but do not want to reveal any information about their data sets beyond the final result. In particular, assume that there are $\Gamma \in \mathbb{N}$ organizations, where organization $\gamma \in \{ 1, 2, \ldots, \Gamma\}$,  has  a matrix $\mathbf{X}^{[\gamma]}$ and a  corresponding target vector $\mathbf{b}^{[\gamma]}$. The target value of each row of  $\mathbf{X}^{[\gamma]}$ is the corresponding entry of  vector $\mathbf{b}^{[\gamma]}$. The research center aims to find a regression model  $b= \mathbf{w}^T \mathbf{x}$, representing the relationship between each row of $\mathbf{X}$ and the corresponding entry of $\mathbf{b}$, where
		
		\begin{align}
		\mathbf{X}=\left[
		\begin{matrix}
		\mathbf{X}^{[1]}\\
		\vdots\\
		\mathbf{X}^{[\Gamma]}
		\end{matrix}
		\right], \ \mathbf{b}=\left[
		\begin{matrix}
		\mathbf{b}^{[1]}\\
		\vdots\\
		\mathbf{b}^{[\Gamma]}
		\end{matrix}
		\right]. 
		\end{align}
		Following the maximum likelihood solution for regression (see~\cite{bishopbook}[Chapter 3.1.1],  the center needs to calculate 
		\begin{align}
		\mathbf{w}= (\mathbf{X}^T\mathbf{X})^{-1} \mathbf{X}^T  \mathbf{y},
		\end{align} 
		where  $(\mathbf{X}^T\mathbf{X})^{-1} \approx \mathbf{I}+ \mathbf{A}+\ldots+\mathbf{A}^k$, 
		for $\mathbf{A}= \mathbf{I}-\mathbf{X}^T\mathbf{X}$ and some $k \in \mathbb{N}$ (assuming that the datasets are normalized such that the largest eigenvalue of $\mathbf{A}$ is less than one). For processing, assume that there are some servers available, where none of them has enough computing and storage resources to execute this computation alone. On top of that, at most $t-1$, for some integer $t$,  of them may collude to gain information about the private data.  Conventional MPC schemes, such as the BGW scheme~\cite{ben1988completeness}, can handle computing polynomial functions of some private inputs (See~\cite{MPC_BOOK}), but are not designed to handle scenarios where the inputs are massive matrices. In this paper, we aim to address such scenarios. 
	\end{example}

In this paper we consider a system, including $\Gamma$ sources, $N$ workers, and one master. There is a link between each source and each worker. All of the workers are connected to each other and also are connected to the master.  Each source sends a function of its data (so-called a share) to each worker. 
We assume that the workers have limited computation resources. As a proxy to that limit, we assume that each share's size can be up to a certain fraction of the corresponding input size. The workers process their inputs, and in between, they may communicate with each other.  After that, each worker sends a message to the master, such that the master can recover the required function of the inputs. The sharing and the computation procedures must be designed such that if any subset of $t-1$ workers collude, for some $t \in \mathbb{N}$,  they can not gain any information about the inputs. Also, the master must not gain any additional information, beyond the result,  about the inputs. Motivated by recent results in coding for matrix multiplication, as an extension to Shamir secret sharing,  in this paper, we propose a new sharing approach called \emph{polynomial sharing}. We show that the proposed sharing approach admits basic operations such as addition, multiplication, and transposing, by developing a procedure for each of them.  Finally, we prove that we can compute any polynomial function using these procedures while preserving the privacy subject to the storage limit of each worker node. We show that the number of servers needed to compute a function using this approach is order-wise less than what we need in approaches based on job splitting and the conventional BGW scheme. 

The rest of the paper is organized as follows.  In Section \ref{sec:Problem Setting}, we formally state the problem setting. In Section \ref{sec:preliminaries}, we review some preliminaries and conventional approaches for MPC. In section \ref{main result}, we state the main result. In Section \ref{Motivating Examples for proposed scheme}, we review some motivating examples. In Section \ref{Polynomial Sharing}, we present the \emph{polynomial sharing} scheme.  In Section \ref{Procedures}, we show several procedures to perform basic operations, such as addition, multiplication, and transposing, using the proposed sharing scheme. In Section \ref{LSMPC Algorithm}, we present the algorithm to calculate general polynomials. Finally in Section \ref{Extension}, we present some extensions.

\emph{Notation:} In this paper matrices and vectors are denoted by upper boldface letters and lower boldface letters respectively. For $n_1, n_2\in\mathbb{Z}$ the notation $[n_1,n_2]$ represents the set $\{n_1, n_1 + 1, \dots n_2\}$. Also, $[n]$ denotes the set $\{ 1, \dots,n\}$ for $n\in\mathbb{N}$. Furthermore, the cardinality of a set $\mathcal{S}$ is denoted by $|\mathcal{S}|$. For the arbitrary field $\mathbb{F}$, the notation $\mathbb{F}^*$ means any matrix with any possible size, with entries from $\mathbb{F}$. For a matrix $\mathbf{A}$, $\mathbf{A}(a:b,:)$ denotes a matrix including rows $a$ to $b$ of matrix $\mathbf{A}$.

\subsection{Concurrent and Follow-up Results}
The early version of this paper was submitted to IEEE ISIT 2018 in Jan. 2018, which has been appeared in June 2018~\cite{Multi-Party}. In addition, it was presented in CTW 2018 in May 2018. We have also presented a generalized version of~\cite{Multi-Party} in~\cite{Multi-Party2}. 
	
	In parallel in~\cite{lagrange} and \cite{yu2018coding}, the authors introduce Lagrange coded computing, targeting the case where the computation of interest can be decomposed into computing one polynomial function for several  (private) inputs.  In other words, the computing task can be stated as computing   $\tilde{\mathbf{G}}(\mathbf{X}^{[1]}), \tilde{\mathbf{G}}(\mathbf{X}^{[2]}),  \dots, \tilde{\mathbf{G}}(\mathbf{X}^{[\Gamma]})$ for an arbitrary polynomial function $\tilde{\mathbf{G}}(.)$, and   inputs  $\mathbf{X}^{[1]}, \mathbf{X}^{[2]} \dots, \mathbf{X}^{[\Gamma]}$. The decomposition must be such that one worker can handle one computing of $\tilde{\mathbf{G}}(.)$ for an input. The idea is then to use coding over those computations to form coded redundancy to deal with stragglers and/or guarantee privacy. 
	
	
	The major difference between \cite{lagrange}, \cite{yu2018coding}, and what we do in this paper is that  in \cite{lagrange} and \cite{yu2018coding}, one server \emph{can} handle computing   $\tilde{\mathbf{G}}(\mathbf{X}^{[\gamma]})$, for the input $\mathbf{X}^{[\gamma]}$,   $\gamma \in [\Gamma]$. However,  in this work, the objective is to compute $\mathbf{G}(\mathbf{X}^{[1]}, \mathbf{X}^{[2]} \dots, \mathbf{X}^{[\Gamma]})$, for an arbitrary polynomial $\mathbf{G}$, where one server cannot handle computing $\mathbf{G}(\mathbf{X}^{[1]}, \mathbf{X}^{[2]} \dots, \mathbf{X}^{[\Gamma]})$ alone. It is clear that for an arbitrary polynomial function $\mathbf{G}(\mathbf{X}^{[1]}, \mathbf{X}^{[2]} \dots, \mathbf{X}^{[\Gamma]})$, we cannot always transform it into computing some independent calculations $\tilde{\mathbf{G}}(\mathbf{X}^{[1]}), \tilde{\mathbf{G}}(\mathbf{X}^{[2]}) \dots, \tilde{\mathbf{G}}(\mathbf{X}^{[\Gamma]})$ (see Example~\ref{example:reg}). 
	
	Another difference is that  in \cite{lagrange} and \cite{yu2018coding}, there is no communication among  the workers and the number of workers needed for coding to be effective, is at least  $(\Gamma-1) \deg(\tilde{\mathbf{G}})+1$. However, in the scheme that we propose in this paper, there is communication among the workers and the number of servers needed does not grow with $\Gamma$ or the degree of $\mathbf{G}$. 
	
	The idea of \cite{lagrange} and \cite{yu2018coding} has been used to train some machine learning algorithms~\cite{so2019codedprivateml}, in which the computation can be decomposed into calculating one specific function for several inputs.  To avoid requiring too many workers,   in ~\cite{so2019codedprivateml}, it is suggested to approximate the specific function to a low degree polynomial function.
	
	Another related line of research, started by~\cite{chang2018capacity}, published in June 2018 on arXiv, is known as secure matrix multiplication. In that set-up, the objective is to calculate the multiplication of two matrices  without leaking information to the workers. The scheme of~\cite{chang2018capacity} was later improved by the follow-up results~\cite{KES19, d2020gasp}.  The difference between what we do in this paper and secure matrix multiplication~\cite{chang2018capacity, KES19, d2020gasp} is that we are interested in the result of a general polynomial function of multiple private inputs, while in secure matrix multiplication, the master is interested in the result of multiplication of two matrices. Also, there is no privacy constraint for the master at the secure matrix multiplication, and there is no communication between the workers. Still, we can consider the problem of secure matrix multiplication as a special case of the proposed scenario in this paper, ignoring the extra privacy constraint that we have at the master. Indeed, the scheme that we had already proposed in~\cite{Multi-Party} outperforms the scheme of the concurrent work~\cite{chang2018capacity} and the follow-up work~\cite{KES19}. The other follow-up paper \cite{d2020gasp} reports two schemes, named as  GAPS-Big and GAPS-Small. Indeed, GAPS-Big is the same as what we had already reported in~\cite{Multi-Party}.  The scheme GAPS-Small performs better than what we report in this paper for the case of $3 \leq t < k$. 
	
	Some recent result~\cite{jia2019capacity} on secure matrix multiplications focuses on calculating pairwise multiplications of \emph{several pairs} of matrices, rather than just one pair. This will reduce the cost of randomization per pair of multiplication (using ideas such as cross-subspace alignment) and improve the efficiency~\cite{jia2019capacity}. \cite{kim2019private} considers private matrix multiplications where workers do not collude, but still the master wants to keep the input matrices private from each worker. \cite{kim2019private} also considers the case where one of the matrices is selected from a finite and known set, and the objective is to keep the index of that matrix private.  In \cite{aliasgari2020private}, the authors propose a code for private matrix multiplication that is flexible to achieve a  trade-off between the number of workers needed and the communication load. They also consider the case where one of the matrices is selected from a finite and publicity known set of matrices.  A different approach to privacy in computing is introduced in~\cite{Tahmasebi_Seq_ISIT2019, Tahmasebi_Seq_ISIT2020}, where the function of interest is formed by a specific concatenation and combination of several known linear functions, represented by matrix operations,  and the objective is to keep the order of concatenation/combination private.

\section{Problem Setting}
\label{sec:Problem Setting}
Consider an  MPC system including $\Gamma$ source nodes,  $N$ worker nodes, and one master node, for some $\Gamma, N \in \mathbb{N}$  (see Fig. \ref{shapeofSystem}). 

Each source is connected to every single worker. In addition, every pair of workers are connected to each other.  However, there is no communication link between the sources. In addition, all of the workers are connected to the master. All of these links are orthogonal\footnote{By orthogonal link, we mean that they do not interfere with each other. For example, they can be wired links, or if they are wireless, they communicate at different time/frequency slots.}, secure,  and error free.

Each source $\gamma \in [\Gamma]$ has access to a matrix $\mathbf{X}^{[\gamma]}$, chosen  from an arbitrary distribution over $\mathbb{F}^{m \times m}$ for some finite field $\mathbb{F}$ and $m \in \mathbb{N}$. The master aims to know the result of a function  $\mathbf{Y} = \mathbf{G}(\mathbf{X}^{[1]}, \mathbf{X}^{[2]} \dots, \mathbf{X}^{[\Gamma]})$, where $\mathbf{G} \from (\mathbb{F}^{m \times m})^\Gamma \to \mathbb{F}^{m \times m}$ is an arbitrary polynomial function. The system operates in three phases: (i) sharing, (ii) computation and communication, and (iii) reconstruction. The detailed description of these phases is as follows.
\begin{enumerate}
	\item \emph{Sharing}: In this phase, the source $\gamma$ sends $\mathbf{\tilde{X}}_{\gamma n} = \mathbf{F}_{\gamma \to n}(\mathbf{X}_{\gamma})$ to worker $n$  where $	\mathbf{F}_{\gamma \to n} \from \mathbb{F}^{m \times m}\to \mathbb{F}^{m \times \frac{m}{k}}$, for some $k \in \mathbb{N}$, $k|m$, denotes the sharing function,  used at source  $\gamma \in [\Gamma]$, to share data with worker $n \in [N]$.  The number $k$ represents the limit on the storage size at each worker.

	\item \emph{Computation and Communication}: In this phase, the workers process what they received, and in between, they may send some messages to other workers and continue processing.  We define the set $\mc{M}_{n \to n^{\prime}} \in  \mathbb{F}^* $  as the set of all  messages that worker $n$ sends to the worker ${n^{\prime}}$ in this phase, for $n, n^{\prime} \in [N] $.
	
	\item \emph{Reconstruction}: In this phase, every worker sends a message to the master. More precisely, worker $n$ sends the message $\mathbf{O}_n \in  \mathbb{F}^*$ to the master.
\end{enumerate}

This scheme must satisfy three constraints.

\begin{enumerate}
	\item \emph{Correctness}: The master must be able to recover $\mathbf{Y}$ from $\mathbf{O}_1, \mathbf{O}_2,\dots, \mathbf{O}_N$. More precisely
	\begin{align}\label{correctness}
	H(\mathbf{Y}|\mathbf{O}_1, \mathbf{O}_2,\dots, \mathbf{O}_N) = 0,
	\end{align}
	where $H$ denotes the Shannon entropy.
	\item \emph{Privacy at the workers}: Let $t \in [N]$. Any arbitrary subset of workers, including $t-1$ workers, must not gain any information about the inputs. In particular, for any $\mc{S} \subset [N]$, $|\mc{S}| \leq t-1$
	\begin{align}\label{privacy for workers}
	H(\mathbf{X}^{[j]}, j\in [\Gamma]| \bigcup_{n \in \mc{S}} \{\mc{M}_{n^{\prime} \to n}, n^{\prime} \in [N]\}, \mathbf{\tilde{X}}_{\gamma n}, \gamma \in [\Gamma], n \in \mc{S})  = H(\mathbf{X}^{[j]}, j\in [\Gamma]). 
	\end{align}
	$t$ is called \emph{the security threshold} of the system.  In other words, we assume that there are $t-1$ semi-honest worker nodes among the workers. It means that even-though they follow the protocol and report any calculations correctly, they are curious about the input data and may collude to gain information about it. Condition \eqref{privacy for workers} guarantees that those colluding servers gain no information about the private inputs.
	
	\item \emph{Privacy at the master}: The master must not gain any additional information about the inputs, beyond the result of the function. In other words, $\mathbf{Y}$ is the only new information that is revealed to the master. More precisely, 
	\begin{align}\label{privacy for master}
	H(\mathbf{X}^{[1]}, \mathbf{X}^{[2]} \dots, \mathbf{X}^{[\Gamma]}   |\mathbf{Y},\mathbf{O}_1, \mathbf{O}_2,\dots, \mathbf{O}_N)     =
	H(\mathbf{X}^{[1]}, \mathbf{X}^{[2]} \dots, \mathbf{X}^{[\Gamma]}|\mathbf{Y}). 
	\end{align} 
\end{enumerate}

\begin{definition}
	For some $k,t \in \mathbb{N}$ and polynomial function $\mathbf{G}$, we define $N_{\mathbf{G}}^\star(t,k)$ as the minimum number of workers needed to calculate $\mathbf{G}$, while the correctness,  privacy at the workers, and privacy at the master are satisfied.
\end{definition}
The objective of this paper is to find an upper bound on $N_{\mathbf{G}}^\star(t,k)$.

\begin{figure}[!htbp]
	\centering
\includegraphics[width=8cm]{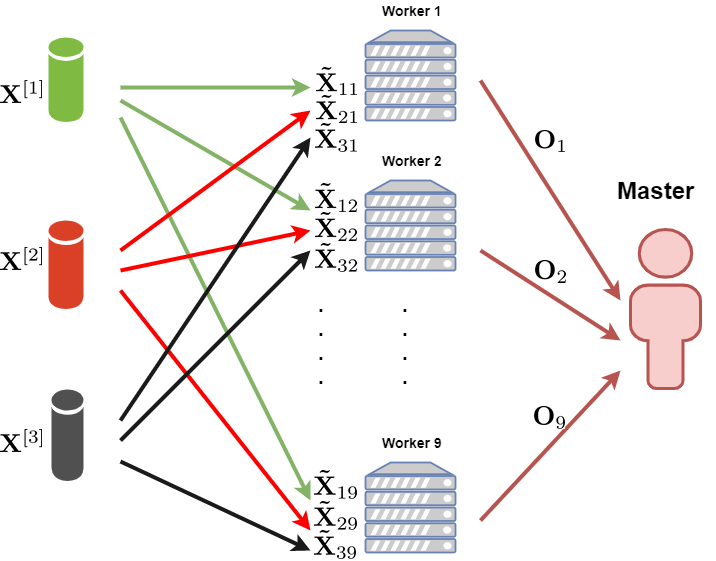}
	\caption{An MPC system including  $\Gamma = 3$ private inputs $\mathbf{X}^{[1]},\mathbf{X}^{[2]}$, and $\mathbf{X}^{[3]} \in \mathbb{F}^{m \times m}$, for $m \in \mathbb{N}$, $N = 9$ workers, and a master. All communication links are secure and error free. The size of the shares given to each worker	is a fraction of the size of the inputs. Input node $\gamma$ sends $\mathbf{\tilde{X}}_{\gamma n} \in \mathbb{F}^{m \times \frac{m}{k}}$ to worker $n$, for some $k \in \mathbb{N}, k|m$.  Workers process their inputs while interacting with each other. Finally worker $n$ sends $\mathbf{O}_n$ to the master. The master aims to know the result of a function e.g., $\mathbf{G}(\mathbf{X}^{[1]}, \mathbf{X}^{[2]}, \mathbf{X}^{[3]}) = (\mathbf{X}^{[1]})^T\mathbf{X}^{[2]}+\mathbf{X}^{[3]}$, subject to the correctness condition \eqref{correctness} and privacy conditions \eqref{privacy for workers}, \eqref{privacy for master}.}
	\label{shapeofSystem}
\end{figure}

\section{Preliminaries: $k=1$}
\label{sec:preliminaries}
As described in Section \ref{sec:Problem Setting}, we have four constraints: limited storage size at each worker represented by $k$, the correctness of the result, and the privacy at the workers and master. If there is no storage limit at the workers, i.e., $k = 1$, the problem is reduced to a version of secure multi-party computation, which has been extensively studied in the literature. In Particular, in \cite{ben1988completeness}, Ben-Or, Goldwasser, and Wigderson propose a scheme referred as the BGW scheme. To be self contained, here we briefly describe the BGW scheme in several examples. As we will see having the limit of $k > 1$ will drastically change the problem.

\begin{example}[\emph{The BGW scheme for addition}]
	\label{avalin mesal}
	Assume that we have two sources $1$ and $2$ with private inputs $\mathbf{X}^{[1]}= \mathbf{A}\in \mathbb{F}^ {m \times m}$ and  $\mathbf{X}^{[2]}= \mathbf{B} \in \mathbb{F}^ {m \times m}$, respectively. These sources share their inputs with the workers. There are at most $t-1$ semi-honest adversaries among the workers, where $t \in [N]$,  meaning that they follow the protocol but may collude to gain information about the input data.

The BGW protocol for this problem is as follows:

\begin{enumerate}
	\item \emph{Phase 1 - Sharing:}
	
	First phase of the BGW scheme is based on Shamir secret sharing \cite{shamir1979share1}. Source $1$ forms the polynomial
	
	\begin{align}
	\label{FA}
		\mathbf{F}_\Ab(x) = \mathbf{A} + \mathbf{\bar{A}}_1x + \mathbf{\bar{A}}_2x^2 + \dots+ \mathbf{\bar{A}}_{t-1}x^{t-1} ,
	\end{align}
	where $\mathbf{F}_\Ab(0) = \mathbf{A}$ is the private input at source 1 and the other coefficients, $\mathbf{\bar{A}}_i$, $i \in \{ 1, 2, \dots t-1\}$,  are chosen from $\mathbb{F}^{m \times m}$ independently and uniformly at random. Source 1 uses  $\mathbf{F}_\Ab(x)$ to share $\mathbf{A}$ with the workers. This means that it sends $\mathbf{F}_\Ab(\alpha_n)$ to worker $n$, for some distinct $\alpha_1, \alpha_2, \dots, \alpha_N \in \mathbb{F}$. 

We note that from Lagrange interpolation rule \cite{bakhvalov1977numerical}, if we have $t$  points from $\mathbf{F}_\Ab(x)$ we can uniquely determine this polynomial~\cite{bakhvalov1977numerical}. Therefore, any subset of including $t$ workers collaboratively can reconstruct $\mathbf{F}_\Ab(0) = \mathbf{A}$. 
Intuitively, the reason is that there are $t$ unknown coefficients in $\mathbf{F}_\Ab(x)$ and thus, we need at least $t$ equations to solve for those coefficients. Also, since coefficients, i.e., $\mathbf{\bar{A}}_i, i \in \{ 1, 2, \dots t-1\}$, are chosen in $\mathbb{F}^{m \times m}$ independently and uniformly at random, then any subset of including less than $t$ workers cannot reconstruct $\mathbf{A}$ and indeed gain no information about it. 	 
 For formal proof, see~\cite{shamir1979share1}.

This approach is called $(N,t)$ sharing of $\mathbf{A}$ (or interchangeably $(N,t)$ Shamir secret sharing of $\mathbf{A}$) and $\mathbf{F}_\Ab(\alpha_n)$ is called the share of $\mathbf{A}$ at worker $n$.
	
	Similarly, source 2 forms $\mathbf{F}_\Bb(x)$ according to the following equation and shares its secrets among the workers,
	\begin{align}
	\label{FB}
		\mathbf{F}_\Bb(x) = \mathbf{B} + \mathbf{\bar{B}}_1x + \mathbf{\bar{B}}_2x^2+ \dots+ \mathbf{\bar{B}}_{t-1}x^{t-1}, 
	\end{align}
	where $\mathbf{F}_\Bb(0) = \mathbf{B}$ is the private input at source 2 and $\mathbf{\bar{B}}_i$, $i \in \{ 1, 2, \dots t-1\}$, are chosen in $\mathbb{F}^{m \times m}$ independently and uniformly at random. Source 2 sends $\mathbf{F}_\Bb(\alpha_n)$ to worker $n$.
	
		\item \emph{Phase 2 - Computation and Communication:}	
		
	Shamir secret sharing scheme has the linearity property, which is very important. It means that if we have the shared secrets $\mathbf{A}$ and $\mathbf{B}$ using $\mathbf{F}_\Ab(x)$ and $\mathbf{F}_\Bb(x)$, respectively among $N$ workers, in order to share the secret $p\mathbf{A}+q\mathbf{B}$ among the workers, for some constants  $p,q \in \mathbb{F}$, we just need that worker $n$ locally calculates $p\mathbf{F}_\Ab(\alpha_n) + q\mathbf{F}_\Bb(\alpha_n)$. This implies that the share of $p\mathbf{A}+q\mathbf{B}$ are available at the workers using sharing polynomial $p\mathbf{F}_\Ab(x)+q\mathbf{F}_\Bb(x)$. For addition, it is sufficient to choose $p = q = 1$.
	Note that in this particular example, no communication among the nodes is needed. 
	
		\item \emph{Phase 3 - Reconstruction:}

	 In this phase  worker $n$ sends $\mathbf{O}_n = \mathbf{F}_\Ab(\alpha_n) + \mathbf{F}_\Bb(\alpha_n)$, calculated in phase two, to the master.  If the master has  $\mathbf{F}_\Ab(\alpha_n) + \mathbf{F}_\Bb(\alpha_n)$ for $t$ or more distinct $\alpha_n$'s, it can recover all the coefficients of degree $t-1$ polynomial  $\mathbf{F}_\Ab(x) + \mathbf{F}_\Bb(x)$. In particular, it recovers  $\mathbf{Y} = \mathbf{F}_\Ab(0) + \mathbf{F}_\Bb(0)=\mathbf{A}+\mathbf{B}$. One can verify that both privacy constraints \eqref{privacy for workers} and \eqref{privacy for master} are satisfied.
	\end{enumerate}

\end{example}

\begin{example}[\emph{The BGW scheme for multiplication}]
	\label{mesale2}
	Assume that we have two sources $1$ and $2$ with private inputs $\mathbf{X}^{[1]}= \mathbf{A}\in \mathbb{F}^ {m \times m}$ and  $\mathbf{X}^{[2]} = \mathbf{B} \in \mathbb{F}^ {m \times m}$, respectively. In addition, the master aims to know $\mathbf{Y}=\mathbf{A}^T\mathbf{B}$.  There are $t-1$ semi-honest adversaries among the $N$ workers. 
	
The BGW protocol for this problem operates as follows:
	
	\begin{enumerate}
		\item \emph{Phase 1 - Sharing:}
		
		This phase is exactly the same as the first phase of Example~\ref{avalin mesal}. Therefore, at the end of this phase worker $n$ has $\mathbf{F}_\Ab(\alpha_n)$ and $\mathbf{F}_\Bb(\alpha_n)$ for some distinct $\alpha_1, \alpha_2, \dots, \alpha_N \in \mathbb{F}$, where $\mathbf{F}_\Ab(x)$ and $\mathbf{F}_\Bb(x)$ are defined in~\eqref{FA} and \eqref{FB}, respectively. 
	
		\item \emph{Phase 2 - Computation and Communication}:

	In this phase, worker $n$ calculates $\mathbf{F}^T_\Ab(\alpha_n)\mathbf{F}_\Bb(\alpha_n)$, simply by multiplying its shares of $\mathbf{A}$ and $\mathbf{B}$. Let us define the polynomial $\mathbf{H}(x) \triangleq \mathbf{F}_{\Ab}^T(x)\mathbf{F}_\Bb(x)$. We note that  $\mathbf{H}(0) = \mathbf{A}^T\mathbf{B}$ and $\deg(\mathbf{H}(x)) = 2t-2$. Therefore,  by having at least $2t-1$ samples of this polynomial,   $\mathbf{A}^T\mathbf{B}$ can be solved for. Therefore, if $N \geq 2t-1$, and these workers send their result to the master, it can calculate $\mathbf{A}^T\mathbf{B}$. However, we do not do that. For some important reasons that we explain later, in Remark \eqref{remarkfor Shamir}, we prefer to first have the Shamir sharing of $\mathbf{A}^T\mathbf{B}$ at each node. As explained before, different workers have samples of $\mathbf{F}_{\Ab}^T(x)\mathbf{F}_\Bb(x)$, and $\mathbf{A}^T\mathbf{B} = \mathbf{F}_{\Ab}^T(0)\mathbf{F}_\Bb(0)$. However, $\deg(\mathbf{F}_{\Ab}^T(x)\mathbf{F}_\Bb(x)) = 2t-2 \neq t-1$. In addition, the coefficients of $\mathbf{F}_{\Ab}^T(x)\mathbf{F}_\Bb(x)$ do not have the distribution that we wish in Shamir secret sharing.

	In the BGW scheme, to have Shamir shares of $\mathbf{A}^T\mathbf{B}$ at each node, we use the following approach.
	
	We note that from Lagrange interpolation rule \cite{bakhvalov1977numerical}, if $N \geq 2t-1$,  there exists a vector $\mathbf{r} = (r_1,r_2,\dots,r_N) \in \mathbb{F}^N$ such that
	\begin{align}\label{tarkibe khati}
		\mathbf{H}(0) = \mathbf{A}^T\mathbf{B} = \sum_{n=1}^{N}r_n\mathbf{H}(\alpha_n).
	\end{align}
	
	In this approach, worker $n$ shares $\mathbf{H}(\alpha_n)=\mathbf{F}^T_\Ab(\alpha_n)\mathbf{F}_\Bb(\alpha_n)$ with other workers using Shamir secret sharing. In other words, worker $n$ forms a polynomial of degree $t-1$, 
	\begin{align}
		\mathbf{F}_n(x) = \mathbf{H}(\alpha_n) + \mathbf{\bar{H}}_1^{(n)}x + \mathbf{\bar{H}}_2^{(n)}x^2 + \dots + \mathbf{\bar{H}}_{t-1}^{(n)}x^{t-1}, \nonumber
	\end{align}
	where $\mathbf{\bar{H}}_i^{(n)}$, $i \in [t-1]$, $n \in [N]$,  are chosen independently and uniformly at random in $\mathbb{F} ^{m \times m}$. Worker $n$ sends  the value of $\mathbf{F}_n(\alpha_{n ^{\prime}})$ to  worker ${n ^{\prime}}$, for all $n, n' \in [N]$. 
Then each worker $n$ calculates $\sum_{n'=1}^{N} r_{n'} \mathbf{F}_{n'} (\alpha_{n})$. 
We claim that the result is indeed Shamir sharing of $\mathbf{A}^T\mathbf{B}$. To verify that, let us define $\mathbf{F} (x)$ as
\begin{align}
\mathbf{F} (x) &\triangleq \sum_{n'=1}^{N} r_{n'} \mathbf{F}_{n'} (x)  \nonumber \\
& = \sum_{n'=1}^{N} r_{n'}  \mathbf{H}(\alpha_{n'}) + x\sum_{n'=1}^{N} r_{n'}  \mathbf{\bar{H}}_1^{(n')} + x^2\sum_{n'=1}^{N} r_{n'}  \mathbf{\bar{H}}_2^{(n')} + \dots + x^{t-1}   \sum_{n'=1}^{N} r_{n'}  \mathbf{\bar{H}}_{t-1}^{(n')}.
\end{align} 
Then, we have the following observations:
\begin{enumerate}[(i)]
\item Due to \eqref{tarkibe khati}, $\mathbf{F} (0)=  \sum_{n'=1}^{N} r_{n'}  \mathbf{H}(\alpha_{n'})=  \mathbf{A}^T\mathbf{B}$. 

\item   Worker $n$ has access to $\mathbf{F} (\alpha_n)= \sum_{n'=1}^{N} r_{n'} \mathbf{F}_{n'} (\alpha_{n})$.

\item $\sum_{n'=1}^{N} r_{n'}  \mathbf{\bar{H}}_i^{(n')}$, $i \in [t-1]$,  are independent with uniform distribution in $\mathbb{F}^ {m \times m}$.
\end{enumerate}

Thus, $\mathbf{F} (\alpha_n)$ is indeed a Shamir share of $\mathbf{A}^T\mathbf{B}$. In addition,  if $t$ of the workers send  their samples of $\mathbf{F}(x)$ to the master, it can recover $\mathbf{F} (0)=  \mathbf{A}^T\mathbf{B}$.  

\item	\emph{Phase 3 - Reconstruction:}
	
Each worker $n$ sends  $\mathbf{F} (\alpha_n)$ to the master. It can recover $\mathbf{F} (0)=  \mathbf{A}^T\mathbf{B}$, if it has $\mathbf{F} (\alpha_n)$ from $t$ workers. 

\end{enumerate}

We note that the master can also recover $\sum_{n'=1}^{N} r_{n'}  \mathbf{\bar{H}}_i^{(n')}$, for $i \in [t-1]$, which reveal no information about 
 $\mathbf{A}$ and $\mathbf{B}$. Therefore,  the privacy at the master is guaranteed. In addition, whatever is shared with a worker is based on Shamir secret sharing with new random coefficients. This can be used to prove that the privacy at the workers is guaranteed. 

This example shows that if the number of workers is $N \geq 2t-1$, the system can calculate multiplication.

\end{example}

\begin{remark} \label{remarkfor Shamir}
		In this remark, we explain why we prefer to have Shamir shares of $\mathbf{A}^T\mathbf{B}$ at the workers. The main reason is that this approach gives us great flexibility to use it iteratively and calculate any polynomial function of the inputs. Let us assume that there are three inputs $\mathbf{A}, \mathbf{B}$, and $\mathbf{C}$, and the goal is to calculate $\mathbf{C}^T\mathbf{A}^T\mathbf{B}$. First, we use the above approach to have the Shamir shares of $\mathbf{D} = \mathbf{A}^T\mathbf{B}$ at each worker and use it again to calculate $\mathbf{C}^T\mathbf{D}$. Similarly, if the goal is to calculate $\mathbf{A}^T\mathbf{B}+\mathbf{C}$, we use the above scheme to have the Shamir shares of $\mathbf{D} = \mathbf{A}^T\mathbf{B}$ at each node, and use the scheme of Example \ref{avalin mesal} to calculate $\mathbf{D}+\mathbf{C}$. 
		
		Another important reason is that to calculate $\mathbf{C}^T\mathbf{A}^T\mathbf{B}$, still we need $N = 2t-1$ workers. Otherwise, $\mathbf{F}_{\Cb}^T(x)\mathbf{F}_{\Ab}^T(x)\mathbf{F}_\Bb(x)$ will have degree $3(t-1)$, and thus we will need $N = 3t-2$ workers. In other words, using this approach,  
		the number of servers needed does not grow with the number of matrix multiplication.
		
		This approach also makes the proof of privacy simpler.
	\end{remark}
		
Now let us consider the scenario where there is a storage limit for each worker, i.e., $k > 1$. One approach to deal with this case is to split the job into smaller jobs and use the BGW scheme for each sub-job, as we see in the following example.

\begin{example}[\emph{Concatenation of Job Splitting and the BGW}, $k=2$]
	\label{eg:problem USDMPC1}
	Here, we revisit Example~\ref{mesale2}, but here, we assume that $k=2$.

 We partition each matrix into two sub-matrices as follows: 
		\begin{align}
		\label{eq:matrixpartition1}
	\mathbf{A}=\begin{bmatrix} \mathbf{A}_1 & \mathbf{A}_2 \end{bmatrix}, \\
	\label{eq:matrixpartition2}
	\mathbf{B} =\begin{bmatrix} \mathbf{B}_1 & \mathbf{B}_2 \end{bmatrix}, 	
	\end{align}
	where $\mathbf{A}_i,\mathbf{B}_i \in \mathbb{F}^{m \times \frac{m}{2}}$, for $i\in\{ 1,2 \}$. We note that
	\begin{align}
	\mathbf{A}^T\mathbf{B} = \begin{bmatrix} \mathbf{A}_1^T\mathbf{B}_1 & \mathbf{A}_1^T\mathbf{B}_2 \\
	\mathbf{A}_2^T\mathbf{B}_1 & \mathbf{A}_2^T\mathbf{B}_2 
	\end{bmatrix}. \nonumber		
	\end{align}
	
Therefore, to calculate $\mathbf{A}^T\mathbf{B}$, we  can use the BGW scheme with four groups of workers to calculate $\mathbf{A}^T_i\mathbf{B}_j, i,j \in \{ 1,2 \}$, each of them with at least $2t-1$ workers, following Example \ref{mesale2}. Therefore, using this scheme,  we need $N = 4(2t-1)$ workers.

\end{example}

One can see that with the concatenation of job-splitting and the BGW scheme described in Example \ref{eg:problem USDMPC1}, the minimum number of workers needed to calculate the addition and multiplication of two matrices is
$N = kt$ and $N =k^2 (2t-1)$, respectively. In this paper, we propose an algorithm which reduces the required number of workers significantly.	

\section{Main Results}\label{main result}
The main result of this paper is as follows:  

\begin{theorem}\label{remark}
	For any $k, t \in  \mathbb{N}$ and any polynomial function $\mathbf{G}$,
		\begin{align}
	N_{\mathbf{G}}^\star(t,k) \leq \min\{2k^2 + 2t -3, k^2 + kt + t -2\}, \nonumber
	\end{align}
	where 
	\[   
	\min \{2k^2 + 2t - 3, k^2 + kt + t -2\} = 
	\begin{cases}
	2k^2 + 2t - 3 	&\quad\text{if} ~k < t\\
	k^2 + kt + t -2 &\quad\text{if} ~k \geq t
	\end{cases}.
	\]

\end{theorem}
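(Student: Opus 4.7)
The plan is to prove Theorem~\ref{remark} by explicitly constructing an MPC protocol that achieves the claimed worker count. The centerpiece is a new sharing scheme called \emph{polynomial sharing}, generalizing Shamir secret sharing to the storage-constrained regime $k>1$. Each source partitions its matrix $\mathbf{X}\in\mathbb{F}^{m\times m}$ into $k$ column-blocks $\mathbf{X}_1,\ldots,\mathbf{X}_k\in\mathbb{F}^{m\times m/k}$ and encodes them as the first $k$ coefficients of a polynomial
\[
\mathbf{F}_{\mathbf{X}}(x)=\sum_{i=1}^{k}\mathbf{X}_{i}\, x^{i-1}+\sum_{\ell=1}^{t-1}\bar{\mathbf{X}}_{\ell}\, x^{k+\ell-1},
\]
whose remaining $t-1$ coefficients are independent uniform random matrices. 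Worker $n$ receives $\mathbf{F}_{\mathbf{X}}(\alpha_{n})$ at distinct points $\alpha_{n}\in\mathbb{F}$. A Lagrange-interpolation argument shows that $k+t-1$ evaluations recover $\mathbf{X}$, while any $t-1$ evaluations are distributed independently of $\mathbf{X}$, yielding privacy against any $t-1$ colluding workers.

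The next step is to design, for each of the three basic matrix operations — addition, transposition, and multiplication — a procedure that takes polynomial shares of the inputs and produces polynomial shares of the output in the same low-degree format, so that the procedures can be composed to realize any polynomial $\mathbf{G}$. Addition is a purely local operation: worker $n$ simply adds its two shares. Transposition requires a short re-sharing step, because the column-block structure of $\mathbf{F}_{\mathbf{X}}$ does not match the row-block structure of $\mathbf{X}^{T}$; each worker re-shares its held block and a Lagrange combination of the received shares yields polynomial shares of $\mathbf{X}^{T}$. The multiplication procedure is the technical heart: worker $n$ locally computes $\mathbf{H}(\alpha_{n})\triangleq\mathbf{F}_{\mathbf{A}}(\alpha_{n})^{T}\mathbf{F}_{\mathbf{B}}(\alpha_{n})$, an evaluation of a product polynomial $\mathbf{H}(x)$ whose coefficients at $k^{2}$ carefully chosen powers of $x$ are exactly the block products $\mathbf{A}_{i}^{T}\mathbf{B}_{j}$ that assemble into $\mathbf{A}^{T}\mathbf{B}$. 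Provided the encoding places data and randomness at degrees that keep these $k^{2}$ products at distinct powers, any $\deg(\mathbf{H})+1$ evaluations of $\mathbf{H}$ recover all of them by Lagrange interpolation. Each worker then polynomially re-shares its own $\mathbf{H}(\alpha_{n})$ with fresh randomness, and a BGW-style Lagrange combination of the resulting shares produces a fresh polynomial share of $\mathbf{A}^{T}\mathbf{B}$ at the original low degree $k+t-2$, ready for further composition.

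The two expressions in the minimum of the theorem correspond to two placements of data and randomness inside $\mathbf{F}_{\mathbf{A}}$ and $\mathbf{F}_{\mathbf{B}}$. The first bound, $2k^{2}+2t-3$, arises from a symmetric polynomial-code construction in which both $\mathbf{F}_{\mathbf{A}}$ and $\mathbf{F}_{\mathbf{B}}$ have degree $k^{2}+t-2$, with the $t-1$ random pad monomials pushed beyond the last data monomial, so $\deg(\mathbf{H})=2k^{2}+2t-4$ and $2k^{2}+2t-3$ evaluations suffice. The second bound, $k^{2}+kt+t-2$, arises from an asymmetric construction that interleaves the random padding with the data monomials (in the spirit of what later became the GASP-Big code), reducing $\deg(\mathbf{H})$ to $k^{2}+kt+t-3$. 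The crossover between the two bounds occurs at $k=t$, exactly as stated in the theorem. Since addition, transposition, and multiplication all emit polynomial shares in the same format, any polynomial $\mathbf{G}$ can be computed by parsing it into a circuit of these operations and running the procedures iteratively; the overall worker count is dictated by the most demanding procedure, namely multiplication.

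The main obstacle I anticipate is proving the two privacy conditions \eqref{privacy for workers} and \eqref{privacy for master} inductively through the composition. The goal is to show that the fresh randomness injected at each re-sharing step simultaneously (i) hides every intermediate result from any set of $t-1$ colluding workers and (ii) restricts the master's posterior view to that of $\mathbf{Y}$ alone, despite the master receiving evaluations of many high-degree polynomials over the course of the protocol. I would handle this by induction on circuit depth, carrying as invariant that after every procedure the joint view of any $t-1$ workers is statistically independent of all source inputs, and the view of the master, marginalized over the randomness not yet revealed, is a deterministic function of $\mathbf{Y}$.
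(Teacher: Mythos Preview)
Your overall architecture matches the paper closely: polynomial sharing as a block-coded extension of Shamir, local procedures for addition/transposition/multiplication that output shares in the same format, BGW-style re-sharing with fresh randomness after each multiplication, and composition along an arithmetic circuit. The privacy argument you sketch (induction on circuit depth with fresh randomness at each re-share) is also essentially what the paper does.

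There is, however, a genuine gap in how you obtain the two terms of the minimum. You describe them as coming from two \emph{different} encodings --- a ``symmetric'' one with $\deg(\mathbf{H})=2k^{2}+2t-4$ and an ``asymmetric'' one that ``interleaves the random padding with the data monomials, reducing $\deg(\mathbf{H})$ to $k^{2}+kt+t-3$.'' That is not what happens in the paper, and you do not actually exhibit such an asymmetric construction. The paper uses a \emph{single} pair of sharing polynomials,
\[
\mathbf{F}_{\mathbf{A},1,t,k}(x)=\sum_{j=1}^{k}\mathbf{A}_j x^{j-1}+\sum_{j=1}^{t-1}\mathbf{R}_j x^{k^{2}+j-1},\qquad
\mathbf{F}_{\mathbf{B},k,t,k}(x)=\sum_{j=1}^{k}\mathbf{B}_j x^{k(j-1)}+\sum_{j=1}^{t-1}\hat{\mathbf{R}}_j x^{k^{2}+j-1},
\]
so that the data blocks sit at degrees $0,\dots,k-1$ and $0,k,\dots,k(k-1)$ respectively, while the randomness in \emph{both} sits at degrees $k^{2},\dots,k^{2}+t-2$. (Note the randomness starts at $k^{2}$, not at $k$ as in the polynomial you wrote down; with your exponents the $k^{2}$ products $\mathbf{A}_i^{T}\mathbf{B}_j$ would collide.) The product $\mathbf{H}(x)$ always has degree $2k^{2}+2t-4$. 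The two cases of the theorem come from a \emph{counting lemma}: the number of monomials in $\mathbf{H}(x)$ with a nonzero coefficient is $2k^{2}+2t-3$ when $k<t$ and $k^{2}+kt+t-2$ when $k\ge t$. In the latter case one must solve a \emph{generalized Vandermonde} system supported only on the nonzero exponent set, and the paper invokes Schwartz--Zippel over a large enough field to guarantee invertibility. Your proposal replaces this entire argument with an unspecified second construction and ordinary Lagrange interpolation, which leaves the bound $k^{2}+kt+t-2$ unproved.
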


\begin{remark} 
To prove Theorem \ref{remark}, we propose a scheme which is based on a novel approach for sharing called \emph{polynomial sharing} and some procedures to calculate basic functions such as addition and multiplication. Using these procedures iteratively, we can calculate any polynomial function.
\end{remark}

\begin{remark}  
Recall that concatenation of job splitting and MPC for multiplication needs $k^2(2t-1)$ workers. However in the proposed scheme we can do multiplication with at most $\min\{2k^2 + 2t -3, k^2 + kt + t -2\}$ workers, which is an orderwise improvement. For example for $t = 200$ and $k = 16$, the proposed scheme needs $N = 909$ workers, while the job splitting approach needs $N = 102144$ workers. 
\end{remark} 

\begin{remark}
		Theorem \ref{remark} is about the cases where there is at least one matrix multiplication in the calculation of the function $\mathbf{G}$. If $\mathbf{G}$ is a linear function of the inputs, then the proposed scheme needs $k+t-1$ workers.
	\end{remark}
	
	\begin{remark}
		As mentioned in the introduction, secure matrix multiplication, which has been introduced in~\cite{chang2018capacity},  concurrently by the conference paper of this manuscript~\cite{Multi-Party}, focuses on calculating the multiplication of two matrices. Ignoring the privacy constraint at the master, secure matrix multiplication can be considered as a special case of our proposed problem formulation. This line of work has been followed by \cite{KES19, d2020gasp} to improve its efficiency. 
		The number of workers needed by the scheme of \cite{chang2018capacity} is equal to $(k+t-1)^2$, which is outperformed by our proposed scheme~\eqref{remark}. The number of workers needed by the follow-up paper~\cite{KES19} is equal to $k^2+tk+t-2$,  which is again outperformed by the proposed scheme~\eqref{remark}.
		Reference \cite{d2020gasp} reports two schemes, named as  GAPS-Big and GAPS-Small.
		Indeed, GAPS-Big is the same as what we had already reported in~\cite{Multi-Party}.  The number of servers needed by GAPS-Small is smaller than what we report in this paper, for the case where $3 \leq t < k$. The results are summarized in Table~\ref{table:N}.
	\end{remark}
	
	\begin{table}[ht]
		\caption{The number of workers needed for Matrix Multiplication (SMM)}
		\centering
		\begin{tabular}{l | l | l }
			\hline\hline
			Work & Date & Number of Workers Needed \\ [0.5ex] 
			\hline
			Our Scheme (Nodehi-MaddahAli)~\cite{Multi-Party}  & June 2018 &   
			$N= 	
			\begin{cases}
			2k^2 + 2t - 3 	&\quad\text{if} \  k < t\\
			k^2 + kt + t -2 &\quad\text{if} \ k \geq t
			\end{cases}$ \\
			\hline
			SMM (Chang-Tandon~\cite{chang2018capacity}) &   June 2018 & $N= (k+t-1)^2$ 
			\\
			\hline
			SMM (Kakar et al.~\cite{KES19})  &   Oct. 2018 & $N=k^2+tk+t-2$
			\\
			\hline 
			GASP-Big (D' Oliveira et. al.  ~\cite{d2020gasp})  &   Dec. 2018 & $N= 	
			\begin{cases}
			2k^2 + 2t - 3 	&\quad\text{if} \ k < t\\
			k^2 + kt + t -2 &\quad\text{if}\ k \geq t
			\end{cases}$ 
			\\
			GASP-Small (D' Oliveira et. al.~\cite{d2020gasp})  &   Dec. 2018 & $N= 	
			\begin{cases}
			k^2 + 2k 	&\quad\text{if} \ 2 = t \leq k\\
			k^2 + 2k + (t-1)^2 +t-4 &\quad\text{if}\ 3 \leq t \leq k \\
			k^2 + kt + 2t-5-\lfloor{\frac{t-3}{k}}{\rfloor} 	&\quad\text{if} \ k < t \leq k(k-1)+2\\
			2k^2 + kt + t -2k-1 &\quad\text{if}\ k(k-1)+2 < t 
			\end{cases}$ 
			\\
			\hline
		\end{tabular}
		\label{table:N}
	\end{table}

\section{Motivating example} \label{Motivating Examples for proposed scheme}
 Here, we revisit Example~\ref{eg:problem USDMPC1} with $k=2$ and $t=4$ and propose a solution that needs 13 workers, as compared to $4 \times 7 = 28 $ workers in the solution of Example~\ref{eg:problem USDMPC1}.

 \begin{example}[\emph{The proposed Approach for $k = 2$ and $t = 4$}]
 	\label{eg:problem CSDMPC1}
 	In this example we explain the proposed scheme to securely calculate $\mathbf{L} = \mathbf{A}^T\mathbf{B}$. 
	\begin{enumerate}
		\item \emph{Phase 1 - Sharing:}

 	Consider the following \emph{polynomial functions}
 	\begin{align}
 	\mathbf{F}_A(x) &= \mathbf{A}_1 + \mathbf{A}_2x + 
 	\mathbf{\bar{A}}_3x^4 + 
 	\mathbf{\bar{A}}_4x^5 + 
 	\mathbf{\bar{A}}_5x^6, \nonumber\\
 	\mathbf{F}_{\mathbf{B}}(x) & = \mathbf{B}_1 + \mathbf{B}_2x^2 + 
 	\mathbf{\bar{B}}_3x^4 + \mathbf{\bar{B}}_4x^5 + \mathbf{\bar{B}}_5x^6, \nonumber
 	\end{align}
 	where $\mathbf{A}_1, \mathbf{A}_2, \mathbf{B}_1$, and  $ \mathbf{B}_2$ are defined in \eqref{eq:matrixpartition1} and \eqref{eq:matrixpartition2} in Example \ref{eg:problem USDMPC1}. In addition, $\mathbf{\bar{A}}_i, \mathbf{\bar{B}}_i$, for $i\in[3,5]$, are  matrices chosen independently and uniformly at random in $\mathbb{F}^{m \times \frac{m}{2}}$. These polynomials follow a certain pattern. More precisely, in these polynomials the coefficients of some powers of $x$ are zero. In $\mathbf{F}_{\mathbf{A}}(x)$ the coefficients of $x^2$ and $x^3$ are zero and in $\mathbf{F}_{\mathbf{B}}(x)$ the coefficients of $x$ and $x^3$ are zero. We choose $\alpha_1, \alpha_2, \dots, \alpha_N \in \mathbb{F}$, independently and uniformly at random. Source nodes $1$ and $2$ share $\mathbf{F}_{\mathbf{A}}(\alpha_n)$ and $\mathbf{F}_{\mathbf{B}}(\alpha_n)$ with worker $n$, respectively.
	We call this form of sharing as \emph{polynomial sharing}. 
 
 		\item \emph{Phase 2 - Computation and Communication:}
		
 	Worker $n$ calculates $\mathbf{F}_{A}^T(\alpha_n)\mathbf{F}_{\mathbf{B}}(\alpha_n)$. Consider the polynomial function $\mathbf{H}(x)$ of degree $12$, defined as 
 	\begin{align}\label{zaribe h}
 	\mathbf{H}(x) = \sum_{n=0}^{12} \mathbf{H_n}x^n \defeq \mathbf{F}_{A}^T(x)\mathbf{F}_{\mathbf{B}}(x).
 	\end{align}
	We note that
	\begin{align}\label{coefficients of h}	
 	\mathbf{H}_0 =\mathbf{A}_{1}^T\mathbf{B}_1,\nonumber\\
 	\mathbf{H}_1 =\mathbf{A}_{2}^T\mathbf{B}_1, \nonumber\\
 	\mathbf{H}_2 =\mathbf{A}_{1}^T\mathbf{B}_2,\nonumber\\
 	\mathbf{H}_3 =\mathbf{A}_{2}^T\mathbf{B}_2. 
 	\end{align}
If the master has $\mathbf{H}(\alpha_n)$ for $N \geq 13$ distinct $\alpha_i$'s, then it can calculate all the coefficients of $\mathbf{H}(x)$, including $\mathbf{H}_0 =\mathbf{A}_{1}^T\mathbf{B}_1$,
  	$\mathbf{H}_1 =\mathbf{A}_{2}^T\mathbf{B}_1$, 
 	$\mathbf{H}_2 =\mathbf{A}_{1}^T\mathbf{B}_2$, and
	 $\mathbf{H}_3 =\mathbf{A}_{2}^T\mathbf{B}_2$,  with probability approaching to one, as $|\mathbb{F}| \rightarrow \infty$. More precisely, according to Lagrange interpolation rule \cite{bakhvalov1977numerical}, there are some  $r^{(i,j)}_n$, $i,j \in \{1,2\}$ and $n\in [N]$, such that
	\begin{align}
	\label{eq:r}
	\mathbf{A}_{i}^T\mathbf{B}_j= \sum_{n=1}^N r^{(i,j)}_n\mathbf{H}(\alpha_n). 
	 \end{align}
	 \begin{remark}
	 Note that $r^{(i,j)}_n$, $i,j \in \{1,2\}$ and $n\in [N]$ are only functions of $\alpha_n$, $ n \in [N]$, which are known by all workers. 
	\end{remark}
 	
Similar to Example \ref{mesale2} and following Remark \ref{remarkfor Shamir}, in order to prepare the scene for the next stage and cast the result of the local computation in the form of the \emph{polynomial sharing}, we use the following steps. Worker $n$ forms  $\mathbf{Q}^{(n)}(x)$, defined as 
 	\begin{align}\label{tarife q koochik}
 	\mathbf{Q}^{(n)}(x) &\defeq 
 	\begin{bmatrix}
 	r^{(1,1)}_n\mathbf{H}(\alpha_n)\\
 	r^{(2,1)}_n\mathbf{H}(\alpha_n)
 	\end{bmatrix} + 
 	\begin{bmatrix}
 	r^{(1,2)}_n\mathbf{H}(\alpha_n)\\
 	r^{(2,2)}_n\mathbf{H}(\alpha_n)
 	\end{bmatrix}x^2  \nonumber \\
 	&+ \mathbf{R}^{(n)}_{0}x^4  + \mathbf{R}^{(n)}_{1}x^5 + \mathbf{R}^{(n)}_{2}x^6,
 	\end{align}
 	where $\mathbf{R}^{(n)}_{i}, i\in \{0,1,2\}$, are chosen independently and uniformly at random in $\mathbb{F}^{m \times \frac{m}{2}}$. Recall that $\mathbf{H}(\alpha_n), r^{(1,1)}_n, r^{(1,2)}_n, r^{(2,1)}_n ~\text{and}~ r^{(2,2)}_n$ are available at worker $n$. Thus, worker $n$ has all the information to form $\mathbf{Q}^{(n)}(x)$. Then worker $n$ sends 	$\mathbf{Q}^{(n)}(\alpha_{n'})$ to worker ${n'}$, for all $n' \in [N]$.

Thus, worker $n'$ will have access to the matrices $\{\mathbf{Q}^{(1)}(\alpha_{n'}), \mathbf{Q}^{(2)}(\alpha_{n'}), \dots, \mathbf{Q}^{(N)}(\alpha_{n'})\}$. Then worker $n'$ calculates $\sum_{n=1}^{13}\mathbf{Q}^{(n)}(\alpha_{n'})$.  We claim that this summation is indeed the polynomial share of the matrix $\mathbf{A}^T\mathbf{B}$. To verify that, consider the  polynomial function
 	\begin{align}\label{tarife Q bozorg}
 	\mathbf{Q}(x) \defeq 
 	\sum_{n=1}^{N}\mathbf{Q}^{(n)}(x).
 	\end{align}	

	We note that 
	\begin{align}
 	\mathbf{Q}(x) & \defeq \nonumber
 	\sum_{n=1}^{N}\mathbf{Q}^{(n)}(x) \\ \nonumber
	& =
	\sum_{n=1}^{N}
	\begin{bmatrix}
 	r^{(1,1)}_n\mathbf{H}(\alpha_n)\\
 	r^{(2,1)}_n\mathbf{H}(\alpha_n)
 	\end{bmatrix} + 
	x^2 \sum_{n=1}^{N}
 	\begin{bmatrix}
 	r^{(1,2)}_n\mathbf{H}(\alpha_n)\\
 	r^{(2,2)}_n\mathbf{H}(\alpha_n)
 	\end{bmatrix} 
 	+ x^4 \sum_{n=1}^{N} \mathbf{R}^{(n)}_{0} + x^5 \sum_{n=1}^{N}\mathbf{R}^{(n)}_{1} + x^6 \sum_{n=1}^{N}\mathbf{R}^{(n)}_{2} \\
	\nonumber
	& \overset{(a)} {=}
	\begin{bmatrix}
 	\mathbf{A}_{1}^T\mathbf{B}_1\\
 	\mathbf{A}_{2}^T\mathbf{B}_1
 	\end{bmatrix} + 
	x^2 
 	\begin{bmatrix}
 	\mathbf{A}_{1}^T\mathbf{B}_2\\
 	\mathbf{A}_{2}^T\mathbf{B}_2
 	\end{bmatrix} 
 	+ x^4 \sum_{n=1}^{N} \mathbf{R}^{(n)}_{0} + x^5 \sum_{n=1}^{N}\mathbf{R}^{(n)}_{1} + x^6 \sum_{n=1}^{N}\mathbf{R}^{(n)}_{2} \\
	& =\mathbf{L}_1 + \mathbf{L}_2x^2 + \mathbf{\bar{L}}_3x^4 + \mathbf{\bar{L}}_4x^5 + \mathbf{\bar{L}}_5x^6, 
 	\end{align}
 	where (a) follows from~\eqref{eq:r} and $\mathbf{L}_1 \defeq  \begin{bmatrix} \mathbf{A}_{1}^T\mathbf{B}_1 \\\mathbf{A}_{2}^T\mathbf{B}_1\end{bmatrix}, \mathbf{L}_2 \defeq \begin{bmatrix} \mathbf{A}_{1}^T\mathbf{B}_2 \\\mathbf{A}_{2}^T\mathbf{B}_2\end{bmatrix}$, 
	$\mathbf{\bar{L}}_3\defeq \sum_{n=1}^{N} \mathbf{R}^{(n)}_{0}$,   $\mathbf{\bar{L}}_4\defeq \sum_{n=1}^{N}\mathbf{R}^{(n)}_{1}$, and  $\mathbf{\bar{L}}_5\defeq \sum_{n=1}^{N}\mathbf{R}^{(n)}_{2}$. Note that,  $\mathbf{\bar{L}}_i, i \in \{3,4,5\}$, have independent and  uniform distribution in $\mathbb{F}^{m \times \frac{m}{2}}$.
 	Because of the randomness of $\mathbf{R}^{(n)}_{i}(x)$, for $i \in \{0,1,2\}$, $n \in [N]$, and degree of $\mathbf{Q}(x)$, one can easily verify that $\mathbf{Q}(x)$ is in the form of polynomial sharing and worker $n$ has access to $\mathbf{Q}(\alpha_n)$.

 		\item	\emph{Phase3 - Reconstruction:}

 	Worker $n$ sends  $\mathbf{Q}(\alpha_{n})$ to the master, $n \in [N]$. The master then calculates $\mathbf{A}^T\mathbf{B}$ from what it receives by polynomial interpolation. 
	\end{enumerate}
		
 \end{example}
 
This algorithm guarantees conditions \eqref{privacy for workers}, the privacy at the workers. An intuitive explanation is that in each interaction among the workers, the algorithm  uses some independent random matrices, as the coefficients of the sharing polynomials,  such that any subset,  including $3$ workers, cannot gain any information about the input data. This will be proven formally later in Appendix \ref{Security analysis} for the general cases. Also there is no information leakage at the master, as required according to the privacy constraints \eqref{privacy for master}, which will be formally proven in Appendix \ref{Security analysis}.

	\begin{example}
		\label{example:AB+C}
		In this example, we aim to show how to use a new approach to securely calculate $\mathbf{M} = \mathbf{A}^T\mathbf{B} + \mathbf{C}$, for $k = 2$ and $t = 4$.
		\begin{enumerate}
			\item \emph{Phase 1 - Sharing:}
			
			Consider the following \emph{polynomial functions}
			\begin{align}
			\mathbf{F}_{\mathbf{A}}(x) &= \mathbf{A}_1 + \mathbf{A}_2x + \mathbf{\bar{A}}_3x^4 + \mathbf{\bar{A}}_4x^5 + \mathbf{\bar{A}}_5x^6, \nonumber\\
			\mathbf{F}_{\mathbf{B}}(x) & = \mathbf{B}_1 + \mathbf{B}_2x^2 + \mathbf{\bar{B}}_3x^4 + \mathbf{B}_4x^5 + \mathbf{\bar{B}}_5x^6, \nonumber \\
			\mathbf{F}_{\mathbf{C}}(x) & = \mathbf{C}_1 + \mathbf{C}_2x^2 + \mathbf{\bar{C}}_3x^4 + \mathbf{\bar{C}}_4x^5 + \mathbf{\bar{C}}_5x^6, \nonumber
			\end{align}
			where in the above equations, $\mathbf{A}_1, \mathbf{A}_2, \mathbf{B}_1, \mathbf{B}_2, \mathbf{C}_1$, and $\mathbf{C}_2$ are defined similar to \eqref{eq:matrixpartition1}. In addition, $\mathbf{\bar{A}}_i, \mathbf{\bar{B}}_i, \mathbf{\bar{C}}_i$, for $i\in[3,5]$, are  matrices chosen independently and uniformly at random in $\mathbb{F}^{m \times \frac{m}{2}}$.
			Similar to Example \ref{eg:problem CSDMPC1}, source nodes $1$, $2$, and $3$ share $\mathbf{F}_{\mathbf{A}}(\alpha_n)$, $\mathbf{F}_{\mathbf{B}}(\alpha_n)$, and $\mathbf{F}_{\mathbf{C}}(\alpha_n)$ with worker $n$.
			
			\item \emph{Phase 2 - Computation and Communication:}
			In this phase, first, all of the workers follow phase 2 of Example \ref{eg:problem CSDMPC1}, in order to have access to $\mathbf{Q}(\alpha_n)$ defined in \eqref{tarife Q bozorg}, as the polynomial shares of $\mathbf{A}^T\mathbf{B}$. Then each worker calculates $\mathbf{O}_n = \mathbf{Q}(\alpha_n) + \mathbf{F}_{\mathbf{C}}(\alpha_n)$. 
			
			\item	\emph{Phase3 - Reconstruction:}
			
			Worker $n$  sends  $\mathbf{O}_n$ to the master, $n \in [N]$. The master then calculates $\mathbf{A}^T\mathbf{B} + \mathbf{C}$ from what it receives by polynomial interpolation.
			
		\end{enumerate}
		
	\end{example}


\begin{remark}
		We would like to emphasize that recasting the result of each round of computing in the form of a polynomial-sharing  is a very crucial step for us and has several important advantages:  
		\begin{enumerate}
			\item This approach gives us great flexibility to use the procedures iteratively and calculate any polynomial function of the inputs. Let us assume that there are three inputs $\mathbf{A}, \mathbf{B}$, and $\mathbf{C}$, and the goal is to calculate $\mathbf{C}^T\mathbf{A}^T\mathbf{B}$. First, we use the multiplication procedure to have the shares of $\mathbf{D} = \mathbf{A}^T\mathbf{B}$. At the end of the multiplication procedure, the workers have the polynomial shares of $\mathbf{D}$ (instead of the multiplications of the shares of $\mathbf{A}$ and $\mathbf{B}$). The workers are also given the polynomial shares of  $\mathbf{C}$.  Thus we can use the multiplication procedure again to calculate $\mathbf{C}^T\mathbf{D}$. Similarly, if the goal is to calculate $\mathbf{A}^T\mathbf{B}+\mathbf{C}$ (see Example~\ref{example:AB+C}), we use the multiplication procedure to have the polynomial shares of $\mathbf{D} = \mathbf{A}^T\mathbf{B}$ at each node, and use the addition procedure to calculate $\mathbf{D}+\mathbf{C}$.  
			
			\item It allows us to develop a scheme, where the number of workers needed does NOT grow with the degree of the polynomial function $\mathbf{G}(.)$. Again assume that the goal is to calculate  $\mathbf{C}^T\mathbf{A}^T\mathbf{B}$. Reviewing the argument above, one can confirm that the number of workers needed to calculate $\mathbf{C}^T\mathbf{A}^T\mathbf{B}$ is the same as the number of workers needed to calculate $\mathbf{A}^T\mathbf{B}$.

			\item It simplifies the proof of privacy, considering the fact that we may have several rounds of computation.  Reviewing the proof of privacy in Appendix \ref{Security analysis} of the paper, we observe that recasting the result of each round in the form of polynomial sharing helps us develop an iterative argument and establish that the information leakage is zero. 
			Without that, it would be very hard to prove the privacy, considering all the interaction among the workers to keep the number of workers constant. 
		\end{enumerate}
	\end{remark}

\section{Polynomial Sharing}
\label{Polynomial Sharing}
In this section, we formally define a sharing scheme, called \emph{polynomial sharing}, and in the next section, we show that it admits basic operations such as addition, multiplication of matrices, and transposing a matrix. By concatenating the procedures for basic operations, we show that any polynomial function of the input data can be calculated,  subject to the problem constraints. This scheme is motivated by \cite{yu2017polynomial}, which is a coding technique for matrix multiplication in the distributed system with stragglers.
\begin{definition}
		Let $\mathbf{A} \in \mathbb{F}^{m \times m}$, partitioned as 
		\begin{align}\label{tarifeA}
		\mathbf{A} \defeq \begin{bmatrix}
		\mathbf{A}_{1}, & \mathbf{A}_{2}, & \dots, & \mathbf{A}_{k}
		\end{bmatrix},
		\end{align}
		where $\mathbf{A}_j \in \mathbb{F}^{m \times \frac{m}{k}} $, for some $k \in \mathbb{N}$, $k|m$, and $j \in [k]$. The polynomial function $\mathbf{F}_{\Ab,b,t,k}(x)$, for some $b \in [k]$, is defined as
		\begin{align}\label{tarife raveshe sharing}
		\mathbf{F}_{\Ab,b,t,k}(x) \defeq \sum_{j=1}^{k}\mathbf{A}_jx^{b(j-1)} + \sum_{j=1}^{t-1} \mathbf{R}_{n}x^{k^2+j-1},
		\end{align}
		where $\mathbf{A}_j, j = 1, 2, \dots, k$, are defined in (\ref{tarifeA}) and  $\mathbf{R}_j, j= 1, 2, \dots, t-1$, are chosen independently and uniformly at random from $\mathbb{F}^{m \times \frac{m}{k}}$. 
		We say that matrix $\Ab$ is $(b,t,k)$ polynomial-shared with workers in $[N]$, if $\mathbf{F}_{\Ab,b,t,k}(\alpha_n)$ is sent to worker $n$, where $\alpha_n \in \F$ are distinct constants assigned to worker $n$, $n\in [N]$, where $k+t-1 < N$. 
	\end{definition}

\begin{remark}
	Note that for $b = 1$ and $k = 1$, the polynomial sharing is reduced to Shamir secret sharing.
\end{remark}

The following theorem about polynomial sharing will be essential in the following sections.

\begin{theorem} \label{qazie sharing}
	Let $t,k,m \in \mathbb{N}$, $N \geq \min\{ 2k^2 + 2t -3, k^2 + tk + t -2\}$, $k |m$, and $\mathbf{A},\mathbf{B} \in \mathbb{F} ^ {m \times m}$. Define 
	\begin{align*}
		\mathbf{H}(x) \defeq \mathbf{F}_{\Ab,1,t,k}^T(x)\mathbf{F}_{\Bb,k,t,k}(x).
	\end{align*}
	Then for some large enough  $|\mathbb{F}|$, there exist distinct $\alpha_1, \alpha_2, \dots, \alpha_N \in \mathbb{F}$, such that for any distinct $i_1, i_2, \dots, i_{k+t-1} \in [N]$, we have
	\begin{align}
	H(\mathbf{A}|\mathbf{F}_{\Ab,1,t,k}(\alpha_{i_1}), \mathbf{F}_{\Ab,1,t,k}(\alpha_{i_2}), \dots, \mathbf{F}_{\Ab,1,t,k}(\alpha_{i_{k+t-1}})) 
	& = 0, \label{sharte1}\\
	H(\mathbf{B}|\mathbf{F}_{\Bb,k,t,k}(\alpha_{i_1}), \mathbf{F}_{\Bb,k,t,k}(\alpha_{i_2}), \dots, \mathbf{F}_{\Bb,k,t,k}(\alpha_{i_{k+t-1}})) 
	& = 0, \label{sharte2}
	\end{align}
	and
	\begin{align}	
	H(\mathbf{A}^T\mathbf{B}|\mathbf{H}(\alpha_1), \mathbf{H}(\alpha_2), \dots, \mathbf{H}(\alpha_N)) = 0 \label{sharte3}.
	\end{align}
	In addition, if we choose $\alpha_1, \alpha_2, \dots, \alpha_N$, independently and uniformly at random in $\mathbb{F}$, the probability that (\ref{sharte1}), (\ref{sharte2}), and (\ref{sharte3}) hold, approaches to one, as $|\mathbb{F}| \rightarrow \infty$.
\end{theorem}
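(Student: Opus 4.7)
The three entropic equalities all reduce to showing that appropriate \emph{generalized Vandermonde matrices} are invertible for a suitable choice of the evaluation points $\alpha_1,\dots,\alpha_N$. The high level plan is: (i)~rewrite each recovery problem as a linear system whose coefficient matrix is a generalized Vandermonde in the $\alpha_n$'s; (ii)~exhibit explicit evaluation points (or invoke the Schwartz--Zippel lemma) showing that the relevant determinant, viewed as a polynomial in $(\alpha_1,\dots,\alpha_N)$, is not identically zero; (iii)~conclude the existence claim for large enough $|\mathbb{F}|$ and the probabilistic claim for uniformly random $\alpha_n$.

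For \eqref{sharte1}, I would observe that $\mathbf{F}_{\Ab,1,t,k}(x)$ is a polynomial with exactly $k+t-1$ matrix coefficients, supported on the distinct exponent set $E_A \defeq \{0,1,\dots,k-1,k^2,k^2+1,\dots,k^2+t-2\}$. Given any $k+t-1$ distinct evaluation points $\alpha_{i_1},\dots,\alpha_{i_{k+t-1}}$, the map from the vector of unknown matrix coefficients $(\mathbf{A}_1,\dots,\mathbf{A}_k,\mathbf{R}_1,\dots,\mathbf{R}_{t-1})$ to the evaluations is given by a generalized Vandermonde matrix $V\in\mathbb{F}^{(k+t-1)\times(k+t-1)}$ with $V_{\ell,m}=\alpha_{i_\ell}^{e_m}$, $e_m\in E_A$. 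Because the exponents are distinct, $\det V$ is a non-zero polynomial in the $\alpha_{i_\ell}$'s (it is a Schur polynomial times a standard Vandermonde up to sign, hence not identically zero), so for generic $\alpha$'s the matrix is invertible and $\mathbf{A}_1,\dots,\mathbf{A}_k$ can be recovered, giving $H(\mathbf{A}\mid\cdot)=0$. The argument for \eqref{sharte2} is identical with exponent set $E_B \defeq \{0,k,2k,\dots,(k-1)k,k^2,\dots,k^2+t-2\}$.

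For \eqref{sharte3}, the product polynomial $\mathbf{H}(x)=\mathbf{F}_{\Ab,1,t,k}^T(x)\mathbf{F}_{\Bb,k,t,k}(x)$ is supported on the sumset $\mathcal{S}\defeq E_A+E_B$, and its unknown matrix coefficients are determined by the coefficients of $\mathbf{F}_{\Ab,1,t,k}$ and $\mathbf{F}_{\Bb,k,t,k}$. The crucial structural observation is that for $j,\ell\in[k]$ the exponents $(j-1)+k(\ell-1)$ are the $k^2$ distinct integers in $\{0,1,\dots,k^2-1\}$, and these lie strictly below all other terms of $\mathcal{S}$; consequently the coefficient of $x^{(j-1)+k(\ell-1)}$ in $\mathbf{H}(x)$ equals $\mathbf{A}_j^T\mathbf{B}_\ell$ without interference. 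Hence recovering every coefficient of $\mathbf{H}(x)$ from the $N$ samples $\mathbf{H}(\alpha_n)$ yields every block of $\mathbf{A}^T\mathbf{B}$. To perform this recovery it suffices to have $N\ge|\mathcal{S}|$ together with non-singularity of the generalized Vandermonde with exponents from $\mathcal{S}$, which again follows from a Schwartz--Zippel-type argument.

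The main technical obstacle will be to prove the counting identity
\begin{align}
|\mathcal{S}|\;=\;\min\{2k^2+2t-3,\; k^2+kt+t-2\}, \nonumber
\end{align}
and this will require a careful case analysis on whether $k\ge t$ or $k<t$. I would partition $\mathcal{S}$ into four blocks according to which part of $E_A$ and $E_B$ the summands come from (the data$\times$data block $\{0,\dots,k^2-1\}$, the two mixed data$\times$random blocks, and the random$\times$random block $\{2k^2,\dots,2k^2+2t-4\}$) and then compute overlaps. When $k\ge t$ the mixed blocks consolidate via the identity $|T_2\cup T_3|=t(k-1)+1$ sketched by examining residues modulo $k$, yielding $k^2+kt+t-2$; when $k<t$ the sumset fills every integer in $[0,2k^2+2t-4]$, giving $2k^2+2t-3$. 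Once $|\mathcal{S}|$ is computed, non-singularity of the corresponding generalized Vandermonde is again a non-identically-zero polynomial identity in the $\alpha_n$'s. Combining this with Schwartz--Zippel over a field of size $|\mathbb{F}|\to\infty$ yields both the existence statement and the high-probability statement for a uniformly random choice of $\alpha_1,\dots,\alpha_N$.
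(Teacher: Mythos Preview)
Your proposal is correct and follows essentially the same route as the paper: reduce each of \eqref{sharte1}--\eqref{sharte3} to invertibility of a generalized Vandermonde matrix on the relevant exponent set, count the support of $\mathbf{H}(x)$ via the four-block decomposition (data$\times$data, the two mixed blocks, random$\times$random) with a case split on $k<t$ versus $k\ge t$, and then invoke Schwartz--Zippel for both the existence and the high-probability statements. The paper carries out exactly this plan, so your outline would translate directly into its proof.
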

\begin{proof}
	See Appendix \ref{Appendice A}.
\end{proof}

\section{Procedures}
\label{Procedures}
In this section, we explain several procedures to do basic operations such as addition, multiplication, and transposing, using polynomial sharing without leaking any information. By concatenating these procedures, we can calculate any polynomial function of the private input data, subject to the constraints \eqref{correctness}, \eqref{privacy for workers}, and \eqref{privacy for master}.
\subsection{Addition}
	Let $\mathbf{A}, \mathbf{B} \in \mathbb{F}^ {m \times m}$, and 
	\begin{align}
		\mathbf{A}=&
		\begin{bmatrix} \mathbf{A}_{1}, & \mathbf{A}_{2}, & \dots, & \mathbf{A}_{k}
		\end{bmatrix}, \nonumber \\ 
		\mathbf{B}=& 
		\begin{bmatrix} \mathbf{B}_{1}, & \mathbf{B}_{2}, & \dots, & \mathbf{B}_{k}
		\end{bmatrix}	\nonumber ,
	\end{align}
	where $\mathbf{A}_i, \mathbf{B}_i \in \mathbb{F}^{m \times \frac{m}{k}}$, for $i \in [k]$ and $k|m$. In addition, assume $\Ab$ and $\Bb$ are $(b,t,k)$ polynomial-shared with workers in $[N]$.
	The objective is to $(b,t,k)$ polynomial-share $\Lb \defeq \Ab + \Bb$ with workers in $[N]$. To do that we follow a one-step procedure, as follows:
	
	\begin{enumerate}
		\item \emph{Step 1- Computation:}
		
		In this step, worker $n$ calculates $\mathbf{F}_{\Ab,b,t,k}(\alpha_{n}) +‌\mathbf{F}_{\Bb,b,t,k}(\alpha_{n})$.
		We claim that this summation is indeed the polynomial share of the matrix $\mathbf{A} + \mathbf{B}$. To verify that, consider the  polynomial function
			\begin{align}\label{tarife Q1 bozorg1}
		\mathbf{Q}(x) \defeq 
		\mathbf{F}_{\Ab,b,t,k}(x) +‌\mathbf{F}_{\Bb,b,t,k}(x).
		\end{align}
		 We note that
		\begin{align}
		\mathbf{Q}(x) & = \nonumber
		\mathbf{F}_{\Ab,b,t,k}(x) +‌\mathbf{F}_{\Bb,b,t,k}(x) \\ \nonumber
		& = \sum_{j=1}^{k}\mathbf{A}_jx^{b(j-1)} + \sum_{j=1}^{t-1} \mathbf{\bar{R}}_{j}x^{k^2+j-1} + 
		\sum_{j=1}^{k}\mathbf{B}_jx^{b(j-1)} + \sum_{j=1}^{t-1} \mathbf{\hat{R}}_{j}x^{k^2+j-1} \\ \nonumber
		& = \sum_{j=1}^{k} (\Ab_j + \Bb_j)x^{b(j-1)} +‌\sum_{j=1}^{t-1} (\mathbf{\bar{R}}_{j} + \mathbf{\hat{R}}_{j})x^{k^2+j-1} \\ 
		& \overset{(a)} {=}
		\sum_{j=1}^{k}\mathbf{L}_jx^{b(j-1)} + \sum_{j=1}^{t-1} \mathbf{R}_{j}x^{k^2+j-1}, 
		\end{align}
		where (a) follows from the definitions $\mathbf{L}_j \defeq \mathbf{A}_j + \mathbf{B}_j $ for $j \in [k]$, and  $\mathbf{R}_{j} \defeq \mathbf{\bar{R}}_{j} + \mathbf{\hat{R}}_{j} $, for $j \in [t-1]$. We note that $\Lb = \begin{bmatrix}
		\Lb_1 & \Lb_2 & \dots & \Lb_k
		\end{bmatrix}$. In addition, $\mathbf{R}_{j}, j \in [t-1]$, have independent and  uniform distribution in $\mathbb{F}^{m \times \frac{m}{k}}$. Thus, $\mathbf{Q}(x)$ is in the form of $(b,t,k)$ polynomial sharing of $\Ab + \Bb$ with workers in $[N]$, and worker $n$ has access to $\mathbf{Q}(\alpha_n)$.  This step is detailed in Algorithm~\ref{Addition}. 
	\end{enumerate}
	\begin{algorithm}[!htbp]
		\caption{Addition}
		\label{Addition}
		\begin{algorithmic}[1]
			\Statex
			\textbf{Inputs:}
			Matrices $\mathbf{A}, \mathbf{B} \in \mathbb{F}^ {m \times m}$ which are $(b,t,k)$ polynomial-shared with workers in $[N]$. 
			\State
			Worker $n$ adds its shares of $\Ab$ and $\Bb$.
			\State
			End.
		\end{algorithmic}
	\end{algorithm}

\subsection{Multiplication by Constant}
	Let $\mathbf{A} \in \mathbb{F}^ {m \times m}$, and
\begin{align}
\mathbf{A}=&
\begin{bmatrix} \mathbf{A}_{1}, & \mathbf{A}_{2}, & \dots, & \mathbf{A}_{k}
\end{bmatrix}, \nonumber
\end{align}
for $\mathbf{A}_i \in \mathbb{F}^{m \times \frac{m}{k}}$, $i \in [k]$, and $k|m$. In addition, assume $\mathbf{A}$ is $(b,t,k)$ polynomial-shared with workers in $[N]$.
The objective is to $(b,t,k)$ polynomial-share $\Lb \defeq q\mathbf{A}$ with workers in $[N]$, where $q$ is a constant in $\mathbb{F}$. To do that, we follow a one-step procedure, as follows:

\begin{enumerate}
	\item \emph{Step 1- Computation:}
	
	In this step, worker $n$ calculates $q\mathbf{F}_{\Ab,b,t,k}(\alpha_{n})$.
	We claim that this multiplication is indeed the polynomial share of the matrix $q\mathbf{A}$. To verify that, consider the  polynomial function 
	
	\begin{align}\label{tarife Q11 bozorg1}
	\mathbf{Q}(x) \defeq 
	q\mathbf{F}_{\Ab,b,t,k}(x).
	\end{align}
	 We note that
	\begin{align}
	\mathbf{Q}(x) & = \nonumber
	q\mathbf{F}_{\Ab,b,t,k}(x) \\ \nonumber
	& = q\sum_{j=1}^{k}\mathbf{A}_jx^{b(j-1)} + q\sum_{j=1}^{t-1} \mathbf{\bar{R}}_{j}x^{k^2+j-1} \\ \nonumber
	& = \sum_{j=1}^{k} q\Ab_jx^{b(j-1)} +‌\sum_{j=1}^{t-1} q\mathbf{\bar{R}}_{j}x^{k^2+j-1} \\ 
	& \overset{(a)} {=}
	\sum_{j=1}^{k}\mathbf{L}_jx^{b(j-1)} + \sum_{j=1}^{t-1} \mathbf{R}_{j}x^{k^2+j-1}, 
	\end{align}
	where (a) follows from the definitions  $\mathbf{L}_j \defeq q\mathbf{A}_j$, for $j \in [k]$, and  $\mathbf{R}_{j} \defeq q\mathbf{\bar{R}}_{j}$, for $j \in [t-1]$. 
	We note that $\Lb = \begin{bmatrix}
	\Lb_1 & \Lb_2 & \dots & \Lb_k
	\end{bmatrix}$. In addition,  $\mathbf{R}_{j}, j \in [t-1]$, have independent and  uniform distribution in $\mathbb{F}^{m \times \frac{m}{k}}$. Thus, $\mathbf{Q}(x)$ is in the form of $(b,t,k)$ polynomial sharing of $q\mathbf{A}$ with workers in $[N]$, and worker $n$ has access to $\mathbf{Q}(\alpha_n)$.  This step is detailed in Algorithm~\ref{Multiplication by Constant}. 	
\end{enumerate}

 	\begin{algorithm}[!htbp]
 		\caption{Multiplication by Constant}
 		\label{Multiplication by Constant}
 		\begin{algorithmic}[1]
 			\Statex
 			\textbf{Inputs:}
 			Matrix $\mathbf{A} \in \mathbb{F}^ {m \times m}$ which is $(b,t,k)$ polynomial-shared with workers in $[N]$, and $q \in \mathbb{F}$. 
 			\State
 			Worker $n$ multiplies its share of $\Ab$ by $q$.
 			\State
 			End.
 		\end{algorithmic}
 	\end{algorithm} 	
 	\begin{remark}
 		Polynomial sharing scheme has the linearity property. It means that if matrices $\mathbf{A}$ and $\mathbf{B}$ are $(b,t,k)$ polynomial-shared  with workers in $[N]$, in order to $(b,t,k)$ polynomial-share $\mathbf{L} = q\mathbf{A} + p\mathbf{B}$ with workers in $[N]$, it is enough that each worker just locally calculates the same computation on its shares.
 	\end{remark}
\subsection{Multiplication of Two Matrices}
	Similar to the Shamir secrete sharing scheme, calculating the shares of the multiplication of two matrices is not as simple as calculating the addition. Workers need to do some communication. In what follows, we explain the procedure. 
	
	Let us assume that $\mathbf{A}, \mathbf{B} \in \mathbb{F}^ {m \times m}$, and
	\begin{align}
	\mathbf{A}=&
	\begin{bmatrix} \mathbf{A}_{1}, & \mathbf{A}_{2}, & \dots, & \mathbf{A}_{k}
	\end{bmatrix}, \nonumber \\ 
	\mathbf{B}=& 
	\begin{bmatrix} \mathbf{B}_{1}, & \mathbf{B}_{2}, & \dots, & \mathbf{B}_{k}
	\end{bmatrix}	\nonumber ,
	\end{align}
	where $\mathbf{A}_i, \mathbf{B}_i \in \mathbb{F}^{m \times \frac{m}{k}}$, for $i \in [k]$ and $k|m$. In addition, assume $\mathbf{A}$ and $\mathbf{B}$ are $(1,t,k)$ and $(k,t,k)$ polynomial-shared with workers in $[N]$, respectively. The goal is to $(b,t,k)$ polynomial-share $\mathbf{L} \defeq \mathbf{A}^T\mathbf{B}$ with workers in $[N]$. We follow three steps, including computation, communication, and aggregation.
	\begin{enumerate}
		\item \emph{Step 1- Computation:}
		
		Worker $n$ calculates $\mathbf{F}_{\Ab,1,t,k}^T(\alpha_n)\mathbf{F}_{\Bb,k,t,k}(\alpha_n)$. 
		
		Consider the polynomial function $\mathbf{H}(x)$ of degree $2(k^2+t-2)$, defined as, 
		\begin{align} \label{elements of h1}
		\mathbf{H}(x) = \sum_{n=0}^{2(k^2+t-2)} \mathbf{H}_nx^n \defeq \mathbf{F}_{\Ab,1,t,k}^T(x)\mathbf{F}_{\Bb,k,t,k}(x).
		\end{align}
		It is important to note that
		\begin{align}\label{coefficients of h1}
		\mathbf{H}_{i-1+k(j-1)} = \mathbf{A}^T_i\mathbf{B}_j,
		\end{align}
		for $i,j \in [k]$.
		
		According to  (\ref{sharte3}), if $N \geq \min\{ 2k^2 + 2t -3, k^2 + tk + t -2\}$, then with probability approaching to one, as $|\mathbb{F}| \rightarrow \infty$, we can calculate all the coefficients of $\mathbf{H}(x)$, including $\mathbf{H}_{i-1+k(j-1)} = \mathbf{A}^T_i\mathbf{B}_j$, for $i,j \in [k]$, from $\mathbf{H}(\alpha_n)$, $n \in [N]$. In particular, there are some  $r^{(i,j)}_n$, $i,j \in [k]$ and $n\in [N]$, such that
		\begin{align}
		\label{lagrange1}
		\mathbf{A}_{i}^T\mathbf{B}_j= \sum_{n=1}^N r^{(i,j)}_n\mathbf{H}(\alpha_n). 
		\end{align}
		Note that $r^{(i,j)}_n$, $i,j \in [k]$ and $n\in [N]$ are only functions of $\alpha_n$, $ n \in [N]$, which are known by all workers. 
		Up to now  worker $n$ has access to $\mathbf{H}(\alpha_n)$. The challenge is to find a way to change the local knowledge of the $\mathbf{H}(\alpha_n)$ to  $(b,t,k)$ polynomial-share of  $\mathbf{L}$, for each worker $n$.
		\item \emph{Step 2- Communication:}
		
		Worker $n$ forms the matrix $\mathbf{H}^{(n)}$, defined as 
		\begin{align}\label{tarife q koochik1}
		\mathbf{H}^{(n)} \defeq  
		\begin{bmatrix}
		\mathbf{H}(\alpha_n) r^{(1,1)}_n & \mathbf{H}(\alpha_n) r^{(1,2)}_n & \dots & \mathbf{H}(\alpha_n) r^{(1,k)}_n \\
		\mathbf{H}(\alpha_n) r^{(2,1)}_n & \mathbf{H}(\alpha_n) r^{(2,2)}_n & \dots & \mathbf{H}(\alpha_n) r^{(2,k)}_n \\
		\vdots &‌\vdots & \vdots & \vdots \\
		\mathbf{H}(\alpha_n) r^{(k,1)}_n & \mathbf{H}(\alpha_n) r^{(k,2)}_n & \dots & \mathbf{H}(\alpha_n) r^{(k,k)}_n \\
		\end{bmatrix}.
		\end{align}
		
		Then worker $n$, $(b,t,k)$ polynomial-shares $\mathbf{H}^{(n)}$ with workers in $[N]$. More precisely, according to \eqref{tarife raveshe sharing}, worker $n$ forms the following polynomial, 
		\begin{align}
		\mathbf{F}_{\mathbf{H}^{(n)},b,t,k}(x) = \sum_{j=1}^{k}\mathbf{H}^{(n)}_jx^{b(j-1)} + \sum_{j=1}^{t-1} \mathbf{\bar{R}}^{(n)}_{j}x^{k^2+j-1},	
		\end{align}
		where $\mathbf{H}^{(n)}_j \defeq \begin{bmatrix}
		\mathbf{H}(\alpha_n) r^{(1,j)}_n\\ \mathbf{H}(\alpha_n) r^{(2,j)}_n \\ \vdots \\ \mathbf{H}(\alpha_n) r^{(k,j)}_n\end{bmatrix}$, for $j \in [k]$ and  $\mathbf{\bar{R}}^{(n)}_{j}, j = 1, 2, \dots, t-1$, are chosen independently and uniformly at random in $\mathbb{F}^{m \times \frac{m}{k}}$. Worker $n$ sends $\mathbf{F}_{\mathbf{H}^{(n)},b,t,k}(\alpha_{n'})$ to the worker $n'$, for all $n' \in [N]$. All of the workers follow the same method. Thus, at the end of this step, each worker $n'$ has access to the matrices $\{\mathbf{F}_{\mathbf{H}^{(1)},b,t,k}(\alpha_{n'}), \mathbf{F}_{\mathbf{H}^{(2)},b,t,k}(\alpha_{n'}), \dots, \mathbf{F}_{\mathbf{H}^{(N)},b,t,k}(\alpha_{n'})\}$.
		
		\item \emph{Step 3- Aggregation:}
		
		Now worker $n'$  calculates $\sum_{n=1}^{N}\mathbf{F}_{\mathbf{H}^{(n)},b,t,k}(\alpha_{n'})$.  
		We claim that this summation is indeed the polynomial share of the matrix $\mathbf{A}^T\mathbf{B}$. To verify that, consider the  polynomial function
		\begin{align}\label{tarife Q bozorg1}
		\mathbf{Q}(x) \defeq 
		\sum_{n=1}^{N}\mathbf{F}_{\mathbf{H}^{(n)},b,t,k}(x).
		\end{align}
		
		We note that 
		\begin{align}
		\mathbf{Q}(x) & = \nonumber
		\sum_{n=1}^{N}\mathbf{F}_{\mathbf{H}^{(n)},b,t,k}(x) \\ \nonumber
		& = \sum_{n=1}^{N} \sum_{j=1}^{k}\mathbf{H}^{(n)}_jx^{b(j-1)} + \sum_{n=1}^{N} \sum_{j=1}^{t-1} \mathbf{\bar{R}}^{(n)}_{j}x^{k^2+j-1}  \\ \nonumber
		& = \sum_{n=1}^{N}
		\sum_{j=1}^{k} \begin{bmatrix}
		\mathbf{H}(\alpha_n) r^{(1,j)}_n\\ \mathbf{H}(\alpha_n) r^{(2,j)}_n \\ \vdots \\ \mathbf{H}(\alpha_n) r^{(k,j)}_n\end{bmatrix}x^{b(j-1)} +
		\sum_{n=1}^{N} \sum_{j=1}^{t-1} \mathbf{\bar{R}}^{(n)}_{j}x^{k^2+j-1}	\\ \nonumber
		& = \sum_{j=1}^{k} \sum_{n=1}^{N}
		 \begin{bmatrix}
		\mathbf{H}(\alpha_n) r^{(1,j)}_n\\ \mathbf{H}(\alpha_n) r^{(2,j)}_n \\ \vdots \\ \mathbf{H}(\alpha_n) r^{(k,j)}_n\end{bmatrix}x^{b(j-1)} +
		\sum_{j=1}^{t-1} \sum_{n=1}^{N} \mathbf{\bar{R}}^{(n)}_{j}x^{k^2+j-1}	\\ \nonumber
		& \overset{(a)} {=}
		\sum_{j=1}^{k} \begin{bmatrix}
		\mathbf{A}^T_1\mathbf{B}_j\\ \mathbf{A}^T_2\mathbf{B}_j \\ \vdots \\ \mathbf{A}^T_k\mathbf{B}_j\end{bmatrix}x^{b(j-1)} +‌\sum_{j=1}^{t-1} (\sum_{n=1}^{N} \mathbf{\bar{R}}^{(n)}_{j})x^{k^2+j-1} = \\ 
		& \overset{(b)} =\sum_{j=1}^{k}\Lb_jx^{b(j-1)} + \mathbf{R}_jx^{k^2+j-1} , 
		\end{align}
		where (a) follows from \eqref{lagrange1}, and (b) follows from the definitions  $\mathbf{L}_j \defeq  \begin{bmatrix}
		\mathbf{A}^T_1\mathbf{B}_j\\ \mathbf{A}^T_2\mathbf{B}_j \\ \vdots \\ \mathbf{A}^T_k\mathbf{B}_j\end{bmatrix}$, for $j \in [k]$ and  $\mathbf{R}_j \defeq \sum_{n=1}^{N} \mathbf{\bar{R}}^{(n)}_{j} $, for $j \in [t-1]$. We note that $\Lb = \begin{bmatrix}
		\Lb_1 & \Lb_2 & \dots & \Lb_k
		\end{bmatrix}$. In addition,  $\mathbf{R}_j, j \in [t-1]$, have independent and  uniform distribution in $\mathbb{F}^{m \times \frac{m}{k}}$. Thus, $\mathbf{Q}(x)$ is in the form of $(b,t,k)$ polynomial sharing of $\mathbf{A}^T\mathbf{B}$ with workers in $[N]$, and worker $n$ has access to $\mathbf{Q}(\alpha_n)$.  These steps are detailed in Algorithm~\ref{Multiplication of Two Matrices}.
	\end{enumerate}

\begin{algorithm}[!htbp]
	\caption{Multiplication of Two Matrices}
	\label{Multiplication of Two Matrices}
	\begin{algorithmic}[1]
		\Statex
		\textbf{Inputs:}
		 Number $N$ of the workers. Matrices $\Ab$, $\Bb$ which are $(1,t,k)$ and $(k,t,k)$ polynomial-shared with workers in $[N]$, respectively. Also $\alpha_1, \alpha_2, \dots, \alpha_N \in \mathbb{F}$, $r^{(i,j)}_n$, for $i,j \in [k]$, and $n\in [N]$ are known by all workers.
		\Statex
		\textbf{Step 1- Computation:}
		\State
		Worker $n$ calculates $\mathbf{H}(\alpha_n) \buildrel \Delta \over = \mathbf{F}_{\Ab,1,t,k}^T(\alpha_n)\mathbf{F}_{\Bb,k,t,k}(\alpha_n)$.
		\Statex
		\textbf{Step 2- Communication:}
		\State
		Worker $n$, $(b,t,k)$ polynomial-shares the matrix \begin{align}
		\mathbf{H}^{(n)} \defeq  
		\begin{bmatrix}
		\mathbf{H}(\alpha_n) r^{(1,1)}_n & \mathbf{H}(\alpha_n) r^{(1,2)}_n & \dots & \mathbf{H}(\alpha_n) r^{(1,k)}_n \\
		\mathbf{H}(\alpha_n) r^{(2,1)}_n & \mathbf{H}(\alpha_n) r^{(2,2)}_n & \dots & \mathbf{H}(\alpha_n) r^{(2,k)}_n \\
		\vdots &‌\vdots & \vdots \\
		\mathbf{H}(\alpha_n) r^{(k,1)}_n & \mathbf{H}(\alpha_n) r^{(k,2)}_n & \dots & \mathbf{H}(\alpha_n) r^{(k,k)}_n \\
		\end{bmatrix} \nonumber,
		\end{align} with workers in $[N]$.
		
		\Statex
		\textbf{Step 3- Aggregation:}
		\State
		Worker $n$ calculates the sum of the messages received in the last step.

		\State
		End.
	\end{algorithmic}
\end{algorithm}

\subsection {Transposing}
Let $\mathbf{A} \in \mathbb{F}^ {m \times m}$, and
\begin{align}
\mathbf{A}=&
\begin{bmatrix} \mathbf{A}_{1}, & \mathbf{A}_{2}, & \dots, & \mathbf{A}_{k}
\end{bmatrix}, \nonumber
\end{align}
where $\mathbf{A}_i \in \mathbb{F}^{m \times \frac{m}{k}}$, for $i \in [k]$ and $k|m$. In addition, assume $\mathbf{A}$ is $(b,t,k)$ polynomial-shared with workers in $[N]$.  The goal is to $(b,t,k)$ polynomial-share $\Lb \defeq \mathbf{A}^T$with workers in $[N]$. 
We follow three steps, including splitting, communication, and aggregation.
\begin{enumerate}
	\item \emph{Step 1- Splitting:}
	
	Let us define
	\begin{align} \label{tarife fi ha}
	\mathbf{F}_i(x) \defeq \mathbf{F}_{\Ab,b,t,k}(x)(\frac{m}{k}(i-1):\frac{m}{k}i,:).
	\end{align}
	In other words, $\mathbf{F}_i(x)$,  $i \in [k]$, is a sub-matrix of  $\mathbf{F}_{\Ab,b,t,k}(x)$, including rows from $\frac{m}{k}(i-1)$ to $\frac{m}{k}i$.
	One can see that we have
	\begin{align}\label{rebeteh fiha}
	\mathbf{F}_i(x) = \sum_{j=1}^{k}\mathbf{A}_{ij}x^{b(j-1)} + \sum_{j=1}^{t-1} \mathbf{\bar{R}}_{ij}x^{k^2+j-1}, 
	\end{align}
	where 
	\begin{align*}
		\mathbf{A} \defeq
		\begin{bmatrix} 
			\mathbf{A}_{11} & \mathbf{A}_{12}& \dots &‌ \mathbf{A}_{1k} \\ 
			\mathbf{A}_{21} & \mathbf{A}_{22}& \dots &‌ \mathbf{A}_{2k} \\ 
			\vdots &‌\vdots& \dots&‌\vdots\\ 
			\mathbf{A}_{k1} & \mathbf{A}_{k2}& \dots &‌ \mathbf{A}_{kk}
		\end{bmatrix},
	\end{align*}
	and $\mathbf{A}_{ij} \in \mathbb{F}^{\frac{m}{k} \times \frac{m}{k}}$, for $i,j\in [k]$ and according to \eqref{tarife raveshe sharing}, $\mathbf{\bar{R}}_{ij}$, for $i,j\in [k]$, are independently and uniformly distributed in $\mathbb{F}^{\frac{m}{k} \times \frac{m}{k}}$.
	Since each worker $n$ has access to $\mathbf{F}_{\Ab,b,t,k}(\alpha_{n})$, it has access to $\mathbf{F}_i(\alpha_{n})$, too. Similar to (\ref{sharte1}) and \eqref{sharte2}, it can be proven that if $N \geq (k + t - 1)$, then by having $\mathbf{F}_i(\alpha_{n})$, for $n = 1, 2 ,\dots, N$, we can calculate all the coefficients of $\mathbf{F}_i(x)$, including $\mathbf{A}_{ij}$, for $j \in [k]$, with probability approaching to one, as $|\mathbb{F}| \rightarrow \infty$. More precisely, similar to (\ref{sharte1}) and \eqref{sharte2}, it can be proven that for any distinct $i_1, i_2, \dots, i_{k+t-1} \in [N]$, independently and uniformly at random chosen parameters $\alpha_1, \alpha_2, \dots, \alpha_N \in \mathbb{F}$, and $i,j \in [k]$, with probability approaching to one, as $|\mathbb{F}| \rightarrow \infty$ we have
		\begin{align*}
		H(\mathbf{A}_{ij}|\mathbf{F}_i(\alpha_{i_1}),\mathbf{F}_i(\alpha_{i_2}), \dots, \mathbf{F}_i(\alpha_{i_{k+t-1}})) = 0.
		\end{align*}
		In particular, there are some $r^{(j)}_n$, such that for $i,j \in [k]$, and $n \in [N]$,
		\begin{align}\label{lagrange fiha}
		\mathbf{A}_{ij} = \sum_{n=1}^{N} r^{(j)}_n\mathbf{F}_i(\alpha_n).
		\end{align}
		Note that $r^{(j)}_n$, $j \in [k]$ and $n\in [N]$, are only functions of $\alpha_n$, $ n \in [N]$, which are known by all workers. 
	
	\item \emph{Step 2- Communication:}
	
	Worker $n$ forms the matrix $\mathbf{H}^{(n)}$ defined as
	
	\begin{align}\label{tarife qfi koochik1}
	\mathbf{H}^{(n)} \defeq  \begin{bmatrix}
	\mathbf{F}_1(\alpha_n) r^{(1)}_n &‌\mathbf{F}_2(\alpha_n) r^{(1)}_n&‌\dots &\mathbf{F}_k(\alpha_n) r^{(1)}_n\\
	\mathbf{F}_1(\alpha_n) r^{(2)}_n &‌\mathbf{F}_2(\alpha_n) r^{(2)}_n&‌\dots &\mathbf{F}_k(\alpha_n) r^{(2)}_n\\
	\vdots&\vdots&\vdots\\
	\mathbf{F}_1(\alpha_n) r^{(k)}_n &‌\mathbf{F}_2(\alpha_n) r^{(k)}_n&‌\dots &\mathbf{F}_k(\alpha_n) r^{(k)}_n
	\end{bmatrix}.
	\end{align}
	Then worker $n$, $(b,t,k)$ polynomial-shares $\mathbf{H}^{(n)}$ with workers in $[N]$.
	More precisely, according to \eqref{tarife raveshe sharing}, worker $n$ forms the following polynomial
	\begin{align}
	\mathbf{F}_{\mathbf{H}^{(n)},b,t,k}(x) = \sum_{j=1}^{k}\mathbf{H}^{(n)}_jx^{b(j-1)} + \sum_{j=1}^{t-1} \mathbf{\hat{R}}^{(n)}_{j}x^{k^2+j-1},	
	\end{align}
	where $\mathbf{H}^{(n)}_j \defeq \begin{bmatrix}
	\mathbf{F}_j(\alpha_n) r^{(1)}_n\\ \mathbf{F}_j(\alpha_n) r^{(2)}_n \\ \vdots \\ \mathbf{F}_j(\alpha_n) r^{(k)}_n\end{bmatrix}$, for $j \in [k]$ and  $\mathbf{\hat{R}}^{(n)}_{j}, j = 1, 2, \dots, t-1$, are chosen independently and uniformly at random in $\mathbb{F}^{m \times \frac{m}{k}}$. Worker $n$ sends $\mathbf{F}_{\mathbf{H}^{(n)},b,t,k}(\alpha_{n'})$ to the worker $n'$, for all $n' \in [N]$. All of the workers follow the same method. Thus, at the end of this step, each worker $n'$ has access to the matrices $\{\mathbf{F}_{\mathbf{H}^{(1)},b,t,k}(\alpha_{n'}), \mathbf{F}_{\mathbf{H}^{(2)},b,t,k}(\alpha_{n'}), \dots, \mathbf{F}_{\mathbf{H}^{(N)},b,t,k}(\alpha_{n'})\}$.
	
	\item \emph{Step 3- Aggregation:}
	
	Now worker $n'$ calculates $\sum_{n=1}^{N}\mathbf{F}_{\mathbf{H}^{(n)},b,t,k}(\alpha_{n'})$. We claim that this summation is indeed the polynomial share of the matrix $\mathbf{A}^T$. To verify that, consider the  polynomial function
	\begin{align}
	\mathbf{Q}(x) \defeq 
	\sum_{n=1}^{N}\mathbf{F}_{\mathbf{H}^{(n)},b,t,k}(x).
	\end{align}
	
	We note that 
	\begin{align}
	\mathbf{Q}(x) & = \nonumber
	\sum_{n=1}^{N}\mathbf{F}_{\mathbf{H}^{(n)},b,t,k}(x) \\ \nonumber
	&= \sum_{n=1}^{N} \sum_{j=1}^{k}\mathbf{H}^{(n)}_jx^{b(j-1)} + \sum_{n=1}^{N} \sum_{j=1}^{t-1} \mathbf{\hat{R}}^{(n)}_{j}x^{k^2+j-1} \\ \nonumber
	& = \sum_{n=1}^{N}
	\sum_{i=1}^{k} \begin{bmatrix}
	\mathbf{F}_i(\alpha_n) r^{(1)}_n\\ \mathbf{F}_i(\alpha_n) r^{(2)}_n \\ \vdots \\ \mathbf{F}_i(\alpha_n) r^{(k)}_n\end{bmatrix}x^{b(i-1)} +
	\sum_{n=1}^{N} \sum_{j=1}^{t-1} \mathbf{\hat{R}}^{(n)}_{j}x^{k^2+j-1}	\\ \nonumber
	& =	\sum_{i=1}^{k} 
	\sum_{n=1}^{N}
	\begin{bmatrix}
	\mathbf{F}_i(\alpha_n) r^{(1)}_n\\ \mathbf{F}_i(\alpha_n) r^{(2)}_n \\ \vdots \\ \mathbf{F}_i(\alpha_n) r^{(k)}_n\end{bmatrix}x^{b(i-1)} +
	\sum_{j=1}^{t-1}
	\sum_{n=1}^{N}
	\mathbf{\hat{R}}^{(n)}_{j}x^{k^2+j-1} \\ \nonumber
	& \overset{(a)} {=}
	\sum_{i=1}^{k}  \begin{bmatrix}
	\mathbf{A}_{i1}\\ \mathbf{A}_{i2} \\ \vdots \\ \mathbf{A}_{ik}\end{bmatrix}x^{b(i-1)} +‌\sum_{j=1}^{t-1} (\sum_{n=1}^{N} \mathbf{\hat{R}}^{(n)}_{j})x^{k^2+j-1}  \\ 
	& \overset{(b)} {=} \sum_{i=1}^{k}\Lb_ix^{b(i-1)} + \sum_{j=1}^{t-1}\mathbf{R}_{j}x^{k^2+j-1} , 
	\end{align}
	where (a) follows from~\eqref{lagrange fiha}, and (b) follows from the definitions  $\mathbf{L}_i \defeq  \begin{bmatrix}
	\mathbf{A}_{i1}\\ \mathbf{A}_{i2} \\ \vdots \\ \mathbf{A}_{ik}\end{bmatrix}$, for $i \in [k]$, and  $\mathbf{R}_{j} \defeq \sum_{n=1}^{N} \mathbf{\hat{R}}^{(n)}_{j}$, for $j \in [t-1]$. We note that $\Lb = \begin{bmatrix}
	\Lb_1 & \Lb_2 & \dots & \Lb_k
	\end{bmatrix}$. In addition,  $\mathbf{R}_{j}, j \in [t-1]$, have independent and  uniform distribution in $\mathbb{F}^{m \times \frac{m}{k}}$. Thus, $\mathbf{Q}(x)$ is in the form of $(b,t,k)$ polynomial sharing of $\mathbf{A}^T$ with workers in $[N]$, and worker $n$ has access to $\mathbf{Q}(\alpha_n)$. These steps are detailed in Algorithm~\ref{Transposing}. 
\end{enumerate}

\begin{algorithm}[!htbp]
	\caption{Transposing}
	\label{Transposing}
	\begin{algorithmic}[1]
		\Statex
		\textbf{Inputs:}
		Number $N$ of the workers. Matrix $\Ab$ which is $(b,t,k)$ polynomial-shared with workers in $[N]$. Also $\alpha_1, \alpha_2, \dots, \alpha_N \in \mathbb{F}$, $r^{(j)}_n$ for $j \in [k]$, and $n \in [N]$ are known by all workers.
		\Statex
		\textbf{Step 1- Splitting:}
		\State
		Worker $n$ calculates $\mathbf{F}_i(\alpha_n) \buildrel \Delta \over = \mathbf{F}_{A,b,t,k}(\alpha_n)(\frac{m}{k}(i-1):\frac{m}{k}i,:)$, for  $i\in[k]$.
		\Statex
		\textbf{Step 2- Communication:}
		\State
		Worker $n$ $(b,t,k)$ polynomial-shares the matrix \begin{align}
		\mathbf{H}^{(n)} \defeq  \begin{bmatrix}
		\mathbf{F}_1(\alpha_n) r^{(1)}_n &‌\mathbf{F}_2(\alpha_n) r^{(1)}_n&‌\dots &\mathbf{F}_k(\alpha_n) r^{(1)}_n\\
		\mathbf{F}_1(\alpha_n) r^{(2)}_n &‌\mathbf{F}_2(\alpha_n) r^{(2)}_n&‌\dots &\mathbf{F}_k(\alpha_n) r^{(2)}_n\\
		\vdots&\vdots&\vdots\\
		\mathbf{F}_1(\alpha_n) r^{(k)}_n &‌\mathbf{F}_2(\alpha_n) r^{(k)}_n&‌\dots &\mathbf{F}_k(\alpha_n) r^{(k)}_n
		\end{bmatrix} \nonumber.
		\end{align} with workers in $[N]$.
		\Statex
		\textbf{Step 3- Aggregation:}
		\State
		Worker $n$ calculates the sum of the messages received in the last step.
		\State
		End.
	\end{algorithmic}
\end{algorithm}

\subsection{Changing the parameter of sharing}
Let $\mathbf{A} \in \mathbb{F}^ {m \times m}$, and
\begin{align}
\mathbf{A}=&
\begin{bmatrix} \mathbf{A}_{1}, & \mathbf{A}_{2}, & \dots, & \mathbf{A}_{k}
\end{bmatrix}, \nonumber
\end{align}
where $\mathbf{A}_i \in \mathbb{F}^{m \times \frac{m}{k}}$, for $i \in [k]$ and $k|m$. In addition, assume $\mathbf{A}$ is $(b,t,k)$ polynomial-shared with workers in $[N]$.
 The goal is to $(b^\prime,t,k)$ polynomial-share $\mathbf{A}$ with workers in $[N]$, where $b^\prime \neq b$.

Similar to (\ref{sharte1}) and \eqref{sharte2}, it can be proven that if $N \geq (k + t - 1)$, then by having $\mathbf{F}_{\Ab,b,t,k}(\alpha_n)$, for $n = 1, 2 ,\dots, N$, we can calculate all the coefficients of $\mathbf{F}_{\Ab,b,t,k}(x)$, including $\mathbf{A}_{j}$, for $j \in [k]$, with probability approaching to one, as $|\mathbb{F}| \rightarrow \infty$. More precisely, similar to (\ref{sharte1}) and \eqref{sharte2}, it can be proven that for any distinct $i_1, i_2, \dots, i_{k+t-1} \in [N]$, independently and uniformly at random chosen parameters $\alpha_1, \alpha_2, \dots, \alpha_N \in \mathbb{F}$, and $i,j \in [k]$, with probability approaching to one, as $|\mathbb{F}| \rightarrow \infty$ we have
	\begin{align*}
	H(\mathbf{A}_{j}|\mathbf{F}_{\Ab,b,t,k}(\alpha_{i_1}),\mathbf{F}_{\Ab,b,t,k}(\alpha_{i_{2}}), \dots, \mathbf{F}_{\Ab,b,t,k}(\alpha_{i_{k+t-1}})) = 0.
	\end{align*}
	In particular, there are some $r^{(j)}_n$, such that for $j \in [k]$, and $n \in [N]$,
	
	\begin{align}\label{lagrange fiiiiiha}
	\mathbf{A}_{j} = \sum_{n=1}^{N} r^{(j)}_n\mathbf{F}_{\Ab,b,t,k}(\alpha_n).
	\end{align}
	Note that $r^{(j)}_n$, $j \in [k]$ and $n\in [N]$, are only functions of $\alpha_n$, $ n \in [N]$, which are known by all workers. 

\begin{enumerate}
	\item \emph{Step 1- Communication:}
	
	Worker $n$ forms $\mathbf{H}^{(n)}$ defined as
	\begin{align}
	\mathbf{H}^{(n)} \defeq \begin{bmatrix}
	r^{(1)}_n \mathbf{F}_{\Ab,b,t,k}(\alpha_n), & r^{(2)}_n \mathbf{F}_{\Ab,b,t,k}(\alpha_n), & \dots, & r^{(k)}_n \mathbf{F}_{\Ab,b,t,k}(\alpha_n)
	\end{bmatrix}.	
	\end{align}
	Then worker $n$, $(b^\prime,t,k)$ polynomial-shares $\mathbf{H}^{(n)}$ with workers in $[N]$.
	More precisely, according to \eqref{tarife raveshe sharing}, worker $n$ forms the following polynomial
	\begin{align}
	\mathbf{F}_{\mathbf{H}^{(n)},b^\prime,t,k}(x) = \sum_{j=1}^{k}\mathbf{H}^{(n)}_jx^{b^\prime(j-1)} + \sum_{j=1}^{t-1} \mathbf{\bar{R}}^{(n)}_{j}x^{k^2+j-1},	
	\end{align}
	where $\mathbf{H}^{(n)}_j \defeq r^{(j)}_n \mathbf{F}_{\Ab,b,t,k}(\alpha_n)$ for $j \in [k]$, and  $\mathbf{\bar{R}}^{(n)}_{j}, j = 1, 2, \dots, t-1$, are chosen independently and uniformly at random in $\mathbb{F}^{m \times \frac{m}{k}}$. Worker $n$ sends $\mathbf{F}_{\mathbf{H}^{(n)},b^\prime,t,k}(\alpha_{n'})$ to the worker $n'$, for all $n' \in [N]$. All of the workers follow the same method. Thus, at the end of this step, each worker $n'$ has access to the matrices $\{\mathbf{F}_{\mathbf{H}^{(1)},b^\prime,t,k}(\alpha_{n'}), \mathbf{F}_{\mathbf{H}^{(2)},b^\prime,t,k}(\alpha_{n'}), \dots, \mathbf{F}_{\mathbf{H}^{(N)},b^\prime,t,k}(\alpha_{n'})\}$.
	
	\item \emph{Step 2- Aggregation:}
	
	Now worker $n'$ calculates $\sum_{n=1}^{N}\mathbf{F}_{\mathbf{H}^{(n)},b^{\prime},t,k}(\alpha_{n^{\prime}})$. We claim that this summation is indeed the polynomial share of the matrix $\mathbf{A}$. To verify that, consider the  polynomial function  
	\begin{align}\label{tarife Qfiiiii bozorg1}
	\mathbf{Q}(x) \defeq \sum_{n=1}^{N}\mathbf{F}_{\mathbf{H}^{(n)},b^{\prime},t,k}(x)
	\end{align}
	
	We note that 
	\begin{align}
	\mathbf{Q}(x) & = \nonumber
	\sum_{n=1}^{N}\mathbf{F}_{\mathbf{H}^{(n)},b^{\prime},t,k}(x) \\ \nonumber
	& = \sum_{n=1}^{N} \sum_{j=1}^{k}\mathbf{H}^{(n)}_jx^{b^{\prime}(j-1)} +  \sum_{n=1}^{N} \sum_{j=1}^{t-1} \mathbf{\bar{R}}^{(n)}_{j}x^{k^2+j-1} \\ \nonumber
	& = \sum_{n=1}^{N}
	\sum_{j=1}^{k} r^{(j)}_n \mathbf{F}_{\Ab,b,t,k}(\alpha_n)x^{b^\prime(j-1)} +
	\sum_{n=1}^{N} \sum_{j=1}^{t-1} \mathbf{\bar{R}}^{(n)}_{j}x^{k^2+j-1}	\\ \nonumber
	& = 
	\sum_{j=1}^{k} 
	\sum_{n=1}^{N}
	r^{(j)}_n \mathbf{F}_{\Ab,b,t,k}(\alpha_n)x^{b^\prime(j-1)} +
	\sum_{j=1}^{t-1}
	\sum_{n=1}^{N} 
	\mathbf{\bar{R}}^{(n)}_{j}x^{k^2+j-1}	\\ \nonumber
	& \overset{(a)} {=}
	\sum_{j=1}^{k} \mathbf{A}_{j}x^{b^\prime(j-1)} +‌\sum_{j=1}^{t-1} (\sum_{n=1}^{N} \mathbf{\bar{R}}^{(n)}_{j})x^{k^2+j-1}\\ 
	&\overset{(b)} {=}\sum_{j=1}^{k} \mathbf{A}_{j}x^{b^\prime(j-1)}  + \sum_{j=1}^{t-1}\mathbf{R}_{j}x^{k^2+j-1} , 
	\end{align}
	where (a) follows from~\eqref{lagrange fiiiiiha}, and (b) follows from the definition $\mathbf{R}_{j} \defeq \sum_{n=1}^{N} \mathbf{\bar{R}}^{(n)}_{j}$, for $j \in [t-1]$. Note that,  $\mathbf{R}_{j}, j \in [t-1]$ have independent and  uniform distribution in $\mathbb{F}^{m \times \frac{m}{k}}$. Thus, $\mathbf{Q}(x)$ 
	is in the form of $(b^\prime,t,k)$ polynomial sharing of $\Ab$ with workers in $[N]$, and worker $n$ has access to $\mathbf{Q}(\alpha_n)$.
	
	These steps are explained in Algorithm~\ref{changing the jump}. 
\end{enumerate}
		
\begin{algorithm}[!htbp]
	\caption{Changing the parameter of sharing}
	\label{changing the jump}
	\begin{algorithmic}[1]
		\Statex
		\textbf{Inputs:}
		Number $N$ of the workers. Matrix $\Ab$ which is $(b,t,k)$ polynomial-shared with workers in $[N]$. Also $\alpha_1, \alpha_2, \dots, \alpha_N \in \mathbb{F}$, $r^{(j)}_n$ for $j \in [k]$, and $n \in [N]$ are known by all workers.
		\Statex
		\textbf{Step 1- Communication:}
		\State
		Worker $n$, $(b^\prime,t,k)$ polynomial-shares the matrix 
		\begin{align}
		\mathbf{H}^{(n)}(x) \defeq \begin{bmatrix}
		r^{(1)}_n \mathbf{F}_{\Ab,b,t,k}(\alpha_n) & r^{(2)}_n \mathbf{F}_{\Ab,b,t,k}(\alpha_n) & \dots & r^{(k)}_n \mathbf{F}_{\Ab,b,t,k}(\alpha_n)
		\end{bmatrix},
		\end{align}
		 with workers in $[N]$.
		\Statex
		\textbf{Step 2- Aggregation:}
		\State
		Worker $n$ calculates the sum of the messages received in the last step.
		
		\State
		End.
	\end{algorithmic}
\end{algorithm}
	
\section{The Proposed Algorithm}
\label{LSMPC Algorithm}
As we mentioned before, in Section \ref{Procedures}, at the end of all procedures \ref{Addition}-\ref{changing the jump}, the output is in the form of \emph{polynomial sharing}. This allows us to calculate any polynomial function of the inputs by concatenating these procedures accordingly. We explain this in detail in Algorithm \ref{LSMPC}.

In the proposed algorithm, we use the arithmetic representation of the function described in Appendix \ref{Arithmetic Circuits}. In the following theorem we claim that this algorithm satisfies constraints (\ref{correctness}), (\ref{privacy for workers}), and (\ref{privacy for master}).
\begin{algorithm}[!htbp]
		\caption{Proposed Algorithm}
		\label{LSMPC}
		\begin{algorithmic}[1]
			\Statex
			Assume that $\alpha_1, \alpha_2, \dots, \alpha_N \in \mathbb{F}$ are chosen independently and uniformly at random in $\mathbb{F}$, and are available everywhere.
			\Statex
			\textbf{Phase 1- Secret Sharing:}
			Every source node $\gamma \in [\Gamma]$ takes the following steps.
			\State
			Calculates $\mathbf{F}_{\mathbf{X}^{[\gamma]},1,t,k}(x)$.
			\State
			Sends $\mathbf{F}_{\mathbf{X}^{[\gamma]},1,t,k}(\alpha_n)$ to worker $n\in[N]$.
			\Statex
			\textbf{Phase 2- Computation and communication:}
			All of the workers consider the arithmetic representation of the function according to the rules in Appendix \ref{Arithmetic Circuits}. Worker $n \in [N]$ takes the following steps.
			\State
			If it assesses all of the gates, it goes to the next phase, otherwise it considers the first non-assessed gate. Call that gate $g$.
			\label{gate}
			\State
			If $g$ is an addition gate, it follows Procedure \ref{Addition} and go to Step \ref{gate}.
			\State
			If $g$ is a multiplication by constant gate, it follows Procedure \ref{Multiplication by Constant} and go to Step \ref{gate}.
			\State
			If $g$ is a multiplication of two matrices, call the output of the gate $\Cb$, and assume 
			we have $\mathbf{C} = \mathbf{A}^T\mathbf{B}$, where $\mathbf{A}, \mathbf{B}$ are inputs of the gate.
			\State
			If it has access to $\mathbf{F}_{\Bb^T,1,t,k}(\alpha_n)$, it follows Procedure \ref{Transposing} to access $\mathbf{F}_{\Bb,1,t,k}(\alpha_n)$, and goes to Step \ref{3}.
			\State
			If it has access to $\mathbf{F}_{\Bb,1,t,k}(\alpha_n)$, it follows Procedure \ref{changing the jump} to access $\mathbf{F}_{\Bb,k,t,k}(\alpha_n)$, and goes to Step \ref{2}. \label{3}
			\State
			If it has access to $\mathbf{F}_{\Ab^T,1,t,k}(\alpha_n)$, it follows Procedure \ref{Transposing} to access $\mathbf{F}_{\Ab,1,t,k}(\alpha_n)$.\label{2}
			\State
			Follows Procedure \ref{Multiplication of Two Matrices} to access  $\mathbf{F}_{\Cb,1,t,k}(\alpha_n)$ and it goes to Step \ref{gate}.
			\Statex
			\textbf{Phase 3- Reconstruction:}
			Worker $n \in [N]$ takes the following steps.
			\State
			Stores the output of the arithmetic representation of the function in $\mathbf{O}_n$. 
			\State
			Sends $\mathbf{O}_n$ to the master.
			\State
			The master uses the method in \cite{kedlaya2011fast} to reconstruct the result using matrices $\mathbf{O}_n, n \in [N]$.
			\State
			End.
		\end{algorithmic}
\end{algorithm}

\begin{theorem}\label{qazie koli}
	Algorithm~\ref{LSMPC} satisfies constraints (\ref{correctness}), (\ref{privacy for workers}), and (\ref{privacy for master}), for any $\mathbf{X}^{[1]}, \mathbf{X}^{[2]} \dots, \mathbf{X}^{[\Gamma]}, \in \mathbb{F}^{m \times m}$, and polynomial function $\mathbf{G}: (\mathbb{F}^{m \times m})^\Gamma \rightarrow \mathbb{F}^{m \times m}$.
\end{theorem}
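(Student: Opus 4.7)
The plan is to establish correctness and the two privacy conditions by induction over the sequence of procedures invoked while evaluating the arithmetic circuit representation of $\mathbf{G}$. The loop invariant to maintain is that, after the $\ell$-th procedure completes, for each intermediate matrix $\mathbf{Z}$ currently stored at the workers, worker $n$ holds $\mathbf{F}_{\mathbf{Z},b,t,k}(\alpha_n)$ for some $b \in [k]$, where the non-information-bearing coefficients form a fresh independent uniform random vector in $\mathbb{F}^{m \times m/k}$, independent of all previous randomness and of the inputs conditioned on $\mathbf{Z}$. The base case ($\ell=0$) holds by construction of Phase 1, which performs $(1,t,k)$ polynomial sharing of each $\mathbf{X}^{[\gamma]}$ with independent fresh randomness. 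The inductive step is handled procedure-by-procedure using exactly the per-procedure analyses in Section~\ref{Procedures}: each of Algorithms~\ref{Addition}--\ref{changing the jump} was shown to output a polynomial $\mathbf{Q}(x)$ of the correct form with the key randomness fresh (since at least one summand $\mathbf{R}_j^{(n)}$ or $\mathbf{\bar{R}}^{(n)}_j$ is sampled independently inside that procedure).

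For correctness, once the invariant is maintained through the last procedure, worker $n$ holds $\mathbf{F}_{\mathbf{Y},1,t,k}(\alpha_n)$. Since the degree of this polynomial is at most $k^2+t-2$ and $N \geq \min\{2k^2+2t-3,\, k^2+kt+t-2\} \geq k^2+t-1$, the master can invoke the interpolation procedure of~\cite{kedlaya2011fast} on $\mathbf{O}_1,\dots,\mathbf{O}_N$ to recover all coefficients, in particular $\mathbf{Y}$ itself. This yields condition~\eqref{correctness}. The only subtlety is that the Transposing and Changing-the-parameter procedures implicitly require $N \geq k+t-1$ in order for the interpolation coefficients $r^{(j)}_n$ in~\eqref{lagrange fiha} and~\eqref{lagrange fiiiiiha} to exist with probability approaching one, which is implied by the assumed lower bound on $N$, together with the finiteness of the number of interpolation events (a union bound over the polynomially many procedures needed to evaluate the circuit).

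For privacy at the workers, I would fix an arbitrary $\mc{S} \subset [N]$ with $|\mc{S}| \leq t-1$ and construct a simulator that, taking no input about $(\mathbf{X}^{[1]},\dots,\mathbf{X}^{[\Gamma]})$ except what is seen in the shares already delivered to $\mc{S}$, reproduces the exact joint distribution of the colluders' view (incoming shares in Phase 1 and all inter-worker messages $\mc{M}_{n'\to n}$ for $n\in\mc{S}$, $n'\in[N]$). The core observation is that every polynomial sharing appearing in any procedure of Section~\ref{Procedures} contains $t-1$ coefficients chosen independently and uniformly in $\mathbb{F}^{m\times m/k}$, and $\mc{S}$ obtains at most $|\mc{S}|\leq t-1$ distinct evaluations of this polynomial. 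Therefore the marginal distribution of $\mc{S}$'s view of each individual sharing is uniform over $(\mathbb{F}^{m\times m/k})^{|\mc{S}|}$, independent of the matrix being shared, by the standard Lagrange-interpolation argument already used in Theorem~\ref{qazie sharing} (conditions~\eqref{sharte1}--\eqref{sharte2}). Composing the per-procedure simulators works because the fresh random matrices used in different procedures are mutually independent, so the joint distribution factors and the simulator can sample each segment independently. This yields~\eqref{privacy for workers}.

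For privacy at the master, I would argue that the vector $(\mathbf{O}_1,\dots,\mathbf{O}_N)$ is a deterministic invertible function of the coefficients of the final polynomial $\mathbf{F}_{\mathbf{Y},1,t,k}(x)$, namely the pair $(\mathbf{Y}, \mathbf{R}_1,\dots,\mathbf{R}_{t-1})$ where $\mathbf{R}_j$ are the random coefficients injected in the \emph{last} procedure before Phase 3. By the invariant, these $\mathbf{R}_j$ are independent of $(\mathbf{X}^{[1]},\dots,\mathbf{X}^{[\Gamma]})$ and of $\mathbf{Y}$, and uniform on $\mathbb{F}^{m\times m/k}$. Hence conditioning on $\mathbf{Y}$, the tuple $(\mathbf{O}_1,\dots,\mathbf{O}_N)$ is independent of the inputs, giving~\eqref{privacy for master}. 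The main obstacle I anticipate is the bookkeeping required to justify that the randomness used in the last procedure is truly fresh and independent from everything else the master could compute from prior circuit structure; handling this rigorously requires formalizing the induction hypothesis to include joint independence of all injected randomness across the circuit, and this is what I expect will take the bulk of the formal write-up in Appendix~\ref{Security analysis}.
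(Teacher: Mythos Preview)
Your proposal is correct and tracks the paper's proof closely for correctness and for privacy at the master: in both cases the paper argues exactly as you do, namely that the final outputs $(\mathbf{O}_1,\dots,\mathbf{O}_N)$ are in one-to-one correspondence with the coefficient tuple $(\mathbf{Y}_1,\dots,\mathbf{Y}_k,\mathbf{R}_1,\dots,\mathbf{R}_{t-1})$, and that the $\mathbf{R}_j$'s are fresh uniform and independent of the inputs.

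For privacy at the workers, your route is genuinely different in packaging. You propose a simulation-based argument: for each polynomial sharing seen by the colluding set $\mc{S}$, the $t-1$ fresh random coefficients render the $|\mc{S}|\le t-1$ evaluations jointly uniform and independent of the secret, and independence of the randomness across procedures lets you compose per-procedure simulators. The paper instead runs a direct entropy chain-rule computation over the $R$ communication rounds: it expands $H(\mc{\tilde X}_{\mc S},\mc M^{(1)}_{\mc S},\dots,\mc M^{(R)}_{\mc S}\mid \mc X)$ by the chain rule and shows that each conditional term $H(\mc M^{(r)}_{\mc S}\mid \mc M^{(<r)}_{\mc S},\mc{\tilde X}_{\mc S},\mc X)$ equals $H(\mc R^{(r)}_{\mc S})$, the entropy of the fresh randomness injected in that round, which dominates $H(\mc M^{(r)}_{\mc S})$; summing gives $I=0$. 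Both arguments rest on the same fact (Lemma~\ref{lemme security}/Corollary~\ref{khafan}), but yours is closer to the cryptographic simulation paradigm and makes the compositionality explicit, whereas the paper's entropy bookkeeping avoids having to name a simulator and handles all procedures uniformly in one sweep. Either approach suffices; the ``freshness'' obstacle you flag at the end is precisely what the paper's chain-rule step $H(\mc M^{(r)}_{\mc S}\mid \cdots)=H(\mc R^{(r)}_{\mc S})$ dispatches in one line, so you may find that formulation less tedious than a full simulator construction.
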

\begin{proof}
	To see the proof that conditions \eqref{privacy for workers} and \eqref{privacy for master} are satisfied, see Appendix \ref{Security analysis}. For constraint \eqref{correctness}, note that the master receives the polynomial sharing of the result from all of the workers, and according to Theorem \ref{qazie sharing}, these shares are enough to reconstruct the result.
\end{proof}	
\subsection{Computation and communication complexity}
In this section, we evaluate the computation and communication complexity of all the procedures and phases of Algorithm~\ref{LSMPC}, explained throughout the paper.
	\begin{itemize}
		\item Sharing phase: In this phase each source node ${\gamma} \in [{\Gamma}]$ has to evaluate $\mathbf{F}_{\mathbf{X}^{[\gamma]},1,t,k}(x)$, for $N$ distinct values. 
		According to \eqref{tarife raveshe sharing} we have
		\begin{align}
		\mathbf{F}_{\mathbf{X}^{[\gamma]},1,t,k}(x) =  \sum_{j=1}^{k}\mathbf{X}^{[\gamma]}_{j}x^{j-1} + x^{k^2}\sum_{j=1}^{t-1} \mathbf{R}_{j}x^{j-1}.
		\end{align}
		Using Horner's method \cite{wiki:Horner}, calculating a polynomial of degree $n$ can be done in $\mathcal{O}(n)$. Therefore, calculating $\sum_{j=1}^{k}\mathbf{X}^{[\gamma]}_{j}x^{j-1}$ and $x^{k^2}\sum_{j=1}^{t-1} \mathbf{R}_{j}x^{j-1}$ can be done in $\mathcal{O}(k(m \times \frac{m}{k}))$ and $\mathcal{O}((t-1)(m \times \frac{m}{k})+ 2\log{k})$, respectively. Therefore, evaluating $\mathbf{F}_{\mathbf{X}^{[\gamma]},1,t,k}(x)$ at any point can be done in $\mathcal{O}((k+t)(\frac{m^2}{k}))$. Thus, sharing phase can be done with the computation complexity of $\mathcal{O}((k+t)(\frac{m^2}{k})(N))$. 
		
		Also it has to send the resulting $m \times \frac{m}{k}$ matrices to each worker. Therefore, there is an aggregated communication complexity of $\mathcal{O}((N\Gamma)(\frac{m^2}{k}))$.
		
		\item Addition procedure: In this procedure, each worker computes only the
		addition of two matrices of dimensions $m \times \frac{m}{k}$ with the computation complexity of $\mathcal{O}(\frac{m^2}{k})$. Also, there is no communication between the workers in this procedure.
		
		\item Multiplication by constant procedure: In this procedure, each worker computes only the
		multiplication of a matrix of dimension $m \times \frac{m}{k}$ and a constant number with the computation complexity of $\mathcal{O}(\frac{m^2}{k})$. Also, there is no communication between the workers in this procedure.
		
		\item Multiplication of two matrices: In this procedure, first of all, each worker computes only the
		multiplication of two matrices of dimensions $\frac{m}{k} \times m$ and $m \times \frac{m}{k}$. If it is done conventionally, it can be executed with the computation complexity of $\mathcal{O}(\frac{m^3}{k^2})$. Then, each worker $n$ has to evaluate polynomial sharing of $\mathbf{F}_{\mathbf{H}^{(n)},b,t,k}(x)$ at $N$ points, which has a complexity of $\mathcal{O}((k+t)(\frac{m^2}{k})(N))$. Finally, each worker has to sum up $N$ matrices of dimensions $m \times \frac{m}{k}$ with the computation complexity of $\mathcal{O}((\frac{m^2}{k})(N))$. Therefore, per node computation complexity of this step is $\mathcal{O}(\frac{m^3}{k^2} + (k+t)(\frac{m^2}{k})(N) + (\frac{m^2}{k})(N)‌) = 
		\mathcal{O}(\frac{m^3}{k^2} + (k+t)(\frac{m^2}{k})(N)‌)
		$. Also there is a communication of $\mathcal{O}(N^2)$ of $m \times \frac{m}{k}$ matrices between the workers, with the aggregated communication complexity of $\mathcal{O}((N^2)(\frac{m^2}{k}))$.
		\item Transposing: In this procedure, each worker $n$ has to evaluate polynomial function of $\mathbf{F}_{\mathbf{H}^{(n)},b,t,k}(x)$ at $N$ points, which has complexity of $\mathcal{O}((k+t)(\frac{m^2}{k})(N))$. Then, each worker has to sum up $N$ matrices of dimensions $m \times \frac{m}{k}$ with the computation complexity of $\mathcal{O}((\frac{m^2}{k})(N))$. Therefore, the per node computation complexity of this step is $\mathcal{O}((k+t)(\frac{m^2}{k})(N) + (\frac{m^2}{k})(N)) = \mathcal{O}((k+t)(\frac{m^2}{k})(N))$.
		Also there is a communication of $\mathcal{O}(N^2)$ of $m \times \frac{m}{k}$ matrices between the workers, which implies that there is an aggregated communication complexity of $\mathcal{O}((N^2)(\frac{m^2}{k}))$.
		\item Changing the parameter of sharing: In this procedure, each worker $n$ has to evaluate polynomial sharing of $\mathbf{F}_{\mathbf{H}^{(n)},b^{\prime},t,k}(x)$ at $N$ points, which has complexity of $\mathcal{O}((k+t)(\frac{m^2}{k})(N))$. Then, each worker has to sum up $N$ matrices of dimensions $m \times \frac{m}{k}$ with the computation complexity of $\mathcal{O}((\frac{m^2}{k})(N))$. Therefore, the per node computation complexity of this step is $\mathcal{O}((k+t)(\frac{m^2}{k})(N) + (\frac{m^2}{k})(N)) = \mathcal{O}((k+t)(\frac{m^2}{k})(N))$.
		Also there is a communication of $\mathcal{O}(N^2)$ of $m \times \frac{m}{k}$ matrices between the workers, with the aggregated communication complexity of $\mathcal{O}((N^2)(\frac{m^2}{k}))$.
		\item Reconstruction phase: In this phase, the master has to reconstruct the value of the result, using the messages it has received from the workers. Since the result is in the form of polynomial sharing at the master, we have to interpolate a polynomial of degree $k^2+t-2$, which can be done with the complexity of $\mathcal{O}((k^2+t-2)\log^2{(k^2+t-2)}\log{\log{(k^2+t-2)}})$, according to \cite{kedlaya2011fast}. Also, there is a communication of $N$ of $m \times \frac{m}{k}$ matrices between the workers and master, with the aggregated communication complexity of $N\frac{m^2}{k}$.
	\end{itemize}
	
	According to the above calculations, multiplication of two matrices has the most computation complexity among the all of the procedures described throughout the paper. Now let us calculate the computation complexity of the function $G$. Assume that $v$ is the number of monomial terms of $\mathbf{G}$, and $d = \deg \mathbf{G}$. Therefore, the per node computation complexity of the  function $\mathbf{G}$ is at most $\mathcal{O}((vd)(\frac{m^3}{k^2} + (k+t)(\frac{m^2}{k})N)‌)$. Also the aggregated communication complexity of the  function $\mathbf{G}$ is at most  $\mathcal{O}(N\Gamma(\frac{m^2}{k})+dN^2(\frac{m^2}{k})+N(\frac{m^2}{k})) = \mathcal{O}(N\frac{m^2}{k}(dN + \Gamma))$. 
	
	In the context of MPC, there are some solutions that decrease the communication complexity, which are not information theoretic \cite{damgaard2013practical, couteau2019note, yoshida2018efficiency, choudhury2016efficient,beaver1991efficient}. One idea that is worth exploring is to combine the idea of this work and those solutions.

\section{Extension} \label{Extension}
In the proposed scheme in Section \ref{LSMPC Algorithm}, in order to share the matrix according to the \emph{polynomial sharing} scheme, we partition it column-wise. This model of partitioning and sharing can be extended. In general, we can partition the matrix into some blocks and share the matrix according to this configuration. This approach is inspired and motivated by \emph{entangled polynomial code}~\cite{entangle}, or \emph{MatDot code}~\cite{opt-recovery}.
\begin{definition}	
	Let $\mathbf{A} \in \mathbb{F}^{z \times v}$, for some $z,v \in \mathbb{N}$, be
	\begin{flalign}
	\nonumber
	&\mathbf{A}=\begin{bmatrix}
	\mathbf{A}_{0, 0} & \mathbf{A}_{0, 1} & \mathbf{A}_{0, 2} &\dots & \mathbf{A}_{0, m-1}\\
	\mathbf{A}_{1, 0} & \mathbf{A}_{1, 1} & \mathbf{A}_{1, 2} &\dots & \mathbf{A}_{1 ,m-1}\\
	\mathbf{A}_{2, 0} & \mathbf{A}_{2, 1} & \mathbf{A}_{2, 2} &\dots & \mathbf{A}_{2, m-1}\\
	\vdots & \vdots & \vdots & \ddots & \vdots \\
	\mathbf{A}_{p-1, 0} & \mathbf{A}_{p-1, 1} & \mathbf{A}_{p-1, 2} &\dots & \mathbf{A}_{p-1, m-1}
	\end{bmatrix},
	\end{flalign}
	where $\mathbf{A}_{i,j} \in \mathbb{F}^{k \times k'}$, for some $k,k' \in \mathbb{N}$, $k|z$, and $k'|v$.
	
	We define the entangled polynomial function $\mathbf{F}_{\mathbf{A},b,p,m,n,t,\textrm{s}} (x)$,  for some $p,m,n,t \in \mathbb{N}$, as
	\small
	\begin{align*}
	\mathbf{F}_{\mathbf{A},b,p,m,n,t,+}(x)& \defeq
	\displaystyle\sum_{j=0}^{p-1} \displaystyle\sum_{k=0}^{m-1} \mathbf{A}_{j,k}x^{j+bkp} +\displaystyle\sum_{q=0}^{t-2}\mathbf{R}'_qx^{npm+q} ,\\
	\mathbf{F}_{\mathbf{A},b,p,m,n,t,-}(x)& \defeq
	\displaystyle\sum_{j=0}^{p-1} \displaystyle\sum_{k=0}^{m-1} \mathbf{A}_{j,k}x^{(p-1-j)+bkp} +\sum_{q=0}^{t-2}\mathbf{R}'_qx^{npm+q},
	\end{align*}
	\normalsize
	where $\mathbf{R}'_q$, $q \in \{0,1,...,t-2\}$, are chosen independently and uniformly at random from $\mathbb{F}^{k \times k'}$.
\end{definition}
Now with this method of sharing we can do more general models of polynomial calculation (see~\cite{Multi-Party2}).

\section{Conclusion and Discussion}
In this paper, we developed a new secure multiparty computation for massive input data. The proposed solution offers significant gains compared to schemes based on splitting the data into smaller pieces and applying conventional multiparty computation. In this work, we assumed that some of the nodes are semi-honest. The next step is to consider the case where nodes are adversarial, which has been addressed in~\cite{seyed2019}. There are many open problems in this direction. This includes exploring communication efficiency, the tradeoff between the number of servers and communication load, having a network of heterogeneous servers,  various network topologies, and the cases where some communication links are eavesdropped.  
In addition, investigating the case where the sources are collocated and can be encoded together would be interesting. Here we assume that we want to calculate  $\mathbf{G}(\mathbf{X}^{[1]}, \mathbf{X}^{[2]} \dots, \mathbf{X}^{[\Gamma]})$, where inputs are massive. One interesting direction is to consider the case, where the goal is to calculate $\mathbf{G}(\mathbf{X}^{[1]}_i, \mathbf{X}^{[2]}_i \dots, \mathbf{X}^{[\Gamma]}_i)$, for $i=1, \ldots, k$, for some integer $k$. This would be in the intersection of this work and~\cite{lagrange}. Exploiting the sparsity of the input data in this calculation would also be of great interest (see~\cite{codedsketch}).

\bibliographystyle{ieeetr}
\bibliography{journal_abbr,SDMPC}

\begin{appendices}
	\clearpage
	\section{Proof of Theorem \ref{qazie sharing}} \label{Appendice A}
	In order to prove Theorem \ref{qazie sharing}, we first prove the following lemma.
		\begin{lemma}
			\label{lemm tedade nasefr}
			Let $\mathbf{A}, \mathbf{B} \in \mathbb{F}^{m \times m}$, and
			\begin{align}
			\mathbf{A}&=
			\begin{bmatrix} \mathbf{A}_{1}, & \mathbf{A}_{2}, & \dots, & \mathbf{A}_{k}
			\end{bmatrix}, \nonumber\\ 
			\mathbf{B}&= 
			\begin{bmatrix} \mathbf{B}_{1}, & \mathbf{B}_{2}, & \dots, & \mathbf{B}_{k}
			\end{bmatrix},	\nonumber 
			\end{align}
			where $\mathbf{A}_i, \mathbf{B}_i \in \mathbb{F}^{m \times \frac{m}{k}}$, for $i \in [k]$ and $k|m$.
			In addition, assume that these matrices are shared among  $N$ workers using polynomial functions $\mathbf{F}_{\Ab,1,t,k}(x)$ and $\mathbf{F}_{\Bb,k,t,k}(x)$, respectively. Let us define
			\begin{align}
			\mathbf{H}(x) = \sum_{n=0}^{2(k^2+t-2)} \mathbf{H}_nx^n \defeq \mathbf{F}_{\Ab,1,t,k}^T(x)\mathbf{F}_{\Bb,k,t,k}(x).
			\end{align}
			The number of nonzero coefficients of $\mathbf{H}(x)$ is equal to
			\[   
			\min \{2k^2 + 2t - 3, k^2 + kt + t -2\} = 
			\begin{cases}
			2k^2 + 2t - 3 	&\quad\text{if} ~k < t \\
			k^2 + kt + t -2 &\quad\text{if} ~k \geq t
			\end{cases}.
			\]
		\end{lemma}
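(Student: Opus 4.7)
The plan is to reduce the counting of nonzero coefficients of $\mathbf{H}(x)$ to a sumset cardinality computation. Let $E_A = \{0, 1, \ldots, k-1\} \cup \{k^2, k^2+1, \ldots, k^2+t-2\}$ and $E_B = \{0, k, 2k, \ldots, (k-1)k\} \cup \{k^2, k^2+1, \ldots, k^2+t-2\}$ be the sets of exponents appearing in $\mathbf{F}_{\Ab,1,t,k}(x)$ and $\mathbf{F}_{\Bb,k,t,k}(x)$, respectively. Since the coefficients of $\mathbf{F}_{\Ab,1,t,k}(x)$ at exponents in the second block of $E_A$ are independent uniform random matrices (and likewise for $\mathbf{F}_{\Bb,k,t,k}(x)$), the coefficient $\mathbf{H}_n$ of $x^n$ in the product is generically nonzero iff $n \in E_A + E_B$. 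It therefore suffices to prove $|E_A + E_B| = \min\{2k^2+2t-3,\, k^2+kt+t-2\}$.

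Splitting $E_A = A_1 \sqcup A_2$ and $E_B = B_1 \sqcup B_2$ into their two blocks, I will evaluate the four partial sumsets. By the uniqueness of base-$k$ representation, $A_1 + B_1 = \{0, 1, \ldots, k^2-1\}$. Direct summation yields $A_1 + B_2 = [k^2,\, k^2+k+t-3]$ and $A_2 + B_2 = [2k^2,\, 2k^2+2t-4]$. The delicate piece is $A_2 + B_1 = k^2 + \{j + ik : 0 \le j \le t-2,\ 0 \le i \le k-1\}$, which is a union of $k$ arithmetic intervals $I_i \defeq [k^2+ik,\ k^2+ik+t-2]$, each of length $t-1$, spaced at distance $k$.

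The proof then splits on the sign of $t-(k+1)$. When $k < t$, consecutive intervals $I_i$ overlap or touch (their length $t-1$ meets or exceeds their spacing $k$), so $A_2 + B_1 = [k^2,\, 2k^2-k+t-2]$; moreover this interval meets $A_2 + B_2$ since $2k^2-k+t-2 \ge 2k^2-1$. Combining the four pieces then yields the full integer interval $[0,\, 2k^2+2t-4]$ of cardinality $2k^2+2t-3$. When $k \ge t$, the $I_i$ are pairwise disjoint, and $A_2 + B_2$ is separated from the rest since $2k^2-k+t-2 < 2k^2$. The main bookkeeping is the overlap with $A_1 + B_2$: $I_0 \subset A_1 + B_2$ contributes no new points, $I_1$ contributes exactly one new point $k^2+k+t-2$ (the single point just past the right end of $A_1+B_2$), and $I_2, \ldots, I_{k-1}$ each lie entirely beyond $A_1 + B_2$ and contribute $t-1$ new points each. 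Summing gives $|E_A+E_B| = k^2 + (k+t-2) + 1 + (k-2)(t-1) + (2t-3) = k^2+kt+t-2$. This case split, particularly the index bookkeeping for the overlap between $A_1+B_2$ and the first two intervals of $A_2+B_1$ in the $k \ge t$ regime, is the only real technical obstacle.
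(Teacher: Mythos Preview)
Your proposal is correct and follows essentially the same route as the paper: both arguments identify the support of $\mathbf{H}(x)$ as the sumset $E_A+E_B$, decompose it into the same four pieces $A_i+B_j$ (the paper calls these $\mathcal{S}_1,\mathcal{S}_2,\mathcal{S}_3,\mathcal{S}_4$), and then case-split on whether the intervals $I_i$ of spacing $k$ and length $t-1$ overlap. The only cosmetic difference is that you count $|E_A+E_B|$ directly by summing the new contributions of each piece, whereas the paper computes the complement inside $[0,\,2k^2+2t-4]$ and subtracts; the bookkeeping you flag for $I_0,I_1$ versus $A_1+B_2$ is exactly the paper's observation that $\mathcal{S}_{30}\subset\mathcal{S}_{21}$ and $\mathcal{S}_{22}\subset\mathcal{S}_{31}$.
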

		\begin{proof}
			We have
			\begin{align}
			\mathbf{F}_{\Ab,1,t,k}^T(x) &= \sum_{n=1}^{k}\mathbf{A}^T_nx^{n-1} + x^{k^2}\sum_{n=1}^{t-1} \mathbf{\bar{R}}^T_{n}x^{n-1}, \nonumber \\
			\mathbf{F}_{\Bb,k,t,k}(x) &= \sum_{n=1}^{k}\mathbf{B}_nx^{k(n-1)} + x^{k^2}\sum_{n=1}^{t-1} \mathbf{\hat{R}}_{n}x^{n-1}. \nonumber
			\end{align}
			\begin{itemize}
				\item 	The power of $x$ with nonzero coefficients in $(\sum_{n=1}^{k}\mathbf{A}^T_nx^{n-1})(\sum_{n=1}^{k}\mathbf{B}_nx^{k(n-1)})$, are 
				\begin{align*}
				\mathcal{S}_1 &\defeq \{0,1,2,\dots, k^2-1\}.
				\end{align*}
				
				\item The power of $x$ with nonzero coefficients in $(x^{k^2}\sum_{n=1}^{k}\mathbf{A}^T_nx^{n-1})(\sum_{n=1}^{t-1} \mathbf{\hat{R}}_{n}x^{n-1})$, are
				\begin{align*}
				\mathcal{S}_2 &\defeq \{k^2,k^2 + 1,k^2 + 2,\dots, k^2 + k - 1 + t -2\}.
				\end{align*}
				
				\item The power of $x$ with nonzero coefficients in $(x^{k^2}\sum_{n=1}^{t-1} \mathbf{\bar{R}}^T_{n}x^{n-1})(\sum_{n=1}^{k}\mathbf{B}_nx^{k(n-1)})$, are
				\begin{align*}
				\mathcal{S}_3 &\defeq \{k^2 + ik + j, i \in [0,k-1], j \in [0,t-2]\}.
				\end{align*}
				
				\item The power of $x$ with nonzero coefficients in $(x^{2k^2}\sum_{n=1}^{t-1} \mathbf{\bar{R}}^T_{n}x^{n-1})(\sum_{n=1}^{t-1} \mathbf{\hat{R}}_{n}x^{n-1})$, are 
				\begin{align*}
				\mathcal{S}_4 &\defeq \{2k^2,2k^2 + 1,2k^2 + 2,\dots, 2k^2 + 2t - 4\}.
				\end{align*}
			\end{itemize}
			The degree of the polynomial $\mathbf{H}(x)$ is $2k^2 + 2t - 4$. Therefore, there are $2k^2 + 2t - 3$ coefficients from $0$ to $2k^2 + 2t - 4$, where some of them are zero. One can see that the set of zero coefficients of $\mathbf{H}(x)$ is equal to
			\begin{align}
			[0, 2k^2 + 2t - 4] - (\mathcal{S}_1 \cup \mathcal{S}_{2} \cup \mathcal{S}_3 \cup  \mathcal{S}_{4}). \label{aslekari}
			\end{align}
			
			We consider the following two cases.
			\begin{enumerate}
				\item Case 1, $k-1 \leq t-2$: In this case one can see that we have
				\begin{align*}
				(\mathcal{S}_1 \cup \mathcal{S}_{2} \cup \mathcal{S}_3 \cup  \mathcal{S}_{4}) = [0, 2k^2 + 2t - 4].
				\end{align*}
				Therefore, in this case non of the coefficients of $\mathbf{H}(x)$, is equal to zero. Thus, the number of nonzero coefficients of $\mathbf{H}(x)$ is $2k^2 + 2t - 3$.
				
				\item Case 2, $k-1 > t-2$: In this case, counting the number of non-zero coefficients is more complicated, specially because the intersection $\mathcal{S}_2 \cap \mathcal{S}_3$ is not zero. In this case we claim that the number of zero coefficients of $\mathbf{H}(x)$ is $(k-t+1)(k-1)$, thus the number of nonzero coefficients is $(2k^2 +2t-3) - (k-t+1)(k-1) = k^2 + kt + t -2$. 
				
				Let us define
				\begin{align*}
				\mathcal{S}_{21} &= \{ k^2,k^2 + 1,k^2 + 2,\dots, k^2 + k - 1 \}, \\
				\mathcal{S}_{22} &= \mathcal{S}_{2} - \mathcal{S}_{21}, \\
				\mathcal{S}_{3i} &= \{k^2 + ik + j, j \in [0,t-2]\},
				\end{align*}
				for $i \in [0, k-1]$.
				
				In this case $(k-1 > t-2)$, on can see that we have
				\begin{align}
				\mathcal{S}_{2} &= \mathcal{S}_{21} \cup \mathcal{S}_{22}, \label{s22} \\
				\mathcal{S}_3 &= \cup_{i=0}^{k-1} \mathcal{S}_{3i}, \nonumber \\
				\mathcal{S}_{30} &\subset \mathcal{S}_{21}, \label{s3zires2} \\
				\mathcal{S}_{3} \cap \mathcal{S}_{21}&= \mathcal{S}_{30}, \nonumber \\
				\mathcal{S}_{22} &\subset \mathcal{S}_{31}, \nonumber \\
				\mathcal{S}_1 \cup \mathcal{S}_{21} &= [0, k^2+k-1], \nonumber \\
				\mathcal{S}_1 \cup \mathcal{S}_{21} \cup \mathcal{S}_4 &= [0, k^2+k-1] \cup [2k^2, 2k^2+2t-4], \nonumber \\
				(\mathcal{S}_1 \cup \mathcal{S}_{21} \cup \mathcal{S}_4) \cap (\mathcal{S}_{22} \cup  (\mathcal{S}_{3} - \mathcal{S}_{30})) & = \emptyset, \label{mohem} \\
				\mathcal{S}_{3i} \cap \mathcal{S}_{3j} &= \emptyset, \nonumber
				\end{align}
				for distinct $i,j \in [0, k-1]$.
				
				It is important to note that 
				\begin{align}
				(\mathcal{S}_1 \cup \mathcal{S}_{2} \cup \mathcal{S}_3 \cup  \mathcal{S}_{4}) &\overset{(a)}= (\mathcal{S}_1 \cup (\mathcal{S}_{21} \cup \mathcal{S}_{22}) \cup \mathcal{S}_3 \cup  \mathcal{S}_{4}) \nonumber\\
				&= ((\mathcal{S}_1 \cup \mathcal{S}_{21} \cup  \mathcal{S}_{4}) \cup (\mathcal{S}_3 \cup \mathcal{S}_{22})) \nonumber\\
				&= ((\mathcal{S}_1 \cup \mathcal{S}_{21} \cup  \mathcal{S}_{4}) \cup ((\mathcal{S}_3 - \mathcal{S}_{30}) \cup \mathcal{S}_{30} \cup \mathcal{S}_{22})) \nonumber\\
				&= ((\mathcal{S}_1 \cup (\mathcal{S}_{21} \cup \mathcal{S}_{30}) \cup  \mathcal{S}_{4}) \cup ((\mathcal{S}_3 - \mathcal{S}_{30}) \cup \mathcal{S}_{22})) \nonumber\\
				&\overset{(b)}= ((\mathcal{S}_1 \cup \mathcal{S}_{21} \cup \mathcal{S}_{4}) \cup (\mathcal{S}_3 - \mathcal{S}_{30})), \label{khalebadjoor}
				\end{align} 
				where (a) follows from \eqref{s22} and (b) follows from \eqref{s3zires2} and the fact that $\mathcal{S}_{22} \subset \mathcal{S}_{31} \subset (\mathcal{S}_3 - \mathcal{S}_{30})$. Also note that
				\begin{align}\label{non-zero}
				[0, 2k^2 + 2t - 4] - (\mathcal{S}_1 \cup \mathcal{S}_{21} \cup \mathcal{S}_4)  = \{k^2+k, k^2+k+1, \dots, 2k^2-1\}. 
				\end{align}
				Thus, according to \eqref{aslekari}, \eqref{mohem}, and \eqref{khalebadjoor}, in order to calculate the number of zero-coefficients, we must exclude $(\mathcal{S}_3 - \mathcal{S}_{30})$ from the set $\{k^2+k, k^2+k+1, \dots, 2k^2-1\}$. Therefore, the number of zero-coefficients of $\mathbf{H}(x)$ is equal to
				\begin{align*}
				|\{k^2+k, k^2+k+1, \dots, 2k^2-1\} - (\mathcal{S}_3 - \mathcal{S}_{30})| 
				&\overset{(a)} = (k^2 - k) - (t-1)(k-1) \\
				&= k(k-1) - (t-1)(k-1) \\
				&= (k-t+1)(k-1)
				\end{align*}
				where (a) follows from $|\{k^2+k, k^2+k+1, \dots, 2k^2-1\}| = k^2 - k$, $|\mathcal{S}_{3} - \mathcal{S}_{30}| = (t-1)(k-1)$, and $(\mathcal{S}_{3} - \mathcal{S}_{30} \subset \{k^2+k, k^2+k+1, \dots, 2k^2-1\})$.
			\end{enumerate}
		\end{proof}
		Now we prove Theorem \ref{qazie sharing}. First note that based on Lemma \ref{lemm tedade nasefr}, the number of nonzero coefficients of $\mathbf{H}(x)$ is 
		\[   
		\min \{2k^2 + 2t - 3, k^2 + kt + t -2\} = 
		\begin{cases}
		2k^2 + 2t - 3 	&\quad\text{if} ~k < t \\
		k^2 + kt + t -2 &\quad\text{if} ~k \geq t
		\end{cases}.
		\]
		Assume that $N = \min \{2k^2 + 2t - 3, k^2 + kt + t -2\}$, and consider distinct values $\alpha_1, \alpha_2, \dots, \alpha_N \in \mathbb{F}$. Let us define
		\begin{align*}
		\mathbf{H} \defeq
		\begin{bmatrix}
		\alpha_1^{j_1} & \alpha_1^{j_2} & \dots & \alpha_1^{j_N} \\
		\alpha_2^{j_1} & \alpha_1^{j_2} & \dots & \alpha_2^{j_N} \\
		\vdots & \vdots & \vdots & \vdots \\
		\alpha_N^{j_1} & \alpha_N^{j_2} & \dots & \alpha_N^{j_N}
		\end{bmatrix},
		\end{align*}
		where $j_1, j_2, \dots, j_N \in \{0, 1, \dots, 2(k^2+t-2)\}$ are the distinct indexes of the nonzero coefficients of $\mathbf{H}(x)$. Also, for any distinct $i_1 < i_2 < \dots < i_{k+t-1} \in [N]$, define
		\begin{align*}
		\mathbf{A}_{i_1, i_2, \dots, i_{k+t-1}}& \defeq
		\begin{mpmatrix}
		\alpha_{i_1}^{0} & \alpha_{i_1}^{1} & \dots & \alpha_{i_1}^{k-1} &  \alpha_{i_1}^{k^2} & \alpha_{i_1}^{k^2+1} & \dots & \alpha_{i_1}^{k^2 + t -2}\\
		\alpha_{i_2}^{0} & \alpha_{i_2}^{1} & \dots & \alpha_{i_2}^{k-1} &  \alpha_{i_2}^{k^2} & \alpha_{i_2}^{k^2+1} & \dots & \alpha_{i_2}^{k^2 + t -2} \\
		\vdots & \vdots & \vdots & \vdots & \vdots & \vdots & \vdots & \vdots \\
		\alpha_{i_{k+t-1}}^{0} & \alpha_{i_{k+t-1}}^{1} & \dots & \alpha_{i_{k+t-1}}^{k-1} &  \alpha_{i_{k+t-1}}^{k^2} & \alpha_{i_{k+t-1}}^{k^2+1} & \dots & \alpha_{i_{k+t-1}}^{k^2 + t -2}
		\end{mpmatrix}, \\
		\mathbf{B}_{i_1, i_2, \dots, i_{k+t-1}}& \defeq
		\begin{mpmatrix}
		\alpha_{i_1}^{0} & \alpha_{i_1}^{k} & \dots & \alpha_{i_1}^{(k-1)k} &  \alpha_{i_1}^{k^2} & \alpha_{i_1}^{k^2+1} & \dots & \alpha_{i_1}^{k^2 + t -2}\\
		\alpha_{i_2}^{0} & \alpha_{i_2}^{k} & \dots & \alpha_{i_2}^{(k-1)k} &  \alpha_{i_2}^{k^2} & \alpha_{i_2}^{k^2+1} & \dots & \alpha_{i_2}^{k^2 + t -2} \\
		\vdots & \vdots & \vdots & \vdots & \vdots & \vdots & \vdots & \vdots \\
		\alpha_{i_{k+t-1}}^{0} & \alpha_{i_{k+t-1}}^{k} & \dots & \alpha_{i_{k+t-1}}^{(k-1)k} &  \alpha_{i_{k+t-1}}^{k^2} & \alpha_{i_{k+t-1}}^{k^2+1} & \dots & \alpha_{i_{k+t-1}}^{k^2 + t -2}
		\end{mpmatrix}.
		\end{align*}
		These three matrices are  called Generalized Vandermonde matrix \cite{sobczyk2002generalized, kitamoto2014computation}, which has been extensively studied in the literatures.  In \cite{kitamoto2014computation} it has been shown that the determinant of the generalized Vandermonde matrix $\mathbf{A}_{i_1, i_2, \dots, i_{k+t-1}}$ is 
		\begin{align*}
		\det \mathbf{A}_{i_1, i_2, \dots, i_{k+t-1}} = g(\alpha_{i_1}, \alpha_{i_2}, \dots, \alpha_{i_{k+t-1}})\prod_{i_j > i_l}(\alpha_{i_j} - \alpha_{i_l}),
		\end{align*}
		where  the function $g(\alpha_{i_1}, \alpha_{i_2}, \dots, \alpha_{i_{k+t-1}})$ is called a Schur polynomial\cite{wiki:schur}.  Unfortunately, there is no guarantee that $g(\alpha_{i_1}, \alpha_{i_2}, \dots, \alpha_{i_{k+t-1}})$ is not equal to zero, given that $\alpha_j$'s are distinct. Thus we cannot say that $\mathbf{A}_{i_1, i_2, \dots, i_{k+t-1}}$ is full rank. 
		
		To prove Theorem \ref{qazie sharing}, its enough to show that for  large enough $|\mathbb{F}|$, there exist distinct values $\alpha_1, \alpha_2, \dots, \alpha_N \in \mathbb{F}$, such that for any distinct $i_1 < i_2 < \dots < i_{k+t-1} \in [N]$, we have
		\begin{align}
		\det{\mathbf{A}_{i_1, i_2, \dots, i_{k+t-1}}}& \neq 0, \label{11}\\
		\det{\mathbf{B}_{i_1, i_2, \dots, i_{k+t-1}}}& \neq 0, \label{22}\\
		\det{\mathbf{H}}& \neq 0, \label{33}
		\end{align}
		and if we choose $\alpha_1, \alpha_2, \dots, \alpha_N$, independently and uniformly at random in $\mathbb{F}$, the probability that (\ref{11}), (\ref{22}), and (\ref{33}) hold, approaches to one, as $|\mathbb{F}| \rightarrow \infty$. 
		
		Let us define 
		\begin{align*}
		f(\alpha_1, \alpha_2, \dots, \alpha_N) \defeq 
		(\prod_{i_1, i_2, \dots, i_{k+t-1} \in [N]} \det \mathbf{A}_{i_1, i_2, \dots, i_{k+t-1}}\det \mathbf{B}_{i_1, i_2, \dots, i_{k+t-1}}) \det \mathbf{H}.		
		\end{align*}
		This polynomial is not equal to zero polynomial. Thus, according to \cite{koetter2003algebraic}, for  large enough $|\mathbb{F}|$, there exist distinct $\alpha_1, \alpha_2, \dots, \alpha_N \in \mathbb{F}$, such that $f(\alpha_1, \alpha_2, \dots, \alpha_N) \neq 0$.
		Also, based on Schwartz-Zippel Lemma \cite{schwartz1980fast, zippel1979probabilistic}, if we choose $\alpha_1, \alpha_2, \dots, \alpha_N$, independently and uniformly at random in $\mathbb{F}$, the probability that (\ref{11}), (\ref{22}), and (\ref{33}) hold, approaches to one, as $|\mathbb{F}| \rightarrow \infty$.

	\clearpage
	\section{Proof of Privacy In Theorem \ref{qazie koli}} \label{Security analysis}
	Recall that in this algorithm, to share any information, we always add some random matrices to it. We claim that this protocol satisfies privacy constraints \eqref{privacy for workers} and \eqref{privacy for master}. In order to formally prove that, we use the following two  lemmas.
	\begin{lemma}
		\label{independence lemma}
		Consider the polynomial $r(x)‌ \from \mathbb{F} \to \mathbb{F}$ 
		\begin{align*}
		r(x) = \sum_{n=1}^{t-1} a_nx^{n-1}, \nonumber	
		\end{align*}
		where $a_1, a_2, \dots, a_{t-1}$, are chosen independently and uniformly at random in $\mathbb{F}$.
		 Define
		\begin{align}
		\mathbf{\tilde{r}} \defeq (r(\alpha_1), r(\alpha_2), \dots, r(\alpha_{t-1})), \nonumber
		\end{align}
		 for some distinct values $\alpha_1, \alpha_2. \dots, \alpha_{t-1} \in \mathbb{F}$. Then $\mathbf{\tilde{r}}$ has a uniform distribution over $\mathbb{F}^{t-1}$.
	\end{lemma}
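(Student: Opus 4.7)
The plan is to recognize this as essentially a statement that an invertible linear transformation carries the uniform distribution to the uniform distribution. Specifically, I would write $\tilde{\mathbf{r}}^T = \mathbf{V} \mathbf{a}^T$, where $\mathbf{a} = (a_1, a_2, \ldots, a_{t-1})$ and $\mathbf{V}$ is the $(t-1)\times(t-1)$ Vandermonde matrix with $(i,j)$-entry $\alpha_i^{j-1}$. Since the $\alpha_i$ are distinct, $\det \mathbf{V} = \prod_{i<j}(\alpha_j - \alpha_i) \neq 0$, so $\mathbf{V}$ is invertible over $\mathbb{F}$.

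Given this, the map $\mathbf{a} \mapsto \tilde{\mathbf{r}}$ is a bijection on $\mathbb{F}^{t-1}$. For any fixed $\mathbf{v} \in \mathbb{F}^{t-1}$, there is exactly one $\mathbf{a}$ with $\mathbf{V}\mathbf{a}^T = \mathbf{v}^T$, namely $\mathbf{a}^T = \mathbf{V}^{-1}\mathbf{v}^T$, so
\begin{align*}
\Pr\{\tilde{\mathbf{r}} = \mathbf{v}\} = \Pr\{\mathbf{a} = \mathbf{V}^{-1}\mathbf{v}^T\,{}^T\} = \frac{1}{|\mathbb{F}|^{t-1}},
\end{align*}
where the last equality uses the fact that $a_1, \ldots, a_{t-1}$ are i.i.d.\ uniform over $\mathbb{F}$, so $\mathbf{a}$ itself is uniform on $\mathbb{F}^{t-1}$. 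Hence $\tilde{\mathbf{r}}$ is uniform on $\mathbb{F}^{t-1}$, as claimed.

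There is no real obstacle here; the only subtlety is recognizing the Vandermonde structure and invoking its non-vanishing determinant over any field when the nodes are distinct. The argument extends verbatim to the matrix-valued case used later (where the coefficients are random matrices rather than random scalars) by applying it entry-wise, which is presumably why the lemma is stated in this scalar form.
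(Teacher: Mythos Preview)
Your proof is correct and takes essentially the same approach as the paper: both establish that the map $(a_1,\ldots,a_{t-1}) \mapsto (r(\alpha_1),\ldots,r(\alpha_{t-1}))$ is a bijection on $\mathbb{F}^{t-1}$ (you via the nonvanishing Vandermonde determinant, the paper via Lagrange interpolation uniqueness), and then conclude that uniformity is preserved under a bijection. Your version is slightly more explicit about the linear-algebraic structure, but the content is identical.
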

	\begin{proof}
		Assume that $r_1, r_2, \dots, r_{t-1}$ are chosen independently and uniformly at random in $\mathbb{F}$. According to Lagrange interpolation rule \cite{bakhvalov1977numerical}, we know that the following set of equations 
		\begin{align}
		\begin{array}{rl}
		r(\alpha_1) = & r_1,\\
		r(\alpha_2) = & r_2,\\
		\vdots &  \\
		r(\alpha_{t-1}) = & r_{t-1},
		\end{array} \nonumber 
		\end{align}
		 has a unique answer. 
		It means that if we know the values of $r_1, r_2, \dots, r_{t-1}$, we can uniquely determine the values of $a_i$, for $i \in [t-1]$. Also it is obvious that if we know the values of $a_1, a_2, \dots, a_{t-1}$, we can uniquely determine the values of $r(\alpha_1), r(\alpha_2), \dots, r(\alpha_{t-1})$. Therefore, there is a one to one mapping between  $\mathbf{a} = (a_1, a_2, \dots, a_{t-1})$ and $\mathbf{\tilde{r}} = (r(\alpha_1), r(\alpha_2), \dots, r(\alpha_{t-1}))$. Note that $a_i, i=1, 2, \dots, t-1$, are chosen independently and uniformly at random in $\mathbb{F}$, thus $\mathbf{a}$ has uniform distribution over $\mathbb{F}^{t-1}$. Therefore, $\mathbf{\tilde{r}}$ has uniform distribution over $\mathbb{F}^{t-1}$, too.
	\end{proof}
	\begin{corollary}
		\label{natije asli randomness}
		Assume that $\mathbf{R}(x) \from \mathbb{F} \to \mathbb{F}^{p \times q}$ is a polynomial function of degree $\max(0,t-2), t \in \mathbb{N}$, where the $t-1$ coefficients are chosen uniformly at random in $\mathbb{F}^ {p \times q}$. Define $\mathbf{\tilde{R}}$ as
		\begin{align}
		\mathbf{\tilde{R}} \defeq (\mathbf{R}(\alpha_1), \mathbf{R}(\alpha_2), \dots,\mathbf{R}(\alpha_{t-1})), \nonumber
		\end{align}
		where the $\alpha_1, \alpha_2, \dots, \alpha_{t-1} \in \mathbb{F}$ are distinct values. Then $\mathbf{\tilde{R}}$ has a uniform distribution over $\mathbb{F}^{p \times (t-1)q}$.
	\end{corollary}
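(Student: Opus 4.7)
The plan is to reduce Corollary \ref{natije asli randomness} to Lemma \ref{independence lemma} by applying the lemma entry-wise to the matrix-valued polynomial $\mathbf{R}(x)$. Write $\mathbf{R}(x) = \sum_{n=1}^{t-1} \mathbf{A}_n x^{n-1}$, where $\mathbf{A}_n \in \mathbb{F}^{p \times q}$ is uniform and independent. For each pair $(i,j) \in [p] \times [q]$, let $r_{ij}(x) \defeq [\mathbf{R}(x)]_{ij} = \sum_{n=1}^{t-1} [\mathbf{A}_n]_{ij}\, x^{n-1}$. Since $\mathbf{A}_n$ is uniform on $\mathbb{F}^{p \times q}$, the scalar coefficients $[\mathbf{A}_n]_{ij}$, $n \in [t-1]$, are independent and uniform on $\mathbb{F}$, so $r_{ij}(x)$ satisfies exactly the hypothesis of Lemma \ref{independence lemma}.

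Applying Lemma \ref{independence lemma} to $r_{ij}(x)$ with the distinct evaluation points $\alpha_1, \ldots, \alpha_{t-1}$ yields that the tuple $(r_{ij}(\alpha_1), \ldots, r_{ij}(\alpha_{t-1}))$ is uniformly distributed on $\mathbb{F}^{t-1}$. This handles each coordinate individually.

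Next I would promote this entry-wise statement to a joint statement over all entries. The coefficient matrices $\mathbf{A}_1, \ldots, \mathbf{A}_{t-1}$ are jointly uniform on $(\mathbb{F}^{p \times q})^{t-1}$, so the full family $\{[\mathbf{A}_n]_{ij} : n \in [t-1],\, i \in [p],\, j \in [q]\}$ is an i.i.d.\ uniform collection on $\mathbb{F}$. Consequently the scalar polynomials $r_{ij}(x)$ are mutually independent across $(i,j)$, hence so are the evaluation tuples $(r_{ij}(\alpha_1), \ldots, r_{ij}(\alpha_{t-1}))$. Stacking these independent uniform tuples gives a uniform distribution on $\mathbb{F}^{p q (t-1)}$, which, after arranging the evaluations as the block matrix $\mathbf{\tilde{R}} = [\mathbf{R}(\alpha_1) \mid \mathbf{R}(\alpha_2) \mid \cdots \mid \mathbf{R}(\alpha_{t-1})] \in \mathbb{F}^{p \times (t-1)q}$, is precisely the claimed uniform distribution.

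There is essentially no serious obstacle; the only thing to be careful about is the bijection between $\mathbb{F}^{p\times(t-1)q}$ and $\mathbb{F}^{pq(t-1)}$ used to convert the entry-wise/independence argument back to the matrix form, and the boundary case $t = 1$ (where the degree is $\max(0,t-2)=0$ and $\mathbf{\tilde{R}}$ is the empty tuple, making the statement vacuous). Both are routine.
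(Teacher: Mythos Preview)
Your proposal is correct and is essentially the same approach as the paper's own proof, which simply states that the result ``directly follows from Lemma \ref{independence lemma}.'' Your argument is a careful entry-wise unpacking of precisely that reduction, with the extra (and welcome) observation about independence across the $(i,j)$ coordinates and the degenerate case $t=1$.
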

	\begin{proof}
		The proof directly follows from Lemma \ref{independence lemma}.
	\end{proof}

	\begin{lemma}
			\label{lemme security}
			Consider $m$ polynomials $\mathbf{U}_1(x), \mathbf{U}_2(x), \dots, \mathbf{U}_m(x)$ of degree at most $n-1$ where their coefficients are chosen with arbitrary joint distribution from $\mathbb{F}^{p \times q}$. Let $\mathcal{A}$ denotes the order set of those coefficients. Consider the polynomials
			\begin{align}
			\mathbf{T}_1(x) &= \mathbf{U}_1(x) + x^{n}\mathbf{R}_1(x),   \nonumber \\
			\mathbf{T}_2(x) &= \mathbf{U}_2(x) + x^{n}\mathbf{R}_2(x), \nonumber  \\
			\vdots&  \nonumber\\
			\mathbf{T}_m(x) &= \mathbf{U}_m(x) + x^{n}\mathbf{R}_m(x),	
			\end{align}
			where for $1 \leq i \leq m$, $\mathbf{R}_i(x) \from \mathbb{F} \to \mathbb{F}^{p \times q}$ is a polynomial of degree $\max(0,t-2)$, where the $t-1$ coefficients are chosen independently and uniformly at random from $\mathbb{F}^{p \times q}$. Then, $I(\mathcal{A};\mathbf{\tilde{T}}) = 0$, where $\mathbf{\tilde{T}}$ defined as
			\begin{align} 
			\mathbf{\tilde{T}}
			&\defeq  \begin{bmatrix}
			\mathbf{T}_{1}(\beta_{1}) & \mathbf{T}_{1}(\beta_{2}) & \dots & \mathbf{T}_{1}(\beta_{t-1}) \\
			\mathbf{T}_{2}(\beta_{1}) & \mathbf{T}_{2}(\beta_{2}) & \dots & \mathbf{T}_{2}(\beta_{t-1}) \\
			\vdots & \vdots & \vdots & \vdots \\
			\mathbf{T}_{m}(\beta_{1}) & \mathbf{T}_{m}(\beta_{2}) & \dots & \mathbf{T}_{m}(\beta_{t-1})
			\end{bmatrix},  \\	
			\end{align}
			for some arbitrary values $\beta_{1}, \beta_{2}, \dots, \beta_{t-1} \in \mathbb{F}$.
		\end{lemma}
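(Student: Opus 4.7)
The plan is to prove the stronger statement that, conditionally on every realization of $\mathcal{A}$, the matrix $\mathbf{\tilde{T}}$ is uniformly distributed on $\mathbb{F}^{mp \times (t-1)q}$. Since this conditional distribution then neither depends on $\mathcal{A}$ nor differs from the marginal, $\mathcal{A}$ and $\mathbf{\tilde{T}}$ are independent, which gives $I(\mathcal{A};\mathbf{\tilde{T}})=0$. The starting point is the decomposition $\mathbf{T}_i(\beta_j) = \mathbf{U}_i(\beta_j) + \beta_j^{\,n}\,\mathbf{R}_i(\beta_j)$, which lets us write $\mathbf{\tilde{T}} = \mathbf{C}(\mathcal{A}) + \mathbf{S}$, where $\mathbf{C}(\mathcal{A})$ is the block matrix of the $\mathbf{U}_i(\beta_j)$'s (a deterministic function of $\mathcal{A}$) and $\mathbf{S}$ is the block matrix with entries $\beta_j^{\,n}\,\mathbf{R}_i(\beta_j)$. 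Because the coefficients of the $\mathbf{R}_i$'s are drawn independently of $\mathcal{A}$, and $\mathbf{S}$ depends only on those coefficients, $\mathbf{S}$ is independent of $\mathcal{A}$. So the conditional law of $\mathbf{\tilde{T}}$ given $\mathcal{A}$ is obtained from the unconditional law of $\mathbf{S}$ by a deterministic shift, and it suffices to prove that $\mathbf{S}$ is uniform on $\mathbb{F}^{mp \times (t-1)q}$.

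To establish the uniformity of $\mathbf{S}$, I would argue row by row. The $i$-th row is the tuple $(\beta_1^{\,n}\,\mathbf{R}_i(\beta_1),\dots,\beta_{t-1}^{\,n}\,\mathbf{R}_i(\beta_{t-1}))$. Applying Corollary \ref{natije asli randomness} to $\mathbf{R}_i(x)$, a polynomial of degree at most $t-2$ whose $t-1$ coefficient blocks are i.i.d.\ uniform on $\mathbb{F}^{p\times q}$, the tuple $(\mathbf{R}_i(\beta_1),\dots,\mathbf{R}_i(\beta_{t-1}))$ is uniformly distributed on $\mathbb{F}^{p\times (t-1)q}$, provided the $\beta_j$'s are distinct (which is needed for the Vandermonde invertibility underlying that corollary). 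Multiplying the $j$-th block by the scalar $\beta_j^{\,n}$ is a bijection on $\mathbb{F}^{p\times q}$ whenever $\beta_j\neq 0$, and bijections preserve uniformity, so the $i$-th row of $\mathbf{S}$ remains uniform on $\mathbb{F}^{p\times (t-1)q}$. Finally, since $\mathbf{R}_1,\dots,\mathbf{R}_m$ are mutually independent, the $m$ rows of $\mathbf{S}$ are mutually independent uniform blocks, and hence $\mathbf{S}$ is uniform on $\mathbb{F}^{mp \times (t-1)q}$.

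The main technical ingredient is the invocation of Corollary \ref{natije asli randomness}, itself resting on the invertibility of the Vandermonde matrix built from $t-1$ distinct evaluation points; the remainder is elementary bookkeeping about independent random variables and the fact that a deterministic shift cannot create information. One small caveat worth flagging is that the stated hypothesis ``arbitrary values $\beta_1,\dots,\beta_{t-1}$'' must tacitly include \emph{distinct} and \emph{nonzero}: distinctness is required for the Vandermonde step, and nonvanishing is required so that the column scaling by $\beta_j^{\,n}$ stays bijective (otherwise $\mathbf{T}_i(0)=\mathbf{U}_i(0)$ would leak $\mathcal{A}$ directly). Under these standing conditions, the independence $\mathcal{A}\perp\mathbf{\tilde{T}}$ follows immediately from the uniformity of $\mathbf{\tilde{T}}\mid\mathcal{A}$, yielding $I(\mathcal{A};\mathbf{\tilde{T}})=0$.
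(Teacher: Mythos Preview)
Your proof is correct and rests on the same key ingredient as the paper's, namely Corollary~\ref{natije asli randomness}, which gives that the vector of $t-1$ evaluations of each $\mathbf{R}_i$ is uniform. The paper's route is slightly different in organization: rather than conditioning on $\mathcal{A}$ directly, it first shows $I(\mathbf{\tilde{U}};\mathbf{\tilde{T}})=0$ by a Bayes-rule computation (using uniformity of $\mathbf{\tilde{R}}$), and then invokes the Markov chain $\mathcal{A}\to\mathbf{\tilde{U}}\to\mathbf{\tilde{T}}$ together with the data processing inequality to conclude $I(\mathcal{A};\mathbf{\tilde{T}})=0$. Your approach of showing that $\mathbf{\tilde{T}}\mid\mathcal{A}$ is uniform (via the shift by $\mathbf{C}(\mathcal{A})$ of a uniform $\mathbf{S}$) is a bit more direct and avoids the detour through $\mathbf{\tilde{U}}$ and the DPI. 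You are also more careful on a point the paper's write-up glosses over: the column scaling by $\beta_j^{\,n}$, which must be a bijection for uniformity to survive. Your explicit flagging that the $\beta_j$'s need to be distinct (for the Vandermonde step) and nonzero (for the scaling step) is a genuine improvement in rigor over the paper's presentation, which silently absorbs the $\beta_j^{\,n}$ factor and leaves ``arbitrary values'' unqualified.
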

		\begin{proof}
			Let us define 
			\begin{align}
			\mathbf{\tilde{U}}
			& \defeq \begin{bmatrix}
			\mathbf{U}_{1}(\beta_{1}) & \mathbf{U}_{1}(\beta_{2}) & \dots & \mathbf{U}_{1}(\beta_{t-1}) \\
			\mathbf{U}_{2}(\beta_{1}) & \mathbf{U}_{2}(\beta_{2}) & \dots & \mathbf{U}_{2}(\beta_{t-1}) \\
			\vdots & \vdots & \vdots & \vdots \\
			\mathbf{U}_{m}(\beta_{1}) & \mathbf{U}_{m}(\beta_{2}) & \dots & \mathbf{U}_{m}(\beta_{t-1})
			\end{bmatrix},  \nonumber \\
			\mathbf{\tilde{R}}
			&\defeq 
			\begin{bmatrix}
			\mathbf{R}_{1}(\beta_{1}) & \mathbf{R}_{1}(\beta_{2}) & \dots & \mathbf{R}_{1}(\beta_{t-1}) \\
			\mathbf{R}_{2}(\beta_{1}) & \mathbf{R}_{2}(\beta_{2}) & \dots & \mathbf{R}_{2}(\beta_{t-1}) \\
			\vdots & \vdots & \vdots & \vdots \\
			\mathbf{R}_{m}(\beta_{1}) & \mathbf{R}_{m}(\beta_{2}) & \dots & \mathbf{R}_{m}(\beta_{t-1}) 
			\end{bmatrix}. 	\nonumber
			\end{align}
			
			For any $\mathbf{T}, \mathbf{U}, \in \mathbb{F}^{mp \times (t-1)q}$ we have
			\begin{align}
			\Pr(\mathbf{\tilde{U}} = \mathbf{U} | \mathbf{\tilde{T}} = \mathbf{T}) 
			&\overset{(a)}= \frac{\Pr(\mathbf{\tilde{T}} = \mathbf{T} | \mathbf{\tilde{U}} = \mathbf{U})\Pr(\mathbf{\tilde{U}} = \mathbf{U})}{\sum_{\mathbf{U}_j \in \mathbb{F}^{mp \times (t-1)q}} \Pr(\mathbf{\tilde{T}} = \mathbf{T} | \mathbf{\tilde{U}} = \mathbf{U}_j) \Pr(\mathbf{\tilde{U}} = \mathbf{U}_j)}    \nonumber \\
			&= \frac{\Pr(\mathbf{\tilde{R}} = \mathbf{T} -  \mathbf{U} | \mathbf{\tilde{U}} = \mathbf{U})\Pr(\mathbf{\tilde{U}} = \mathbf{U})}{\sum_{\mathbf{U}_j \in \mathbb{F}^{mp \times (t-1)q}} \Pr(\mathbf{\tilde{R}} = \mathbf{T}-\mathbf{U}_j | \mathbf{\tilde{U}} = \mathbf{U}_j)\Pr(\mathbf{\tilde{U}} = \mathbf{U}_j)}  \nonumber\\
			& \overset{(b)}= \frac{\Pr(\mathbf{\tilde{R}} = \mathbf{T} -  \mathbf{U}) \Pr(\mathbf{\tilde{U}} = \mathbf{U})}{\sum_{\mathbf{U}_j \in \mathbb{F}^{mp \times (t-1)q}} \Pr(\mathbf{\tilde{R}} = \mathbf{T} -  \mathbf{U}) \Pr(\mathbf{\tilde{U}} = \mathbf{U}_j)}   \nonumber \\
			& = \frac{\Pr(\mathbf{\tilde{U}} = \mathbf{U})}{\sum_{\mathbf{U}_j \in \mathbb{F}^{mp \times (t-1)q}} \Pr(\mathbf{\tilde{U}} = \mathbf{U}_j)} 
			\nonumber \\
			&= \Pr(\mathbf{\tilde{U}} = \mathbf{U}), \nonumber
			\end{align}
			where (a) follows form Bayesian Rule, (b) follows from the fact that according to Corollary \ref{natije asli randomness}, each row of matrix $\mathbf{\tilde{R}}$ has a uniform distribution over $\mathbb{F}^ {p \times (t-1)q}$, thus $\mathbf{\tilde{R}}$ has a uniform distribution over $\mathbb{F}^ {p \times (t-1)q}$, which implies that $\Pr(\mathbf{\tilde{R}} = \mathbf{T} -  \mathbf{U}) = \Pr(\mathbf{\tilde{R}} = \mathbf{T} -  \mathbf{U}_j) = \frac{1}{|\mathbb{F}|^{mp(t-1)q}}$. 
			Thus, we have $H(\mathbf{\tilde{U}}|\mathbf{\tilde{T}})= H(\mathbf{\tilde{U}})$. Therefore $I(\mathbf{\tilde{U}};\mathbf{\tilde{T}}) = 0$.
			
			On the other hand, from the definition of $\mathcal{A}$, one can see that $\mathcal{A} \rightarrow \mathbf{\tilde{U}} \rightarrow \mathbf{\tilde{T}}$ is a Markov chain. Thus, according to data processing inequality, we have
			\begin{align*}
			I(\mathcal{A};\mathbf{\tilde{T}}) \leq I(\mathbf{\tilde{U}};\mathbf{\tilde{T}}) = 0.
			\end{align*}
			Therefore, if the adversaries know the elements of $\mathbf{\tilde{T}}$, they are not able to gain any information about the elements of $\mathcal{A}$.
	\end{proof}
	
	\begin{corollary} \label{khafan}
	 	Assume that 
	 	\begin{align*}
	 	\mathbf{Y} \defeq \begin{bmatrix}
	 	\mathbf{Y}_1, \mathbf{Y}_2, \dots, \mathbf{Y}_k
	 	\end{bmatrix},
	 	\end{align*}
	 	where $\mathbf{Y}_1, \mathbf{Y}_2, \dots, \mathbf{Y}_k \in \mathbb{F}^{m \times \frac{m}{k}}$. Assume that $\mathbf{Y}$ is shared using polynomial function
	 	\begin{align*}
	 	\mathbf{F}_{\mathbf{Y},b,t,k}(x) = \sum_{j=1}^{k}\mathbf{Y}_nx^{b(j-1)} + \sum_{j=1}^{t-1} \mathbf{R}_jx^{k^2+j-1},
	 	\end{align*}
	 	where $\mathbf{R}_1, \mathbf{R}_1, \dots, \mathbf{R}_{t-1}$ are chosen independently and uniformly at random in  $\mathbb{F}^{m \times \frac{m}{k}}$, with $N$ workers, i.e., $\mathbf{F}_{\mathbf{Y},b,t,k}(\alpha_n)$ is delivered to worker $n$, $n \in [N]$. For any subset $\mc{S} \subset [N], |\mc{S}| = t-1$, we have
	 	\begin{align*}
	 	H(\mathbf{Y}| \mathbf{F}_{\mathbf{Y},b,t,k}(\alpha_i), i \in \mc{S}) &= H(\mathbf{Y}), \\
	 	H(\mathbf{F}_{\mathbf{Y},b,t,k}(\alpha_i), i \in \mc{S}|\mathbf{Y})  &= H(\mathbf{F}_{\mathbf{Y},b,t,k}(\alpha_i), i \in \mc{S}).
	 	\end{align*}
	 \end{corollary}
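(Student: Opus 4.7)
The plan is to recognize Corollary \ref{khafan} as essentially a direct specialization of Lemma \ref{lemme security} to a single polynomial, followed by standard manipulation of mutual information. First I would set up the correspondence: take $m = 1$, $n = k^2$, and $\mathbf{U}_1(x) \defeq \sum_{j=1}^{k} \mathbf{Y}_j x^{b(j-1)}$, with $\mathbf{R}_1(x) \defeq \sum_{j=1}^{t-1} \mathbf{R}_j x^{j-1}$, so that the polynomial used in the sharing can be written as $\mathbf{T}_1(x) = \mathbf{U}_1(x) + x^{k^2} \mathbf{R}_1(x) = \mathbf{F}_{\mathbf{Y},b,t,k}(x)$. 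I would then check that the hypotheses of Lemma \ref{lemme security} are met: $\mathbf{U}_1(x)$ has degree at most $b(k-1)$, and since $b \in [k]$ we get $b(k-1) \leq k(k-1) < k^2 = n$, so the degree constraint holds; also $\mathbf{R}_1(x)$ has degree at most $t-2$ with $t-1$ coefficients drawn independently and uniformly from $\mathbb{F}^{m \times \frac{m}{k}}$, matching the hypothesis exactly.

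Next I would identify $\mathcal{A}$ (the ordered tuple of coefficients of $\mathbf{U}_1$) with $(\mathbf{Y}_1, \mathbf{Y}_2, \dots, \mathbf{Y}_k)$, which is in one-to-one correspondence with $\mathbf{Y}$, and choose the evaluation points $\{\beta_1, \dots, \beta_{t-1}\}$ to be $\{\alpha_i : i \in \mc{S}\}$, which is well-defined because $|\mc{S}| = t-1$. Then $\mathbf{\tilde{T}}$ is exactly the tuple $(\mathbf{F}_{\mathbf{Y},b,t,k}(\alpha_i))_{i \in \mc{S}}$. Lemma \ref{lemme security} yields
\begin{align*}
I\bigl(\mathbf{Y};\, \mathbf{F}_{\mathbf{Y},b,t,k}(\alpha_i),\, i \in \mc{S}\bigr) = I(\mathcal{A};\, \mathbf{\tilde{T}}) = 0.
\end{align*}

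Finally I would translate the vanishing mutual information into the two claimed entropy equalities via the identity $I(X;Z) = H(X) - H(X|Z) = H(Z) - H(Z|X)$: the first form gives $H(\mathbf{Y} \mid \mathbf{F}_{\mathbf{Y},b,t,k}(\alpha_i), i \in \mc{S}) = H(\mathbf{Y})$, and the second form gives $H(\mathbf{F}_{\mathbf{Y},b,t,k}(\alpha_i), i \in \mc{S} \mid \mathbf{Y}) = H(\mathbf{F}_{\mathbf{Y},b,t,k}(\alpha_i), i \in \mc{S})$. There is really no substantive obstacle here; the only thing to be careful about is the routine verification that the degree of the data part of $\mathbf{F}_{\mathbf{Y},b,t,k}$ stays strictly below $k^2$ for every admissible $b \in [k]$, which is what makes the randomization term $x^{k^2}\mathbf{R}_1(x)$ additive (non-overlapping in exponents) with the data term and thus allows Lemma \ref{lemme security} to apply cleanly.
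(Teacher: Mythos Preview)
Your proposal is correct and follows exactly the approach the paper intends: the paper's proof of Corollary~\ref{khafan} is the single sentence ``The proof follows directly from Lemma~\ref{lemme security},'' and you have simply (and accurately) spelled out the instantiation---one polynomial, $n=k^2$, data part $\mathbf{U}_1$ of degree $b(k-1)\le k^2-1$, random part $\mathbf{R}_1$ of degree $t-2$, evaluation points $\{\alpha_i:i\in\mc{S}\}$---together with the standard translation of $I(\mathbf{Y};\,\mathbf{\tilde{T}})=0$ into the two entropy equalities.
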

	\begin{proof}
		The proof follows directly from Lemma \ref{lemme security}.
	\end{proof}
	
	 Intuitively, we can explain the privacy constraints as follows. If we consider any subset of $t-1$ workers, each share that they receive from the sources or any other workers includes contribution $t-1$ random matrices, excluding the original data itself. Thus, if we ignore the original data, the number of equations and the number of variables are the same. However, because of original data, the number of equations is always less than the number of the variables, no matter how data is involved in these equations. Thus, the adversaries cannot gain any information about the private inputs.
	 
	 Now we formally prove the privacy constraints \eqref{privacy for workers} and \eqref{privacy for master}, for Algorithm \eqref{LSMPC}. Assume that the semi-honest workers are $i_1, i_2, \dots, i_{t-1} \in [N]$. In order to prove constraint \eqref{privacy for workers}, for Algorithm \ref{LSMPC}, we must show that for any $\mathbf{X}^{[1]}, \mathbf{X}^{[2]} \dots, \mathbf{X}^{[\Gamma]} \in \mathbb{F}^{m \times m}$, and polynomial function $\mathbf{G}: (\mathbb{F}^{m \times m})^\Gamma \rightarrow \mathbb{F}^{m \times m}$ we have 
	 \begin{align*}
		H(\mathbf{X}^{[j]}, j\in [\Gamma]| \bigcup_{n \in \mc{S}} \{\mc{M}_{n^{\prime} \to n}, n^{\prime} \in [N]\}, \mathbf{\tilde{X}}_{\gamma n}, \gamma \in [\Gamma], n \in \mc{S})  = H(\mathbf{X}^{[j]}, j\in [\Gamma]), 
	 \end{align*}
	 where $\mc{S} = \{i_1, i_2, \dots, i_{t-1}\}$. Assume that the calculations is done through  $R$ rounds. Let us define $\mathcal{M}^{(r)}_{n^{\prime} \to n}$ as the messages sent from  worker $n^{\prime}$ to worker $n$, in round $r \in [R]$. Thus
	 \begin{align*}
	 	\mc{M}_{n^{\prime} \to n} = \bigcup_{r=1}^R \mc{M}^{r}_{n^{\prime}\to n}.
	 \end{align*}
	
	Also define $\mc{R}^{(r)}_{n^{\prime} \to n}$ as the set of all random matrices that worker $n^{\prime}$ uses for sending a message to worker $n$, in round $r$. Let us define
	\begin{align*}
		\mc{M}^{(r)}_\mathcal{S} &\defeq \bigcup_{n^{\prime} \in [\Gamma], n \in \mathcal{S}} 
		\mc{M}^{(r)}_{n^{\prime}\to n},  \\
		\mc{R}^{(r)}_\mathcal{S} &\defeq \bigcup_{n^{\prime} \in [\Gamma], n \in \mathcal{S}}  \mc{R}^{(r)}_{n^{\prime} \to n},  \\ 
		\mc{X} &\defeq \{\mathbf{X}^{[1]}, \mathbf{X}^{[2]} \dots, \mathbf{X}^{[\Gamma]}\},  \\
		\mc{\tilde{X}}_\mathcal{S} &\defeq \{\mathbf{\tilde{X}}_{\gamma n}, \gamma \in [\Gamma], n \in \mc{S}\}.
	\end{align*}
	From the definition, we have 
	\begin{align*}
		& H(\mathbf{X}^{[j]}, j\in [\Gamma]| \bigcup_{n \in \mc{S}} \{\mc{M}_{n^{\prime} \to n}, n^{\prime} \in [N]\}, \mathbf{\tilde{X}}_{\gamma n}, \gamma \in [\Gamma], n \in \mc{S})  = \\
		& H(\mc{X}| \mc{\tilde{X}}_\mathcal{S}, \mc{M}^{(r)}_\mathcal{S}, r \in [R]).
	\end{align*}
	Therefore, to prove the privacy constraint \eqref{privacy for workers}, it is sufficient to show that
	\begin{align*}
		I(\mc{\tilde{X}}_\mathcal{S}, \mc{M}^{(1)}_\mathcal{S}, \mc{M}^{(2)}_\mathcal{S}, \dots, \mc{M}^{(R)}_\mathcal{S}; \mc{X}) = 0.
	\end{align*} 
	
	One can see that 
	\begin{align} \label{s1}
		H(\mc{\tilde{X}}_\mathcal{S}, \mc{M}^{(1)}_\mathcal{S}, \mc{M}^{(2)}_\mathcal{S}, \dots, \mc{M}^{(R)}_\mathcal{S}| \mc{X}) &= H(\mc{\tilde{X}}_\mathcal{S}|\mc{X}) + H(\mc{M}^{(1)}_\mathcal{S}| \mc{\tilde{X}}_\mathcal{S}, \mc{X})  \\ 
		&+ H(\mc{M}^{(2)}_\mathcal{S}|\mc{M}^{(1)}_\mathcal{S}, \mc{\tilde{X}}_\mathcal{S}, \mc{X}) \nonumber \\ 
		&\vdots \nonumber \\
		&+ H(\mc{M}^{(R)}_\mathcal{S}|\mc{M}^{(R-1)}_\mathcal{S}, \dots, \mc{M}^{(1)}_\mathcal{S}, \mc{\tilde{X}}_\mathcal{S}, \mc{X}). \nonumber
	\end{align}
	
	According to  Corollary \ref{khafan}, we have $H(\mc{\tilde{X}}_\mathcal{S}|\mc{X}) =  H(\mc{\tilde{X}}_\mathcal{S})$. In addition, since in each round $r \in [R]$, $\mc{M}^{(r)}_\mathcal{S}$ is a function of $\mc{\tilde{X}}_\mathcal{S}$, $\mc{M}^{(1)}_\mathcal{S}$, $\mc{M}^{(2)}_\mathcal{S}$, $\mc{M}^{(r-1)}_\mathcal{S}$, and $\mc{R}^{(r)}_\mathcal{S}$, then
	\begin{align}\label{s2}
		H(\mc{M}^{(r)}_\mathcal{S}|\mc{M}^{(r-1)}_\mathcal{S}, \dots, \mc{M}^{(1)}_\mathcal{S}, \mc{\tilde{X}}_\mathcal{S}, \mc{X}) & = H(\mc{R}^{(r)}_{\mc{S}}|\mc{M}^{(r-1)}_\mathcal{S}, \dots, \mc{M}^{(1)}_\mathcal{S}, \mc{\tilde{X}}_\mathcal{S}, \mc{X}) \\
		&= H(\mc{R}^{(r)}_{\mc{S}}) \overset{(a)}\geq H(\mc{M}^{(r)}_{\mc{S}}), \nonumber
	\end{align}
	where (a) follows from the fact that $H(\mc{R}^{(r)}_{\mc{S}})$ and $H(\mc{M}^{(r)}_{\mc{S}})$ have the same size and $\mc{R}^{(r)}_{\mc{S}}$ has a uniform distribution. According to \eqref{s1} and \eqref{s2} we have
	\begin{align*}
		H(\mc{\tilde{X}}_\mathcal{S}, \mc{M}^{(1)}_\mathcal{S}, \mc{M}^{(2)}_\mathcal{S}, \dots, \mc{M}^{(R)}_\mathcal{S}| \mc{X}) &\geq H(\mc{\tilde{X}}_\mathcal{S}) + H(\mc{M}^{(1)}_\mathcal{S}) + H(\mc{M}^{(2)}_\mathcal{S})+ \dots + H(\mc{M}^{(R)}_\mathcal{S})  \\
		&\geq H(\mc{\tilde{X}}_\mathcal{S}, \mc{M}^{(1)}_\mathcal{S}, \mc{M}^{(2)}_\mathcal{S}, \dots, \mc{M}^{(R)}_\mathcal{S}).
	\end{align*}
	Therefore, 
	\begin{align*}
		I(\mc{\tilde{X}}_\mathcal{S}, \mc{M}^{(1)}_\mathcal{S}, \mc{M}^{(2)}_\mathcal{S}, \dots, \mc{M}^{(R)}_\mathcal{S}; \mc{X}) \leq 0.
	\end{align*}
	Thus
	\begin{align*}
		I(\mc{\tilde{X}}_\mathcal{S}, \mc{M}^{(1)}_\mathcal{S}, \mc{M}^{(2)}_\mathcal{S}, \dots, \mc{M}^{(R)}_\mathcal{S}; \mc{X}) = 0,
	\end{align*}
	 which proves the privacy constraint at the workers \eqref{privacy for workers}.
	 
	Now we prove  constraint (\ref{privacy for master}). Assume that 
	\begin{align*}
		\mathbf{Y} = \begin{bmatrix}
			{\mathbf{Y}_{1}, \mathbf{Y}_{2}, \dots, \mathbf{Y}_{k}}
		\end{bmatrix},
	\end{align*}
	where $\mathbf{Y}_{1}, \mathbf{Y}_{2}, \dots, \mathbf{Y}_{k} \in \mathbb{F}^{m \times \frac{m}{k}}$.
	Since the result at the master is in the form of \emph{polynomial sharing}, according to \eqref{tarife raveshe sharing} there exist a polynomial function $\mathbf{F}_{\mathbf{Y},b,t,k}(x)$, where $\mathbf{F}_{\mathbf{Y},b,t,k}(\alpha_{i}) = \mathbf{O}_i, i \in [N]$. More precisely, according to \eqref{tarife raveshe sharing}, there are $\mathbf{R}_1, \mathbf{R}_2, \dots, \mathbf{R}_{t-1}$ with independent and uniform distribution in  $\mathbb{F}^{m \times \frac{m}{k}}$, where 
	\begin{align}
		\mathbf{F}_{\mathbf{Y},b,t,k}(x) &= \sum_{j=1}^{k}\mathbf{Y}_{j}x^{b(j-1)} + \sum_{j=1}^{t-1} \mathbf{R}_jx^{k^2+j-1}, \nonumber \\
		\mathbf{F}_{\mathbf{Y},b,t,k}(\alpha_{n}) &= \mathbf{O}_n. \label{setare}
	\end{align}
	One can see that according to Theorem \ref{qazie sharing}, we have 
	\begin{align}
		H(\mathbf{Y}|\mathbf{O}_1, \mathbf{O}_2, \dots, \mathbf{O}_N) &= 0, \label{a1} \\
		H(\mathbf{Y}_{1}, \mathbf{Y}_{2}, \dots, \mathbf{Y}_{k}, \mathbf{R}_1, \mathbf{R}_2, \dots, \mathbf{R}_{t-1}| \mathbf{O}_1, \mathbf{O}_2, \dots, \mathbf{O}_N)&=0, \label{a3}\\
		H(\mathbf{O}_1, \mathbf{O}_2, \dots, \mathbf{O}_N|\mathbf{Y}_{1}, \mathbf{Y}_{2}, \dots, \mathbf{Y}_{k}, \mathbf{R}_1, \mathbf{R}_2, \dots, \mathbf{R}_{t-1}) &= 0 \label{a2}. 
	\end{align}
	 Thus, we have
	\begin{align*}
		H(\mc{X}| \mathbf{Y}, \mathbf{O}_1, \mathbf{O}_2, \dots, \mathbf{O}_N) 
		 &\overset{(a)} = H(\mc{X}| \mathbf{O}_1, \mathbf{O}_2, \dots, \mathbf{O}_N) \\ 
		 & \overset{(b)}= H(\mc{X}| \mathbf{O}_1, \mathbf{O}_2, \dots, \mathbf{O}_N, \mathbf{Y}_{1}, \mathbf{Y}_{2}, \dots, \mathbf{Y}_{k}, \mathbf{R}_1, \mathbf{R}_2, \dots, \mathbf{R}_{t-1}) \\
		 & \overset{(c)} = H(\mc{X}| \mathbf{Y}_{1}, \mathbf{Y}_{2}, \dots, \mathbf{Y}_{k}, \mathbf{R}_1, \mathbf{R}_2, \dots, \mathbf{R}_{t-1}) \\
		 & \overset{(d)} = H(\mc{X}| \mathbf{Y}_{1}, \mathbf{Y}_{2}, \dots, \mathbf{Y}_{k}) \\
		 &‌= H(\mc{X}|\mathbf{Y})
	\end{align*}
	where (a) follows from \eqref{a1}, (b) and (c) follow from \eqref{setare}, \eqref{a3}, and \eqref{a2}, and (d) follows from the fact that $\mathbf{R}_1, \mathbf{R}_2, \dots, \mathbf{R}_{t-1}$ have independent and uniform distribution in  $\mathbb{F}^{m \times \frac{m}{k}}$ and independent from $\mc{X}$.

	\clearpage
	
	\section{Arithmetic Circuits} \label{Arithmetic Circuits}
	In this part, we describe some rules to create the arithmetic circuit corresponding to a function $\\ \mathbf{G}(\mathbf{X}^{[1]}, \mathbf{X}^{[2]} \dots, \mathbf{X}^{[\Gamma]})$ and a specific order of computation. We know that each polynomial function can be written as a sum of production terms
		\begin{align}\label{tabe G}
		\mathbf{G}(\mathbf{X}^{[1]}, \mathbf{X}^{[2]} \dots, \mathbf{X}^{[\Gamma]}) = \sum_{j=1}^{M} \mathbf{G}_j(\mathbf{X}^{[1]}, \mathbf{X}^{[2]} \dots, \mathbf{X}^{[\Gamma]}),
		\end{align}
		where $M$ is the number of the monomial terms, and $\mathbf{G}_j$'s are monomial functions, for $j \in [M]$.
		\begin{example}
			\label{notation}
			Assume that the desired function $\mathbf{G}$ is
			\begin{align}
			\mathbf{G}(\mathbf{X}^{[1]}, \mathbf{X}^{[2]}, \mathbf{X}^{[3]}, \mathbf{X}^{[4]}) = (\mathbf{X}^{[2]})^{T}(\mathbf{X}^{[1]})^{2}\mathbf{X}^{[3]} + \mathbf{X}^{[2]}\mathbf{X}^{[4]}(\mathbf{X}^{[3]})^{T}. \nonumber
			\end{align}
			
			According to the notations we have
			\begin{align}
			\mathbf{G}_1(\mathbf{X}^{[1]}, \mathbf{X}^{[2]}, \mathbf{X}^{[3]}, \mathbf{X}^{[4]}) = (\mathbf{X}^{[2]})^{T}(\mathbf{X}^{[1]})^{2}\mathbf{X}^{[3]}, \nonumber \\
			\mathbf{G}_2(\mathbf{X}^{[1]}, \mathbf{X}^{[2]}, \mathbf{X}^{[3]}, \mathbf{X}^{[4]}) = \mathbf{X}^{[2]}\mathbf{X}^{[4]}(\mathbf{X}^{[3]})^{T}. \nonumber
			\end{align}
		\end{example}
		There are some rules to represent $\Gb$ by multiplication and addition gates. The rules are described in the following. For simplification we show a multiplication gate by an AND gate, and an addition gate by an OR gate.
		\begin{enumerate}
			\item \emph{Rule 1:} \label{zarbi}
			
			Assume that the function is in the form of 	$\prod_{j=1}^{\Gamma} \mathbf{X}^{[j]}$. In order to represent this function, first we multiply the last two matrices ($\mathbf{X}^{[\Gamma -1]}, \mathbf{X}^{[\Gamma]}$), then we multiply $\mathbf{X}^{[\Gamma-2]}$ to the result of the previous operation and so on.  The order of computation is shown in Fig. \ref{multiplicationGate}. 
			
			\begin{figure}[htbp]
				\centering
				\includegraphics[width=10cm]{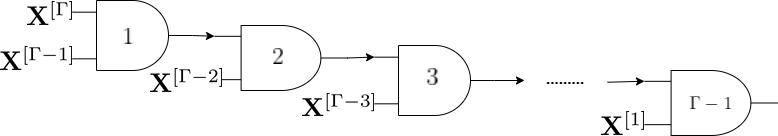}
				\caption{The circuit representing the order of operations in calculating the function $\mathbf{G} = \prod_{j=1}^{\Gamma} \mathbf{X}^{[j]}$.}
				\label{multiplicationGate}
			\end{figure}
			
			\item \emph{Rule 2:} \label{jam}
			
			Assume that the function is in the form of $\sum_{j=1}^{\Gamma} \mathbf{X}^{[j]}$. In order to represent this function, first we add the  last two matrices ($\mathbf{X}^{[\Gamma -1]} + \mathbf{X}^{[\Gamma]}$), then we add  $\mathbf{X}^{[\Gamma-2]}$ to the result of the previous operation, and so on. The order of computation is shown in Fig. \ref{additionGate}. 
			\begin{figure}[!htbp]
				\centering
				\includegraphics[width=10cm]{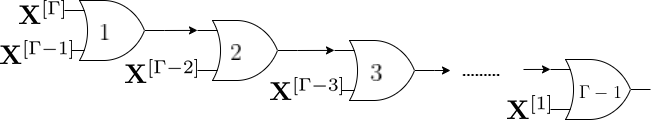}
				\caption{The circuit representing the order of operations in calculating the function  $\mathbf{G} = \sum_{j=1}^{\Gamma} \mathbf{X}^{[j]}$}
				\label{additionGate}
			\end{figure}
			\item \emph{Rule 3:} \label{koli}
			To represent a general function \eqref{tabe G}, and assign a specific order to the computations, we first compute $\mathbf{G}_M$ based on Rule \ref{zarbi}, keep the result, and compute $\mathbf{G}_{M-1}$ based on Rule \ref{zarbi}, and add up the result based on rule \ref{jam}, and then compute $\mathbf{G}_{M-2}$ based on Rule \ref{zarbi}, and so on. The representation and order of computation are shown for an example in Fig.\ref{circuitShape}.
			\begin{figure}[!htbp]
				\centering
				\includegraphics[width=10cm]{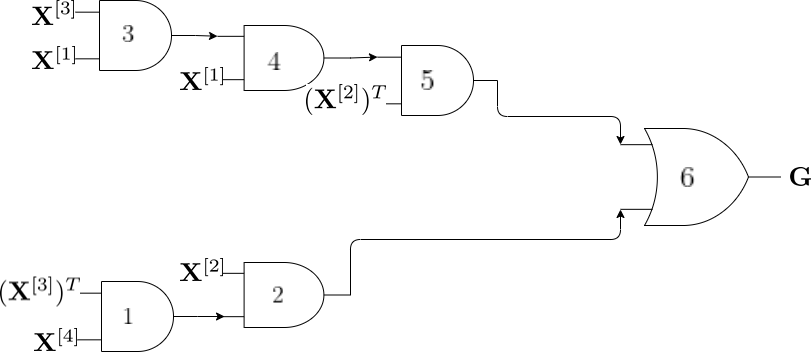}
				\caption{The circuit representing the order of computation for function $\mathbf{G}(\mathbf{X}^{[1]}, \mathbf{X}^{[2]}, \mathbf{X}^{[3]}, \mathbf{X}^{[4]}) = (\mathbf{X}^{[2]})^{T}(\mathbf{X}^{[1]})^{2}\mathbf{X}^{[3]} + \mathbf{X}^{[2]}\mathbf{X}^{[4]}(\mathbf{X}^{[3]})^{T}$.}
				\label{circuitShape}
			\end{figure}
	\end{enumerate}

\end{appendices}
\end{document}